\newcommand{\citep}{\cite}
\newcommand{\citet}{\cite}
\newcommand{\authcomment}[3]{}  % switch to this to suppress all comments
\newtheorem{fact}{Fact}
\newenvironment{proofsketch}{%
  \proof}{\endproof}
\DeclareMathOperator*{\Ex}{\mathbb{E}}
\newcommand{\RR}{\mathbb{R}}
\newcommand{\inprod}[2]{\left\langle #1 , #2 \right\rangle}  % inner product
\renewcommand{\vec}[1]{\mathbf{#1}}  % vector
\newcommand{\E}{\mathcal{E}}  % event outcome space
\newcommand{\A}{\mathcal{A}}  % Alice's signal space
\newcommand{\B}{\mathcal{B}}  % Bob's signal space
\renewcommand{\S}{\mathcal{S}}  % Space of signals sent by Alice
\newcommand{\post}[1]{\vec{p}_{#1}}  % posterior on E conditioned on 1, or prior if no args
\DeclareMathOperator{\poly}{poly}
\DeclareMathOperator*{\argmax}{arg\,max}
\begin{document}
\title{Computing Equilibria of Prediction Markets via Persuasion}
%\titlerunning{Computing Equilibria}

\author{Jerry Anunrojwong\inst{1} \and
  Yiling Chen\inst{2} \and
  Bo Waggoner\inst{3} \and
  Haifeng Xu\inst{4} }
\authorrunning{Anunrojwong, Chen, Waggoner and Xu}

% Not sure where to include these -- Bo
% \email{jerryanunroj@gmail.com}
% \email{yiling@seas.harvard.edu}
% \email{bwag@colorado.edu}
% \email{hx4ad@virginia.edu}

\institute{Columbia Business School \and
  Harvard University \and
  University of Colorado Boulder \and
  University of Virginia}

\maketitle

\begin{abstract}
  We study the computation of equilibria in prediction markets in perhaps the most fundamental special case with two players and three trading opportunities.
  To do so, we show equivalence of prediction market equilibria with those of a simpler signaling game with commitment introduced by Kong and Schoenebeck (2018).
  We then extend their results by giving computationally efficient algorithms for additional parameter regimes.
  Our approach leverages a new connection between prediction markets and Bayesian persuasion, which also reveals interesting conceptual insights. 
\end{abstract}

\keywords{prediction markets \and equilibrium computation \and Bayesian persuasion}

\section{Introduction} \label{sec:intro}

Prediction markets allow participants to buy and sell financial contracts whose payoff is contingent on the outcome of a future event.
The market aggregates these decisions, which reveal beliefs about the event, into a collective prediction.
Researchers study their game-theoretic properties to understand how these markets function in practice as well as how to better design them to encourage information elicitation and aggregation.

%Researchers study both the design of these markets in order to successfully elicit and aggregate information and the game-theoretic properties of such markets.

The widely-studied \emph{scoring-rule based markets (SRM)}~\citep{hanson2003combinatorial} utilize \emph{proper scoring rules} $R(\vec{p},e)$, which assign a score to each prediction $\vec{p}$ on any given outcome $e$ of the event.
Each participant $t=1,\dots,T$ arrives and updates the market prediction from $\vec{p}^{t-1}$ to $\vec{p}^t$, and receives a payoff of her improvement in score, $R(\vec{p}^t,e) - R(\vec{p}^{t-1},e)$, after the event outcome $e$ is revealed.

Despite the apparent simplicity of this game, its equilibria have been challenging to describe.
We have two primary motivations for doing so.
First, prediction markets are popular in practice, and understanding the properties of their equilibria may be helpful in determining how to design such markets.
Second, the SRM is a very simple but apparently deep extensive-form signaling game.
Understanding it may lead to general insights regarding value of information and connections to other signaling settings.
Therefore, this paper seeks algorithms and characterizations that further our understanding of these games.

%\paragraph{The Alice-Bob-Alice game and prior work.}
\vspace{3mm}
\noindent {\bf The Alice-Bob-Alice (ABA) game and prior work.}
Historically, equilibria of markets have proven difficult to describe even in the special but perhaps the most fundamental ``Alice-Bob-Alice'' (ABA) case.
Here there are only two players and three trading opportunities.
Alice observes a private signal from a set $\A$ while Bob receives a private signal from a set $\B$.
They can be correlated with each other and with the (random) event being predicted, which has outcomes drawn from a set $\E$.
Alice, participating at $t=1$, can choose to predict truthfully, withhold information, or even bluff and make a knowingly false prediction.
This might mislead Bob into a poor prediction at $t=2$, leaving Alice the opportunity to improve the market score significantly at $t=3$.

A sequence of works \citep{chen2007bluffing,dimitrov2008non,chen2010gaming,gao2013jointly} focused on the popular log scoring rule and found conditions under which Alice fully reveals all information in stage $1$ as well as cases where she reveals no information.
Chen and Waggoner \citet{chen2016informational} generalized these results to a characterization of pairs (players' signals, scoring rule) under which the first player is always truthful (termed \emph{informational substitutes}) or withholds all information (\emph{informational complements}).
All of the results mentioned so far extend to general prediction markets with any number of players, yet solving the Alice-Bob-Alice case was often the key step.

However, one major open problem left in \citet{chen2016informational} is the computational tractability of  determining whether players' signals satisfy the substitutes condition, complements condition, or neither.
The aforementioned papers also leave open what happens in the ``neither'' case, i.e. when Alice uses some nontrivial strategy in the first stage. To our knowledge, 
Kong and Schoenebeck \citet{kong2018optimizing} are the first to address these questions.
It introduced a signaling game, the \emph{Alice-Bob-Alice game with commitment}, that simplifies some aspects of prediction markets from an analysis perspective.
Payoffs are defined as in the Alice-Bob-Alice SRM above.
But instead of directly making a prediction in round $1$, Alice reports according to some signaling scheme conditioned on her private information. 
Bob observes Alice's signal and Alice is assigned $\vec{p}^1 = $ the posterior event distribution conditioned on this signal.
Crucially, Alice must commit to this signaling scheme and it is known to Bob in advance, so she cannot bluff or mislead him by deviating to another signal or prediction.
For this game, \citet{kong2018optimizing} gave a fully polynomial-time approximation scheme (FPTAS) for computing an optimal signaling scheme of Alice when the number of possible realizations of Alice's private information, $|\A|$, is constant, and the scoring rule satisfies a rather strong separability and smoothness condition.

%\paragraph{Our results.}
\vspace{3mm}
\noindent{\bf Our Results.} Our first result establishes a formal connection between ABA game with and without commitment. We prove that Alice's optimal commitment in the ABA game is also (up to negligible $\epsilon$) part of an equilibrium in the corresponding prediction market (without commitment). 
This shows, perhaps surprisingly, that any equilibrium that can be achieved when Alice is forced to commit to a signaling scheme can also be achieved in a market without commitment or explicit signaling.
In other words, finding equilibria in prediction markets reduces to a pure signaling problem. %or ``persuasion'' kind of problem.

%Given this result, it suffices to design algorithms for the ABA game with commitment.

Given this characterization, we then focus our attention on designing algorithms for the ABA game with commitment. Here, we extend the results of \citet{kong2018optimizing} to several other cases, although we do not solve the Alice-Bob-Alice game in full generality.  Our results are built upon an interesting connection between Alice's signaling problem and \emph{Bayesian persuasion} \citep{Kamenica2011,kolotilin_persuasion_private} --- in some sense, Alice's signaling scheme in round $1$ is ``persuading'' Bob to make certain reports.  We formalize this connection by proving that Alice's signaling problem reduces to Bayesian persuasion of a \emph{privately informed} receiver, but with a persuasion objective that is specific to prediction markets. %\citep{kamenica2011bayesian }. %rayoS10,kamenicaBPannual bergemannM17  kolotilin_persuasion_private % comment out citations because tex gives me ? in place of citations, don't know why -- Jerry
% Alice's payoff in the final round can be written as a function of Bob's report in round $2$, and her reporting scheme in round $1$ is in a sense persuading him to make certain reports. 
% However, a direct reduction to existing results, such as algorithmic Bayesian Persuasion \cite{dughmi2016algorithmic,dughmi2017algorithmicnoex}, is not possible because Alice also receives a score for her round-$1$ report. Therefore, we formulate and solve a modified persuasion problem that is specific to prediction markets.
As a direct application of this connection, we exhibit an efficient and exact algorithm for Alice's optimal signaling in the case $|\B| = O(1)$ but under the assumption that the expected scoring function is piece-wise linear with polynomially many  pieces. Though this restriction appears restrictive, we hope this result may serve as a stepping stone to future work. Next, we leverage techniques from algorithmic persuasion to design an FPTAS for the case   $|\A| = O(1)$ under a natural smoothness assumption on the scoring function. This results strictly generalizes --- and interestingly, also much simplifies --- the main result of Kong and and Schoenebeck \citet{kong2018optimizing}. Finally, to show the generality of our technique, we use a similar idea to design an FPTAS for the case that both $|\B|,|\E| = O(1)$.

\section{Preliminaries} \label{sec:prelim}

\subsection{Signals and probabilities}
A \emph{signal} is a random variable, denoted by a capital letter, taking values in an \emph{outcome space} written in calligraphics.
In particular, there are four signals of interest in this paper: $E$, $A$, $B$, and $S$.
The signal $E$ is a future event we would like to predict having a finite set of outcomes $\E$.
The goal of a prediction market is to elicit forecasts about $E$ in the form of probability distributions in $\Delta({\E})$, the probability simplex over $\E$.
For an outcome $e \in \E$, we write $\Pr[e]$ as shorthand for $\Pr[E=e]$, and so on for the other signals.

In this paper, there will always be two players, Alice and Bob.
Alice observes a signal $A$ with finite outcome space $\A$, while Bob observes $B$ in the finite space $\B$.
There is a prior distribution $\mu(e,a,b)$ on the joint realizations of $e \in \E$, $a \in \A$, and $b \in \B$. The prior distribution is common knowledge to Alice and Bob.
Alice will be choosing to send a signal $S$ in space $\S$.
A \emph{signaling scheme} is represented as a function $\pi: \S \times \A \to [0,1]$ where $\pi(s,a) = \Pr[S=s,A=a]$ such that $\pi$ satisfies $\sum_{s \in \S} \pi(s,a) = \sum_{e,b}\mu(e,a,b)$ for all $a \in \A$. 
%It will be convenient to have notation for marginal distributions on $E$ conditioned on certain realizations.
%Given a prior $\mu$, let $\post{} \in \Delta_{\E}$ be the prior distribution on $E$, i.e. $\post{}$ is a vector with $\post{}(e) = \Pr[e] = \sum_{a,b} \mu(e,a,b)$.
%Let $\post{a}$ be the posterior on $E$ conditioned on $A=a$, i.e. $\post{a}(e) = \Pr[e \mid a]$, and similarly for $\post{b},\post{a,b}$.
%Given a signaling scheme $\pi$, subscripts with $s$ are also well-defined, i.e. $\post{s}(e) = \Pr[e \mid s]$ and so on.
%$\pi(s,a) = \Pr[S=s \mid A=a]$. \jerry{Should we instead use the notation $\pi(s,a) = \Pr[S=s,A=a]$ and require that $\sum_{s \in \S} \pi(s,a) = \Pr[a]$ for all $a \in \A$? I think it will lead to a cleaner formulation.}

% We still need p in section 2 and 3 but we no longer need q
%It will be convenient to have notation for marginal distributions on $E$ and on $A$ conditioned on certain realizations.
%Given a prior $\mu$, let $\post{} \in \Delta_{\E}$ be the prior distribution on $E$, i.e. $\post{}$ is a vector with $\post{}(e) = \Pr[e] = \sum_{a,b} \mu(e,a,b)$.
%Let $\post{a}$ be the posterior on $E$ conditioned on $A=a$, i.e. $\post{a}(e) = \Pr[e \mid a]$, and similarly for $\post{b},\post{a,b}$.
%Given a signaling scheme $\pi$, subscripts with $s$ are also well-defined, i.e. $\post{s}(e) = \Pr[e \mid s]$ and so on.
%
%Similarly, let $\qost{} \in \Delta_{\A}$ be the prior distribution on $A$ with $\qost{b}$ the posterior given $B=b$ and so on.

\subsection{Prediction market model}

\paragraph{Proper scoring rules.}
A scoring rule is a function $R: \Delta({\E}) \times \E \to \RR \cup \{-\infty\}$ that assigns a score $R(\vec{w},e)$ to the prediction $\vec{w}$ when the event $E$ of our interest is realized to $e$. 
We write $R(\vec{w}';\vec{w}) = \Ex_{E\sim\vec{w}} R(\vec{w}', E)$ for the expected score of prediction $\vec{w}'$ when $E$ is drawn from $\vec{w}$.
It is \emph{strictly proper} if for all $\vec{w} \neq \vec{w}'$, $R(\vec{w}';\vec{w}) < R(\vec{w};\vec{w})$.
That is, for any belief $\vec{w}$, one uniquely maximizes expected score by reporting $\vec{w}$.
We rely on the following characterization.
\begin{proposition}[\citet{mccarthy1956measures,savage1971elicitation,gneiting2007strictly}]
  For every strictly proper scoring rule $R$, there exists a strictly convex function $G: \Delta({\E}) \to \mathbb{R}$ such that $R(\vec{w};\vec{w}) = G(\vec{w})$.
  Conversely, from every strictly convex $G$, one can construct a strictly proper scoring rule $R$ such that $G(\vec{w}) = R(\vec{w};\vec{w})$.
\end{proposition}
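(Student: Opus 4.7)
The plan is to prove the two directions separately, using convex analysis as the main tool. For both, I will exploit the fact that for fixed $\vec{w}'$, the map $\vec{w}\mapsto R(\vec{w}';\vec{w}) = \sum_{e} w_e R(\vec{w}',e)$ is affine in $\vec{w}$.

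For the forward direction, I would define $G(\vec{w}) := R(\vec{w};\vec{w})$ directly and argue strict convexity. First, strict propriety means $G(\vec{w}) = \max_{\vec{w}'\in\Delta(\E)} R(\vec{w}';\vec{w})$, with the unique maximizer $\vec{w}'=\vec{w}$. Since each $R(\vec{w}';\cdot)$ is affine in the second argument, $G$ is a pointwise supremum of affine functions, hence convex. To upgrade to strict convexity, I would pick $\vec{w}_1\neq\vec{w}_2$, set $\vec{w}_\lambda = \lambda\vec{w}_1 + (1-\lambda)\vec{w}_2$ for $\lambda\in(0,1)$, and expand $G(\vec{w}_\lambda) = \lambda R(\vec{w}_\lambda;\vec{w}_1) + (1-\lambda) R(\vec{w}_\lambda;\vec{w}_2)$ using linearity in the second argument. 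Strict propriety gives $R(\vec{w}_\lambda;\vec{w}_i) < R(\vec{w}_i;\vec{w}_i) = G(\vec{w}_i)$ for $i=1,2$, yielding $G(\vec{w}_\lambda) < \lambda G(\vec{w}_1) + (1-\lambda) G(\vec{w}_2)$.

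For the converse, given strictly convex $G$, I would use the Savage construction: pick any subgradient selection $\vec{g}(\vec{w})\in\partial G(\vec{w})$ and set
\[
  R(\vec{w},e) = G(\vec{w}) + \inprod{\vec{g}(\vec{w})}{\vec{\delta}_e - \vec{w}},
\]
where $\vec{\delta}_e$ is the vertex of $\Delta(\E)$ corresponding to outcome $e$. Taking expectation over $E\sim\vec{w}'$ and using $\Ex_{E\sim\vec{w}'}[\vec{\delta}_E] = \vec{w}'$ gives $R(\vec{w};\vec{w}') = G(\vec{w}) + \inprod{\vec{g}(\vec{w})}{\vec{w}' - \vec{w}}$. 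Plugging $\vec{w}'=\vec{w}$ recovers $R(\vec{w};\vec{w}) = G(\vec{w})$, and the strict subgradient inequality for a strictly convex $G$, namely $G(\vec{w}') > G(\vec{w}) + \inprod{\vec{g}(\vec{w})}{\vec{w}'-\vec{w}}$ for $\vec{w}'\neq\vec{w}$, then gives $R(\vec{w}';\vec{w}') > R(\vec{w};\vec{w}')$, i.e.\ strict propriety.

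The main technical obstacle is the existence of a subgradient of $G$ at every point of $\Delta(\E)$, including the relative boundary of the simplex. A convex function on a convex set need not admit subgradients at boundary points in general, so I would either restrict attention to the relative interior and extend $R$ by continuity (or by $-\infty$, matching the codomain $\RR\cup\{-\infty\}$ already allowed in the scoring rule definition), or invoke the standard fact that a proper convex function on a convex set admits subgradients throughout its relative interior, with careful treatment of boundary vertices where $G$ may take its extremal values. This boundary subtlety aside, everything else reduces to routine convex-analytic manipulations.
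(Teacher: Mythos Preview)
The paper does not prove this proposition; it is stated as a classical cited result (McCarthy, Savage, Gneiting--Raftery) and used as a black box. Your argument is exactly the standard proof, and it is correct: the forward direction via the pointwise-supremum-of-affines representation with the strict-propriety upgrade to strict convexity is clean, and the converse is precisely the Savage subgradient construction, which the paper itself invokes later (in the proof of Lemma~\ref{lem:payoffs-continuous} and in Appendix~\ref{ap:persuasion-general}) in the form $R(\vec{w}';\vec{w}) = G(\vec{w}') + \inprod{\nabla G(\vec{w}')}{\vec{w}-\vec{w}'}$. Your flagged boundary issue is real but is handled exactly as you suggest, by allowing $R$ to take the value $-\infty$ on the boundary, which the paper's definition of a scoring rule explicitly permits.
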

\begin{example}
  The log scoring rule is defined as $R(\vec{w},e) = \log w_e$, i.e. the logarithm of the probability assigned to $e$.
  Its ``expected score function'' is $G(\vec{w}) = \sum_e w_e \log w_e = -H(\vec{w})$, the negative of Shannon entropy.
  The quadratic scoring rule is $R(\vec{w},e) = 2w_e - \|\vec{w}\|_2^2$.
  Its expected score function is $G(\vec{w}) = \|\vec{w}\|_2^2$.
  Both are strictly proper.
\end{example}

%% Bo: may not need this next paragraph
%We note that $R(\vec{w}';\vec{w}) = G(\vec{w}') + \inprod{\nabla G(\vec{w'})}{\vec{w}' - \vec{w}}$.
%This has a geometric interpretation: Take the linear approximation to $G$ at the prediction $\vec{w}'$ and evaluate it at the true distribution $\vec{w}$.
%Because $G$ is convex, this will be less than $G(\vec{w})$, which is the expected score for reporting correctly.
%In particular, the difference in expected score for incorrectly reporting is $G(\vec{w}) - R(\vec{w}';\vec{w}) = D_G(\vec{w},\vec{w}')$, the Bregman divergence of $G$.

\paragraph{Automated prediction market.}
In this paper we focus on the popular \emph{automated scoring-rule market (SRM)} framework of \citet{hanson2003combinatorial}.
The market is parameterized by a finite set of event outcomes $\E$, a strictly proper scoring rule $R$, and an initial prediction $\vec{p}^{0} \in \Delta({\E})$.
The participants arrive in a fixed, predefined order.
Each round $t=1,\dots,T$, the arriving participant observes the previous prediction $\vec{p}^{t-1}$ and replaces it with a prediction $\vec{p}^{t}$.
At the end, the event outcome $E=e$ is observed and the arriving participant at time $t$ is paid
\begin{equation}
  R(\vec{p}^{t},e) - R(\vec{p}^{t-1},e) . \label{eqn:market-scoring-rule}
\end{equation}
One of the key properties this payoff rule inherits from $R$ is ``one-step'' truthfulness:
\begin{fact} \label{fact:one-step-truthful}
  If every player arrives only once, then it is a strictly dominant strategy to set $\vec{p}^t$ to the player's true posterior belief conditioned on all information they have observed.
\end{fact}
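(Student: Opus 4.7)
The plan is to reduce the player's decision to a single-round expected-score maximization and then invoke strict properness of $R$. I would fix an arbitrary player arriving at round $t$ and condition on every piece of information they possess at the moment of their move: their private signal together with the entire observed history, which in particular determines $\vec{p}^{t-1}$. Since the player arrives only this one time, their realized payoff is given by \eqref{eqn:market-scoring-rule} and depends solely on their own report $\vec{p}^{t}$, the pre-existing prediction $\vec{p}^{t-1}$, and the eventual outcome $e$. It does \emph{not} depend on any prediction $\vec{p}^{t'}$ made by future participants, because the payoff formula only references the player's own report and the one immediately preceding it.

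Next, I would take conditional expectation over $E$ given the player's information. Writing $\vec{w} \in \Delta(\E)$ for the player's true posterior belief, the conditional expected payoff as a function of the chosen report equals
\[
\Ex\bigl[R(\vec{p}^{t}, E) - R(\vec{p}^{t-1}, E) \bigm| \text{info}\bigr] = R(\vec{p}^{t}; \vec{w}) - R(\vec{p}^{t-1}; \vec{w}).
\]
The subtracted term is a constant independent of $\vec{p}^{t}$, so maximizing expected payoff over $\vec{p}^{t} \in \Delta(\E)$ is equivalent to maximizing $R(\vec{p}^{t}; \vec{w})$. By strict properness, $R(\vec{p}^{t}; \vec{w}) < R(\vec{w}; \vec{w})$ whenever $\vec{p}^{t} \neq \vec{w}$, so the unique maximizer is $\vec{p}^{t} = \vec{w}$. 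Because this argument holds for every realization of the player's information and for every strategy profile of other players, reporting the true posterior is strictly dominant.

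The only subtle point worth flagging is why the ``arrives only once'' hypothesis is essential. If the same player could return at some later round $t' > t$, then their cumulative payoff would include $R(\vec{p}^{t'}, e) - R(\vec{p}^{t'-1}, e)$, and since $\vec{p}^{t'-1}$ is influenced by $\vec{p}^{t}$ (through the responses of intermediate players), the single-round telescoping argument breaks down and the player could strategically manipulate $\vec{p}^{t}$ to enable a larger gain at $t'$. This is precisely the bluffing phenomenon that drives the Alice-Bob-Alice analysis. Under the single-arrival hypothesis, however, no such intertemporal coupling exists, and the proof is a direct application of strict properness.
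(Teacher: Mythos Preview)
Your proof is correct and follows exactly the same approach the paper takes: the paper's one-line justification (``$R$ is a proper scoring rule and the second term in \eqref{eqn:market-scoring-rule} is not under the player's control'') is precisely your argument, just stated tersely. Your version simply spells out the conditioning, the expectation, and the role of strict properness more carefully, and your closing paragraph on why the single-arrival hypothesis matters is a helpful elaboration the paper leaves implicit.
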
 
This follows immediately because $R$ is a proper scoring rule and the second term in (\ref{eqn:market-scoring-rule}) is not under the player's control.

However, if players participate multiple times, it might be beneficial to withhold information (or possibly even bluff).
This motivates study of the \emph{Alice-Bob-Alice (ABA) market}, a prediction market with two players and three rounds where Alice participates in rounds $1$ and $3$ while Bob participates in round $2$.
Despite its apparent simplicity, this special case captures many of the challenges of general markets and has been studied in e.g. \citet{chen2010gaming,gao2013jointly,kong2018optimizing}.

\paragraph{Equilibrium in markets.}
In the prediction market game, a strategy for Alice consists of a pair of possibly-randomized functions $\sigma_1,\sigma_3$ defining her predictions at rounds $1$ and $3$.
We have $\sigma_1: \A \to \Delta({\E})$, i.e. Alice plays $\vec{p}^1 = \sigma_1(A)$.
Next, $\sigma_3: \A \times \Delta({\E}) \times \Delta({\E}) \to \Delta({\E})$, where Alice at round $3$ plays $\vec{p}^3 = \sigma_3(A, \vec{p}^1, \vec{p}^2)$.
Similarly, a strategy for Bob is a possibly-randomized function $\sigma_2 : \B \times \Delta({\E}) \to \Delta({\E})$ where he plays $\vec{p}^2 = \sigma_2(B, \vec{p}^1)$.

For $t\in\{1,2,3\}$, define the expected net score for the prediction at round $t$ to be
  \[ u_{t}((\sigma_1,\sigma_3),\sigma_2) = \Ex_{A,B,E,\sigma_1,\sigma_2,\sigma_3} \left[R(\vec{p}^t,E) - R(\vec{p}^{t-1},E) \right] . \]
Alice's total expected utility is $u_A((\sigma_1,\sigma_3),\sigma_2) := u_{1} + u_{3} .$
Similarly, Bob's expected utility is $u_B((\sigma_1,\sigma_3),\sigma_2) := u_2$.

A set of strategies $((\sigma_1,\sigma_3),\sigma_2)$ are a \emph{Bayes-Nash equilibrium (BNE)} if each is a best response to the other, i.e. for all $(\sigma_1',\sigma_3')$, $u_A((\sigma_1',\sigma_3'),\sigma_2) \leq u_A((\sigma_1,\sigma_3),\sigma_2)$, and similarly for all $\sigma_2'$, $u_B((\sigma_1,\sigma_3),\sigma_2') \leq u_B((\sigma_1,\sigma_3),\sigma_2)$.

In extensive-form games such as prediction markets, BNE can include ``non-credible'' threats.
For example perhaps in BNE, Bob may threaten to reveal no information in the second round if Alice deviates from the equilibrium strategy.
This is not credible because, if Alice were to actually deviate, Bob's best response would still be to predict truthfully according to his beliefs.
Therefore, in this paper we focus on \emph{perfect Bayesian equilibrium (PBE)}.
Informally, a BNE $((\sigma_1,\sigma_3),\sigma_2)$ is a PBE if, off the equilibrium path, these strategies still best-respond according to some beliefs that are consistent with Bayesian updating on the player's own signal and some information about their opponent's signal. See the full version for a formal definition.

\subsection{ABA game with commitment}
Although prediction market equilibria generally capture relative value of information, there are several technical complications.
First, in principle it could be that a prediction of Alice's does not reveal her signal for the coincidental reason that two signals give the same posterior belief.
For example, in the case where both players receive a uniformly random bit and $E = A \oplus B$ (the XOR), Alice's posterior on $E$ is uniformly random regardless of which signal she receives.
Second is the question of \emph{commitment}.
It might be that equilibria of prediction markets do not completely reflect the relative value of information and idealized signaling schemes because Alice is unable to commit to such a scheme.

This motivates us to study the more mathematically clean \emph{ABA game with commitment}. 
Introduced in \citet{kong2018optimizing}, this ``game'' can be phrased as a single-player decision problem, fully specified  by $\{  G,\mu  \}$ where: 
convex function $G: \Delta({\E}) \to \RR \cup \{-\infty\}$ is chosen by the designer; 
$\mu$ is the prior on $(A,B,E)$.
Alice makes the only decision in the game by selecting a signaling scheme $\pi: \S \times \A \to [0,1]$.
This signaling scheme is announced to Bob.
Nature draws $(A,B,E) \sim \mu$ and draws $S \sim \pi(\cdot \mid A)$.
Bob observes the signal $S$, updates to a posterior $\vec{p}_{S,B}$, and receives utility $R(\vec{p}_{S,B},E) - R(\vec{p}_S,E)$. Then Alice receives utility $R(\vec{p}_S,E) - R(\vec{p},E) + R(\vec{p}_{A,B},E) - R(\vec{p}_{S,B},E)$ in total.
Crucially, this payoff structure makes the game constant-sum since for each $A=a,B=b,E=e$, the sum of Alice's and Bob's utilities equals $R(\vec{p}_{a,b},e) - R(\vec{p},e)$, which is fixed.\footnote{This is a slight departure from the formalization of the game in \citet{kong2018optimizing}. There, Alice did not automatically observe Bob's signal, causing complications in the case where Bob's report $\vec{p}_{S,B}$ could be the same for two different outcomes $b,b' \in \B$.}

The interpretation of these payoffs is that Alice comes to the prediction market, announces signal $S$, and predicts the posterior conditioned on $S$.
Then, Bob arrives, sees $S$, %does a Bayesian update after seeing $S$, 
announces $B$, and predicts the posterior conditioned on both $S$ and $B$ (via Bayesian update).
Finally, Alice arrives, %does a Bayesian update on $B$, 
announces $A$, and predicts the posterior given both $A$ and $B$.
In other words, as phrased by \cite{howard1966information,chen2016informational}, Alice receives the \emph{marginal value} of signal $S$ over the prior; then Bob receives the marginal value of $B$ over $S$; and finally, Alice receives the marginal value of $A$ over $S,B$.

\subsection{Bayesian Persuasion}\label{sec:prelim:BP}
% The Bayesian persuasion game is a basic model that captures how one can utilize an informational advantage to influence the decision of another. 
The ABA game turns out to be relevant to the Bayesian persuasion model. A persuasion game is played between a  \emph{sender} and a \emph{receiver}. The receiver is faced with selecting an action $i$ from  $[k] = \{ 1,\cdots,k \}$. Both the sender and receiver utility depend on the receiver's action as well as a state of nature $e$ supported on $\E$. Formally, the sender and receiver payoff function are $v(i,e)$  and $u(i,e)$ where $i \in [k]$ and $e \in \E$. 

Particularly relevant to this work is the model of \emph{Bayesian persuasion with a privately informed receiver}, first studied by Kolotilin \emph{et al} \cite{kolotilin_persuasion_private}. Here, the sender and receiver each observe a private signal regarding the state of nature $E$, which may be correlated with each other. Let $A \in \A$ and $B \in \B$ denote the (random) signal observed by the sender and receiver, respectively. The joint distribution of $A,B,E$ is public knowledge and denoted as  $\mu(e,a,b)$. The Bayesian persuasion model studies how the sender can maximize her expected utility by \emph{committing} to a signaling scheme $\pi:  \S \times \A \to [0,1]$ to strategically influence the receiver's belief about $e$ and consequently his optimal action.\footnote{Such a signaling scheme is also called an \emph{experiment} by Kolotilin \emph{et al} \cite{kolotilin_persuasion_private}. We remark that their model is  a special case of the general model we described here, with independent $A,B$ and binary receiver actions.} Here, again, $\S$ is the set of signal outcomes. % After receiving a signal $s$, the receiver of private signal $b$ will update his posterior belief about the state $e$ as follows:  $\Pr(e|s,b) = \mu(e)$
In Section \ref{sec:aba-bp}, we will formalize the connection to prediction markets, which involves Alice ``persuading'' Bob to make certain reports but with a particular form of sender objectives specific to prediction markets.  

\section{Equivalence with and without Commitment} \label{sec:equiv}

In this section, we show that Alice's optimal signaling scheme in the ABA game with commitment yields an approximate PBE in the Alice-Bob-Alice prediction market (without commitment).
Thus, we can next focus on solving the ABA game with commitment.
In this section, to simplify technicalities, we assume that the proper scoring rule $R$ has a \emph{differentiable} convex expected score function $G$.

First, we formalize the sense in which Alice uses a signaling scheme even in a prediction market.
This perspective has appeared in prior works on equilibria of markets, though a precise result may not have been stated. 
Informally, it says that in \emph{any} equilibrium, Alice's equilibrium strategy can be written as reporting the posterior conditioned on a signal she draws from a private scheme.
Recall from Fact \ref{fact:one-step-truthful} that, because Bob only participates once and the market uses a strictly proper scoring rule, his unique best response is always to report truthfully according to his information and beliefs.
\begin{lemma} \label{lem:market-signal}
  In perfect Bayesian equilibrium of the Alice-Bob-Alice prediction market, without loss of generality, Alice's strategy is to predict $\vec{p}_S$ for some signaling scheme $\pi$ and associated random signal $S$.
\end{lemma}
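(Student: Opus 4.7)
The plan is to argue, starting from an arbitrary PBE $(\sigma_1, \sigma_2, \sigma_3)$, that Alice's round 1 play can be replaced by the canonical form ``draw $S$ from a scheme $\pi$, report $\vec{p}_S$'', obtaining another PBE with weakly higher Alice utility. I would take $S$ to be the random variable $\sigma_1(A)$ (Alice's round 1 report together with her internal randomization). This immediately defines a scheme $\pi(s, a) = \Pr[S = s, A = a]$, and by Bayesian updating $\vec{p}_s = \mathbb{E}[E \mid S = s]$ coincides with Bob's posterior after observing $\vec{p}^1 = s$. The desired conclusion is the fixed-point condition $s = \vec{p}_s$ almost surely on the support of $S$.

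Next I would consider the modified profile $(\sigma_1^*, \sigma_2^*, \sigma_3^*)$, where $\sigma_1^*$ draws the same signal $S$ via the same internal randomization but reports $\vec{p}_S$ in place of $S$, and $\sigma_2^*, \sigma_3^*$ are the one-step-truthful best responses (Fact~\ref{fact:one-step-truthful}). The utility comparison has two parts. For round 1, the identity $\mathbb{E}[R(\vec{p}^1, E) \mid S = s] = R(\vec{p}^1; \vec{p}_s)$ together with strict properness yields $u_1^* = \mathbb{E}_S[G(\vec{p}_S)] \geq \mathbb{E}_S[R(S;\vec{p}_S)] = u_1$. For round 3, Alice's final posterior is $\vec{p}_{A,B}$ in both profiles, so the comparison reduces to Bob's round-2 report, which coarsens from $\vec{p}_{S,B}$ to $\vec{p}_{\vec{p}_S, B} = \mathbb{E}[\vec{p}_{S,B} \mid \vec{p}_S, B]$; convexity of $G$ and Jensen's inequality give $\mathbb{E}[G(\vec{p}_{\vec{p}_S, B})] \leq \mathbb{E}[G(\vec{p}_{S,B})]$, whence $u_3^* \geq u_3$.

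To close the argument I would verify that $(\sigma_1^*, \sigma_2^*, \sigma_3^*)$ is itself a PBE. $\sigma_2^*$ and $\sigma_3^*$ are best responses by Fact~\ref{fact:one-step-truthful}. For $\sigma_1^*$, any alternative round-1 strategy $\tilde\sigma_1$ (against fixed $\sigma_2^*$) can itself be run through the same canonicalization to produce a canonical-form strategy of weakly higher utility, so optimization for Alice reduces to the (compact) set of canonical strategies; taking a maximizer yields a canonical PBE whose utility is at least that of the original PBE. This justifies the ``without loss of generality'' in the lemma.

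The main obstacle I expect is handling off-equilibrium-path beliefs so that $(\sigma_1^*, \sigma_2^*, \sigma_3^*)$ is formally a PBE. When two signals $s \neq s'$ satisfy $\vec{p}_s = \vec{p}_{s'}$, they collapse into a single observable round-1 report, so Bob's on-path observation is strictly coarser than under $\sigma_1$; on-path this causes no issue because $\mathbb{E}[E \mid \vec{p}_S = v] = v$ by iterated expectation, so $\sigma_2^*$ is well-defined. Off-path, one can assign Bob's beliefs consistently (for instance, by treating an unexpected report $v$ as a degenerate signal in an extended version of $\pi$), and the strict concavity of the one-shot round-1 scoring gain forces any deviation that changes the reported value within a fixed posterior class to be strictly suboptimal, so no off-path threat can sustain a non-canonical PBE of higher value.
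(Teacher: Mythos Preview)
Your approach differs from the paper's and has a genuine gap.

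The paper's argument keeps $\sigma_2$ and $\sigma_3$ \emph{unchanged as functions} and only relabels Alice's round-1 output from $S$ to $\vec{p}_S$; it then asserts that Bob's information, and hence the entire downstream play, is unaffected, so only the round-1 score moves and strict properness gives the improvement. You instead replace $\sigma_2,\sigma_3$ by fresh one-step best responses $\sigma_2^*,\sigma_3^*$, which forces you to compare round-3 payoffs across two genuinely different information structures.

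The gap is your assertion that ``Alice's final posterior is $\vec{p}_{A,B}$ in both profiles.'' In neither profile does Alice necessarily learn $B$: she only observes Bob's report, and the map $b \mapsto \vec{p}_{S,b}$ (respectively $b \mapsto \vec{p}_{\vec{p}_S,b}$) need not be injective. The paper itself flags exactly this degeneracy later, in the proof of Theorem~\ref{thm:scheme-eps-pbe}. Without $\vec{p}^3=\vec{p}_{A,B}$, your $u_3$ comparison does not close: the Jensen step correctly shows the \emph{subtracted} term $\Ex[G(\vec{p}^2)]$ shrinks when Bob's posterior coarsens from $\vec{p}_{S,B}$ to $\vec{p}_{\vec{p}_S,B}$, but for the same reason Alice's round-3 observation of $B$ (via Bob's report, given that she already knows $S$) can also coarsen, so the \emph{added} term $\Ex[G(\vec{p}^3)]$ may shrink as well. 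The net sign of $u_3^*-u_3$ is therefore ambiguous.

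Your step 5 has a related problem: showing that any deviation $\tilde\sigma_1$ can itself be canonicalized against \emph{its own} best-responding Bob does not show that your particular $\sigma_1^*$ is a best response against the \emph{fixed} $\sigma_2^*$, which is what PBE requires. What you end up arguing is existence of \emph{some} canonical PBE via a compactness/maximization step, not that the canonicalization of the given PBE is itself a PBE.
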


\begin{proof}
  Let the random variable $S = \vec{p}^1$, i.e. Alice's report itself.
  In equilibrium, Bob observes $S$ and updates to posterior belief $\vec{p}_{S,B}$, reporting $\vec{p}^2 = \vec{p}_{S,B}$.
  Now consider the strategy profile where Alice reports $\vec{p}_S$ where she would have reported $S$, and otherwise strategies are unchanged.
  In this case, the total information available to Bob is still $(S,B)$, so he is still best-responding.
  Meanwhile, Alice still has the same information at round $3$ as Bob's strategy has not changed, so she is also still best-responding.
  So if the original strategy profile were an equilibrium, this profile is also an equilibrium, but one in which Alice receives strictly better utility. \qed
\end{proof}

Therefore, from here on we will describe Alice's strategy in prediction markets as a signaling scheme $\pi$, keeping in mind that she does not publicly announce her signal and does not have to commit to the scheme.

%\begin{lemma} \label{lemma:max-alice-u}
%  Suppose $\pi^*$ is an $\epsilon$-optimal signaling scheme for the ABA game with commitment.
%  Then in any $\epsilon$-PBE of the Alice-Bob-Alice prediction market, Alice's utility is no more than $V - u_B(\pi^*;\pi^*) + 2\epsilon$.
%\end{lemma}
%\begin{proof}
%  The total utility of the players is $\Ex R(\vec{p}^3,E) - G(\vec{p}) \leq \Ex G(\vec{p}_{A,B}) - G(\vec{p}) = V$ by properness of the scoring rule (i.e. with respect to all available information $A,B$, the optimal prediction is $\vec{p}_{A,B}$.
%  On the other hand, by definition of $u_B$, if Alice is playing from $\pi$ and Bob plays $\epsilon$-optimally, then his utility is at least $u_B(\pi ; \pi) - \epsilon$.
%  So Alice's utility is at most
%   \[ V - u_B(\pi;\pi) + \epsilon \leq V - u_B(\pi^*;\pi^*) + 2\epsilon \]
%  because, by $\epsilon$-optimality of the signaling scheme, $u_B(\pi^*;\pi^*) \leq u_B(\pi;\pi) + \epsilon$.
%\end{proof}

Before we proceed, we will give some necessary definitions.

\paragraph{Definitions.}
First, let us define $V = \Ex_{A,B,E} R(\vec{p}_{A,B},E) - R(\vec{p},E)$ where $\vec{p}$ is the prior.
This is the difference in expected score between the prior and the posterior conditioned on both players' signals (it can also be written $\Ex_{A,B} G(\vec{p}_{A,B}) - G(\vec{p})$).
Next, let us define the notation $u_B(\pi';\pi)$ as follows.
In the prediction market game, suppose Alice draws from $\pi$ while Bob believes she is drawing from $\pi'$.
If $\vec{p}^1$ is in the support of $\pi'$ given Bob's signal $B$, then he does a Bayesian update to an incorrect (in general) posterior belief $\vec{p}^2$ and reports it.
If $\vec{p}^1$ is not in the support of Alice's $\pi'$ strategy (``off the equilibrium path''), then Bob forms some belief over Alice's signal and uses this to again form an incorrect posterior belief $\vec{p}^2$.
We define $u_B(\pi';\pi)$ to be Bob's expected utility in this case, for some off-path beliefs of Bob.

\vspace{.1in}

The core idea occurs in the following lemma, which shows that, under some conditions, Alice prefers to deviate to the optimal signaling scheme.
\begin{lemma} \label{lem:alice-dev}
  Suppose that, in the ABA game with commitment, $\pi^*$ brings Alice higher utility than $\pi$.
  Then in the Alice-Bob-Alice prediction market, if Alice plays $\pi$ and always learns Bob's signal after his report, then Alice improves utility by deviating to $\pi^*$.
\end{lemma}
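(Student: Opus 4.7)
My plan is to directly compare Alice's expected utility in two scenarios --- committing to $\pi^*$ in the ABA commitment game, and deviating from $\pi$ to $\pi^*$ in the prediction market where Bob still best-responds to $\pi$ --- and show that the market deviation weakly dominates the commitment play. A parallel equality between Alice playing $\pi$ in the market and committing to $\pi$, combined with the hypothesis that $\pi^*$ strictly beats $\pi$ under commitment, will chain together to give the strict improvement.

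First I would write out both utilities. By Lemma \ref{lem:market-signal} I may take Alice's round-$1$ report to be $\vec{p}_{S^*}$ with $S^*\sim\pi^*(\cdot\mid A)$, and by Fact \ref{fact:one-step-truthful} together with the hypothesis that Alice learns $B$, her round-$3$ report is $\vec{p}_{A,B}$ in both scenarios. The commitment utility under $\pi^*$ is then
\[ U_A(\pi^*) \;=\; \Ex\bigl[R(\vec{p}_{S^*},E) - R(\vec{p},E) + R(\vec{p}_{A,B},E) - R(\vec{p}_{S^*,B},E)\bigr], \]
since Bob, knowing $\pi^*$, reports the true posterior $\vec{p}_{S^*,B}$. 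In the market deviation Bob instead reports some $\tilde{\vec{p}}_{S^*,B}$ derived from his belief that Alice is playing $\pi$ (by Bayesian update against $\pi$ when $\vec{p}_{S^*}$ lies on $\pi$'s equilibrium path, or by his PBE off-path belief otherwise). Alice's corresponding market-deviation utility matches the display above term-for-term, except that the final $R(\vec{p}_{S^*,B},E)$ is replaced by $R(\tilde{\vec{p}}_{S^*,B},E)$.

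The core step is strict properness of $R$: conditioned on $(S^*,B)$, the true distribution of $E$ is $\vec{p}_{S^*,B}$, so $R(\vec{p}_{S^*,B};\vec{p}_{S^*,B}) \geq R(\tilde{\vec{p}}_{S^*,B};\vec{p}_{S^*,B})$ for \emph{any} alternative report $\tilde{\vec{p}}_{S^*,B}$. Because this term enters Alice's utility with a minus sign, taking outer expectations yields that Alice's market-deviation utility is $\geq U_A(\pi^*)$. For the baseline, when Alice actually plays $\pi$ in the market, Bob's belief is correct and $\tilde{\vec{p}}_{S,B} = \vec{p}_{S,B}$, so Alice's market utility equals $U_A(\pi)$. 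Chaining gives market-deviation utility $\geq U_A(\pi^*) > U_A(\pi) =$ market-$\pi$ utility, which is the desired conclusion.

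The main subtlety I expect is handling the off-path case: when $\vec{p}_{S^*}$ is not in the support of Alice's round-$1$ reports under $\pi$, the response $\tilde{\vec{p}}_{S^*,B}$ is governed by Bob's PBE off-path conjecture rather than Bayes' rule and could in principle be anything. The argument above succeeds precisely because the strict-properness inequality is uniform over \emph{all} alternative distributions, so I never need to pin the off-path beliefs down. The only other place care is needed is ensuring the round-$3$ reports coincide across the two scenarios, which is exactly what the hypothesis that Alice learns $B$ after Bob's report buys me; without it, Alice might play different round-$3$ predictions in the two settings and the term-by-term cancellation would break.
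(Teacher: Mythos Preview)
Your proposal is correct and follows essentially the same approach as the paper: the key step in both is that strict properness forces Bob's misinformed response $\tilde{\vec{p}}_{S^*,B}$ to score no better than the true posterior $\vec{p}_{S^*,B}$, and since Alice always ends at $\vec{p}_{A,B}$ the comparison goes through. The only cosmetic difference is that the paper routes the argument through Bob's utility and the constant-sum property (showing $u_B(\pi;\pi^*) < u_B(\pi^*;\pi^*) \leq u_B(\pi;\pi)$ and then transferring back to Alice), whereas you compare Alice's utilities directly term-by-term; your framing is slightly more streamlined but the content is the same.
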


\begin{proof}

%Before we start the proof, we give some necessary definitions.
%
%
%Now we can begin the proof.

  Suppose in the prediction market that Alice plays $\pi$ and Bob best-responds.
  Suppose Bob's strategy reveals his signal, meaning that Alice is always able to provide the best-possible prediction $\mathbf{p}_{A,B}$.
  Then the total expected utility obtained by the players is $V$ as defined above.
  We note that Bob's utility will be $u_B(\pi;\pi)$.

  Meanwhile, in the ABA game with commitment where Alice plays $\pi^*$, Bob's expected utility is $u_B(\pi^*;\pi^*)$.
  As we have formulated it, the ABA game with commitment is a constant-sum game.
  So if $\pi^*$ is preferable to $\pi$ for Alice in that game, then
  \begin{equation}
    u_B(\pi^*;\pi^*) \leq u_B(\pi;\pi) . \label{eqn:optopt_leq_pipi}
  \end{equation}
  Now in the prediction market, suppose Alice deviates from $\pi$ to drawing $S$ according to $\pi^*$.
  Recall that, if Bob knew the true signaling scheme $\pi^*$ that Alice is using, he would respond with the true posterior $\vec{p}_{S,B}$.
  Let $\vec{w}_{S,B}$ be the prediction Bob actually makes when Alice reports according to $S$.
  Both on and off the equilibrium path, $\vec{w}_{S,B}$ is computed according to a Bayesian update according to the wrong signaling scheme, not the one Alice has actually deviated to.
  So, by strict properness of the scoring rule, Bob's utility satisfies
  \begin{align}
    u_B(\pi;\pi^*) &= \Ex_{B,S,E} R(\vec{w}_{S,B},E) - R(\vec{p}_S,E)  \nonumber \\
                   &< \Ex_{B,S,E} R(\vec{p}_{S,B},E) - R(\vec{p}_S,E)  \nonumber \\
                   &= u_B(\pi^*;\pi^*)   \label{eqn:piopt_leq_optopt}
  \end{align}
  where the inequality is due to strict properness of $R$. By combining Inequalities \eqref{eqn:optopt_leq_pipi} and \eqref{eqn:piopt_leq_optopt}, we get that Bob's expected utility is worse under this deviation by Alice.
  Because total expected utility is the constant $V$ under these conditions, Alice's expected utility is higher. \qed
\end{proof}

To prove our main result, we also need the following continuity claim.
\begin{lemma} \label{lem:payoffs-continuous}
  In the prediction market with differentiable $G$, fixing Bob's strategy, Alice's expected utility is continuous in $\pi$; and similarly, fixing Alice's strategy, Bob's expected utility is continuous with respect to each of his reports at the second stage (i.e. outcomes of $\vec{p}^2$) as well as each of the probabilities he places on each report.
\end{lemma}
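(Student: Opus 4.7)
The plan is to reduce both continuity statements to a single analytic fact: differentiability of $G$ forces the scoring rule $R(\cdot,e)$ to be continuous and uniformly bounded on the compact simplex $\Delta(\E)$. First I would invoke the standard representation
\[
R(\vec{w},e) \;=\; G(\vec{w}) + \inprod{\nabla G(\vec{w})}{\mathbf{1}_e - \vec{w}}
\]
underlying the McCarthy--Savage construction in Proposition~1. Continuity of $G$ and $\nabla G$ on the compact simplex then makes $R(\cdot,e)$ continuous and bounded for each $e\in\E$, and every subsequent step leans on this one fact.

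For Alice's claim, fix Bob's strategy $\sigma_2$ and decompose her expected utility by the realized signal:
\[
u_A(\pi;\sigma_2) \;=\; \sum_{s\in\S}\Pr_\pi[S=s]\,\Ex_\pi\!\left[R(\vec{p}_s,E)-R(\vec{p},E)+R(\vec{p}^3,E)-R(\vec{p}^2,E) \;\middle|\; S=s\right].
\]
Each weight $\Pr_\pi[S=s]=\sum_a\pi(s,a)$ is linear in $\pi$; the induced posterior $\vec{p}_s$ is a rational function of $\pi$ on the open region $\{\pi:\Pr_\pi[S=s]>0\}$; and the downstream quantities $\vec{p}^2=\sigma_2(B,\vec{p}_s)$ and $\vec{p}^3$ inherit continuity from these posteriors. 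Composing with the first step makes each of the four scoring terms continuous in $\pi$ on that open region, and the finite sum over $s$ is continuous there as well. For Bob's claim, I would write
\[
u_B \;=\; \sum_{b,s}\Pr[B=b,S=s]\sum_{\vec{p}^2}\sigma_2(\vec{p}^2\mid b,\vec{p}_s)\,\Ex\!\left[R(\vec{p}^2,E)-R(\vec{p}_s,E) \;\middle|\; B=b,S=s\right],
\]
which is manifestly linear (hence continuous) in each mixing probability $\sigma_2(\vec{p}^2\mid b,\vec{p}_s)$ and continuous in each report $\vec{p}^2\in\Delta(\E)$ via continuity of $R(\cdot,e)$ established above.

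The one genuinely delicate step is the boundary case in Alice's decomposition: as $\pi$ is perturbed, a signal $s$ may have $\Pr_\pi[S=s]\to 0$, leaving $\vec{p}_s$ undefined in the limit or failing to converge. I would handle this by bounding the $s$-th summand in absolute value by $4M\cdot\Pr_\pi[S=s]$, where $M$ is the uniform bound on $|R(\cdot,e)|$ from the first step, so the whole contribution vanishes with the weight regardless of how $\vec{p}_s$ behaves. This is precisely where \emph{differentiability} of $G$ (rather than just continuity) is essential---for the log rule, $G$ is continuous on the closed simplex yet $R$ is unbounded at the boundary, and such a uniform bound is unavailable. A short $\epsilon$--$\delta$ argument then glues interior continuity with this vanishing-boundary estimate to yield continuity on all of $\pi$.
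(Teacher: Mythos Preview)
Your treatment of Bob's continuity matches the paper's. The divergence is in Alice's part, where the paper's argument is both simpler and avoids a gap in yours.

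The paper conditions on the pair $(S,A)=(s,a)$ rather than on $S=s$ alone. Once $(s,a)$ is fixed, Alice's first report, Bob's response via the fixed $\sigma_2$, Alice's third report via the fixed $\sigma_3$, and the law of $(B,E)$ under $\mu(\cdot\mid a)$ are all determined with no residual dependence on $\pi$. Hence the conditional utility $u(s,a)$ is a constant and
\[
u_A \;=\; \sum_{s,a}\pi(s,a)\,u(s,a)
\]
is \emph{linear} in $\pi$. No boundary analysis and no boundedness of $R$ is needed for this half of the lemma; differentiability of $G$ is invoked only for Bob's half.

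Your route instead takes Alice's first report to be the $\pi$-dependent posterior $\vec{p}_s$ and then pushes this dependence through $\sigma_2$. That step is a genuine gap: you assert that ``$\vec{p}^2=\sigma_2(B,\vec{p}_s)$ \ldots\ inherit[s] continuity from these posteriors,'' but the lemma fixes an \emph{arbitrary} strategy $\sigma_2:\B\times\Delta(\E)\to\Delta(\E)$ with no continuity hypothesis in its second argument. A discontinuous $\sigma_2$ would make $\vec{p}^2$---and hence the term $R(\vec{p}^3,E)-R(\vec{p}^2,E)$---jump as $\pi$ is perturbed, and your vanishing-weight trick does not rescue this because the offending signal can carry probability bounded away from zero. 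The paper sidesteps the issue entirely by reading ``continuous in $\pi$'' as perturbing only the probabilities $\pi(s,a)$ over a fixed signal/report set, which is also exactly what is used downstream in Theorem~\ref{thm:scheme-eps-pbe}.
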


\begin{proof}
  Fixing Bob's strategy, Alice's expected utility is simply $\sum_{a,s} \pi(s,a) u(s,a)$ where $u(s,a)$ is her expected utility conditioned on $S=s,A=a$.
  This is continuous in $\pi$.
  Fixing Alice's strategy, if Bob changes the probability of making a report, continuity follows for the same reason.
  If Bob changes a report $\vec{p}^2$ to $\vec{p'}^2$, his difference in expected score is a constant (the probability of making this report) times the difference $R(\vec{p}^2;p_{S,B}) - R(\vec{p'}^2;p_{S,B})$.
  By the characterization of \citet{savage1971elicitation}, $R$ is continuous in its first argument if derived from a differentiable convex function $G$.
  This follows because, according to that characterization, $R$ can be written as  $R(\vec{w}';\vec{w}) = G(\vec{w}') + \inprod{\nabla G(\vec{w}')} {\vec{w}-\vec{w}'}$, and differentiable convex functions are continuous and continuously differentiable. \qed
\end{proof}

These results allow us to prove the main result of this section.
\begin{theorem} \label{thm:scheme-eps-pbe}
  Let $\pi^*$ be the optimal signaling scheme for the ABA game with commitment, i.e. the minimizer of $u_B(\pi;\pi)$.
  Then for any $\epsilon$, there is an $\epsilon$-PBE of the Alice-Bob-Alice prediction market in which Alice plays within $\epsilon$ of $\pi^*$.
\end{theorem}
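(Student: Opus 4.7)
The plan is to construct an $\epsilon$-PBE explicitly using a small perturbation $\hat{\pi}$ of $\pi^*$ together with a careful specification of Bob's off-path beliefs. I would first find $\hat{\pi}$ within $\epsilon/2$ of $\pi^*$ in total variation such that (i) the posteriors $\vec{p}_s^{\hat{\pi}}$ are pairwise distinct across $s$ in the support of $\hat{\pi}$, and (ii) for each such $s$, the map $b\mapsto \vec{p}_{s,b}^{\hat{\pi}}$ is injective. A generic perturbation achieves both properties, and by Lemma~\ref{lem:payoffs-continuous} we may additionally arrange $|u_B(\hat{\pi};\hat{\pi})-u_B(\pi^*;\pi^*)|\leq\epsilon/2$. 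Property (i) lets Bob uniquely identify $S$ from Alice's round-1 report on-path, and (ii) lets Alice recover $B$ from Bob's round-2 report.

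Next, I would define the strategies. Alice samples $S\sim\hat{\pi}(\cdot\mid A)$ and reports $\vec{p}_S^{\hat{\pi}}$ at round~1; at round~3 she reports her true posterior, which on-path is $\vec{p}_{A,B}$ by (ii). Bob, on-path, inverts $\vec{p}^1$ via (i) and reports $\vec{p}_{S,B}^{\hat{\pi}}$; off-path, he adopts a belief $\mu_{\mathrm{off}}(\vec{p}^1)\in\Delta(\mathcal{A})$ to be specified, and best responds. Bob's strategy is sequentially rational by Fact~\ref{fact:one-step-truthful} and so is Alice at round~3, so the substantive check is Alice's round~1.

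For any deviation $\tau$ by Alice at round~1, assuming she still recovers $B$ and hence obtains the full constant-sum total $V$, the deviation gain decomposes as
\[
u_B(\hat{\pi};\hat{\pi})-u_B^{\mathrm{dev}}(\tau)=\bigl[u_B(\hat{\pi};\hat{\pi})-u_B(\tau;\tau)\bigr]+\bigl[u_B(\tau;\tau)-u_B^{\mathrm{dev}}(\tau)\bigr],
\]
where $u_B^{\mathrm{dev}}(\tau)$ denotes Bob's actual utility under the deviation (with belief $\hat{\pi}$). The first bracket is at most $\epsilon/2$ by continuity together with the optimality of $\pi^*$ in the commitment game; the second bracket is nonnegative by strict properness (as used in Lemma~\ref{lem:alice-dev}) and captures Bob's loss from his incorrect belief. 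The main obstacle is bounding this second bracket, which a priori can be large when Alice's deviation pushes Bob's response far from the correct one. I would address this in two complementary ways: (a) choose $\mu_{\mathrm{off}}$ so that any off-path response by Bob reveals little information about $B$, forcing a compensating drop in Alice's round-3 utility that offsets her round-1 gain; and (b) for in-support deviations, leverage the fact that Alice's report space has been made distinct by perturbation, invoking a small fixed-point or envelope-type argument to show the second bracket vanishes up to $\epsilon/2$. Combining these with continuity and the optimality of $\pi^*$ yields total deviation gain at most $\epsilon$, completing the $\epsilon$-PBE verification.
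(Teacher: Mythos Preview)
Your direct-verification approach runs into a genuine obstacle that you identify but do not resolve. The decomposition
\[
u_B(\hat{\pi};\hat{\pi})-u_B^{\mathrm{dev}}(\tau)=\bigl[u_B(\hat{\pi};\hat{\pi})-u_B(\tau;\tau)\bigr]+\bigl[u_B(\tau;\tau)-u_B^{\mathrm{dev}}(\tau)\bigr]
\]
gives an upper bound on the first bracket but only a \emph{lower} bound (namely $\geq 0$) on the second, so it yields no upper bound on Alice's deviation gain. The second bracket is exactly Bob's loss from holding the wrong belief $\hat{\pi}$ when Alice actually plays $\tau$, and in a constant-sum situation this loss is transferred to Alice. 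Your fix (b) for in-support deviations is a placeholder (``fixed-point or envelope-type argument'') with no content; in particular, for any $\tau$ supported on the same signals as $\hat{\pi}$, Bob identifies $s$ correctly but computes $\Pr(A\mid s)$ using $\hat{\pi}$ rather than $\tau$, so his posterior is wrong and the bracket can be bounded away from zero. Fix (a) does not apply here since these deviations are on-path, so $\Delta=0$ and nothing offsets the bracket. Thus the heart of the argument is missing.

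The paper avoids this difficulty by arguing \emph{indirectly}. It first passes to a finite discretized game in which $B$ is publicly announced after round~2; in this modified game PBE exists by standard results for finite extensive-form games, and the constant-sum property holds exactly because Alice always learns $B$. Lemma~\ref{lem:alice-dev} is then used not to verify that $\pi^*$ is a best response, but in the opposite direction: it shows that from any $\pi\neq\pi^*$ Alice profitably deviates to $\pi^*$, so every PBE of the modified game must already have Alice playing $\pi^*$. Existence plus this characterization yields the desired PBE without ever needing to bound the ``misleading gain'' directly. Only then does the paper pass back to the original game via continuity (Lemma~\ref{lem:payoffs-continuous}) and the perturbation trick you mention to make Bob's report reveal $B$. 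If you want to repair your approach, the missing idea is precisely this appeal to PBE existence in an auxiliary finite game.
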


\begin{proof}
  First consider a modified Alice-Bob-Alice prediction market game with a finite, discretized report space for both players, i.e. a finite $\delta$-net for some $\delta$.
  Note that PBE exists in the discretized game because all report spaces are finite.
  Alice's report space is extended by adding the support of $\pi^*$ and $\mathbf{p}_{A,B}$.
 The game is also modified so that Bob's signal outcome $B=b$ is always announced publicly after his prediction is made in round two.
  In this game, Alice always learns $B$ at round two and plays $\vec{p}_{A,B}$ at round $3$ as a unique best response, by strict properness.
  So the total utility of the two players is $V$ and the game is constant-sum in expectation.
  Let $\pi^*$ minimize $u_B(\pi;\pi)$; then if Alice plays $\pi^*$ and Bob best-responds, his utility is minimized and by the constant-sum property, Alice's is maximized.
  Furthermore, by Lemma \ref{lem:alice-dev}, this is the only possible PBE, because for any other $\pi \neq \pi^*$, Alice has a profitable deviation by switching to $\pi^*$.
    
 Now suppose Bob continues playing from this strategy set in the original prediction market game, i.e. revealing his payoff.
  By continuity of payoffs (Lemma \ref{lem:payoffs-continuous}), he can do so while encoding $B$ in arbitrarily low-order bits for an arbitrarily small loss in expected utility.
  Bob loses at most, say, $\epsilon'$ utility for doing so, so it is an $\epsilon'$-equilibrium, proving the theorem.
  
Careful readers may raise an issue that in the original prediction market game, since Bob doesn't announce his signal outcome, Alice may not be able to learn $B$ at round two just from Bob's best-response prediction. This is indeed true in degenerate cases.\footnote{An example is when $A$ and $B$ are uniformly random bits and $E$ is their XOR.} However, this problem can be removed by a technique of \citet{kong2018optimizing}, which shows in Lemma 19 that Alice can modify $\pi^*$ arbitrarily slightly so that Bob's strict best-response reveals his signal. Again by continuity of the payoffs, we have that Alice loses only $\epsilon'$ by doing so. In non-degenerate cases, Alice can always infer Bob's signal from his report.
 
%   We observe that in the degenerate cases,\footnote{An example is when $A$ and $B$ are uniformly random bits and $E$ is their XOR, as mentioned in Subsection \ref{subsec:connection-to-infosub}.} Bob's strategy that always reveals his signal, despite being a best response, is a dominated strategy. {\color{red} HF: why a "dominated" strategy? Do we just mean it is not unique? } (In nondegenerate cases, his signal would be revealed by his report regardless.) 
%  However, this problem can be removed by a technique of \citet{kong2018optimizing}, which shows in Lemma 19 that Alice can modify $\pi^*$ arbitrarily slightly so that Bob's strict best-response reveals his signal.
%  Again by continuity of the payoffs, we have that Alice loses only $\epsilon'$ by doing so, and Bob is now best-responding. 
  %We observe that this construction is not completely satisfactory because Bob plays a slightly-dominated strategy that always reveals his signal.
%  (In fact, except in degenerate cases, his signal would be revealed by his report regardless.)
  \qed
\end{proof}

\section{ABA Game with Commitment is Bayesian Persuasion}\label{sec:aba-bp}

%\begin{proofsketch}
%This is a proof sketch.
%\end{proofsketch}

In this section, we formally establish the connection between the ABA game with commitment (denoted as $\texttt{ABA-Commit}$) and the Bayesian Persuasion (BP) game with a privately informed receiver (denoted as $\texttt{BP-Private}$). 
Besides revealing interesting conceptual insights, this connection also enables us to directly employ ideas from Bayesian persuasion to design an efficient algorithm for the ABA game  when the size of Bob's signal space is a constant and the expected score function $G$ is $k$-piecewise linear.

\subsection{Reducing $\texttt{ABA-Commit}$ to $\texttt{BP-Private}$}\label{sec:aba-bp:reduction}
 
 % Returning back to the ABA game with commitment, we assume (without loss) that \jerry{signals are distinguishable (perhaps more detail after Bo figures this out.)}  
 We start by simplifying the equilibrium analysis of the ABA game with commitment.  Since Bob has only one chance to participate in the ABA game, his optimal strategy is simply to reveal his original signal at $t=2$ (assuming tie breaking in favor of more information) and Alice will also reveal all her information at $t=3$. Therefore, the only non-trivial stage is Alice's optimal  commitment at the first stage. Since the game is constant-sum, so maximizing Alice's utility is equivalent to minimizing Bob's utility. As a result, solving the ABA game with commitment boils down to compute  \emph{Alice's optimal commitment} (to a signaling scheme) at the first stage to minimize Bob's utility. % \jerry{Do we need distinguishable signals here?}
 
For convenience and clarity, we state the result for piecewise linear convex function $G$, however this connection holds for arbitrary convex $G$ function (see remarks at the end of the theorem proof). 
 
 \begin{theorem}\label{thm:aba-to-bp-reduction}
For any $\texttt{ABA-commit}$ instance $\{G, \mu \}$ where $G$ is $k$-piecewise linear and $\mu$ is the prior over $(A,B,E)$, there is a $\texttt{BP-private}$ instance such that Alice's optimal commitment  is the same as the sender's optimal commitment in the $\texttt{BP-private}$ instance, which is described as follows:  (1) the instance has the same joint prior $\mu$ over the sender signal $A$, receiver signal $B$ and event $E$; (2) The receiver utility function $U_G(i,e)$ is uniquely determined by $G$ with action set $[k] = \{  1, 2,\cdots,k\}$; (3) The sender utility as a function of any signaling scheme $\pi: \S \times \A \to [0,1]$ is given by
\begin{equation}\label{eq:bob-util}
\text{Sender Obj = }  \Ex_{s} \max_{ i \in [k]} \sum_{e \in E} [U_G(i,e) \cdot \Pr(e|s)]  - \Ex_{s,b} \max_{i \in [k]} \sum_{e \in E} [U_G(i,e) \cdot \Pr(e|s,b)]. 
\end{equation}% described in Equation \eqref{eq:bob-util}. 
 
 \end{theorem}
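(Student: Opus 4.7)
The plan is to express Bob's expected utility in \texttt{ABA-commit} purely in terms of the convex function $G$, then use the finite max-of-affines representation of a $k$-piecewise linear convex $G$ to reinterpret $G$-values as receiver best-response utilities in a persuasion instance. Combined with the constant-sum structure of \texttt{ABA-commit}, this identifies Alice's objective with the claimed sender objective.

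First I would rewrite Bob's expected utility. Since $\vec{p}_S$ and $\vec{p}_{S,B}$ are the true Bayesian posteriors of $E$ given $S$ and $(S,B)$ respectively, the proper-scoring-rule identity $R(\vec{w};\vec{w}) = G(\vec{w})$ yields
\[ u_B(\pi;\pi) \;=\; \Ex[R(\vec{p}_{S,B},E) - R(\vec{p}_{S},E)] \;=\; \Ex_{S,B}[G(\vec{p}_{S,B})] - \Ex_{S}[G(\vec{p}_{S})]. \]
Next, since $G$ is convex and $k$-piecewise linear on $\Delta(\E)$, there exist $k$ affine functions whose pointwise maximum equals $G$. Absorbing constants using $\sum_e p_e = 1$, I can write $G(\vec{p}) = \max_{i \in [k]} \sum_{e \in \E} U_G(i,e)\, p_e$; this defines a receiver utility $U_G$ with action set $[k]$ indexing the affine pieces. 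Viewing each piece as a receiver action, $\max_i \sum_e U_G(i,e) \Pr(e \mid s,b)$ is precisely the utility a Bayesian receiver obtains by best-responding to the posterior $\vec{p}_{s,b}$, and symmetrically for $\vec{p}_s$. Substituting into the display above gives that $u_B(\pi;\pi)$ equals the negative of the claimed Sender Obj in \eqref{eq:bob-util}.

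Finally, I would invoke the constant-sum property of \texttt{ABA-commit}: Alice's utility equals $V - u_B(\pi;\pi)$, where $V = \Ex[R(\vec{p}_{A,B},E) - R(\vec{p},E)]$ is independent of $\pi$. Hence Alice's maximization over signaling schemes is equivalent to maximizing Sender Obj, which is exactly what the sender optimizes in the described \texttt{BP-private} instance, so the two optimal signaling schemes coincide. For a general convex $G$, the same argument goes through after replacing $[k]$ with the (possibly infinite) set of affine minorants of $G$, i.e., actions parameterized by tangent hyperplanes of $G$; this is the remark at the end of the theorem.

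The main obstacle is conceptual rather than technical: one must identify the correct \texttt{BP-private} instance by reading off the receiver's action set and utility from the max-of-affines decomposition of $G$. The only real technical subtlety is verifying that in the constructed persuasion instance the receiver's best-response value to any posterior $\vec{p}$ equals $G(\vec{p})$ --- which is immediate from the choice of $U_G$, but is the keystone that makes $u_B$ pop out in the desired form.
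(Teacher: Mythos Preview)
Your proposal is correct and follows essentially the same route as the paper: both represent the $k$-piecewise linear convex $G$ as $G(\vec{p})=\max_{i\in[k]}\sum_e U_G(i,e)\,p_e$, identify Bob's expected utility as $\Ex_{s,b} G(\vec{p}_{s,b})-\Ex_s G(\vec{p}_s)$, and then invoke the constant-sum structure so that Alice's maximization coincides with maximizing the stated sender objective. The only additional content in the paper's proof is an explicit derivation of $\Pr(e\mid s)$ and $\Pr(e\mid s,b)$ in terms of $\pi$ and $\mu$, which is ancillary to the reduction itself.
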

 
\begin{proof}
One key difference between the ABA game and Bayesian persuasion is that the receiver in BP is a decision maker who takes an action whereas both of the two players in the ABA game are not decision makers. To relate the ABA game to the BP model, our key insight  is that the ``receiver'' (i.e., a decision maker) in the ABA game is neither Alice nor Bob; Instead, he is implicitly encoded in the expected score function $G$, as described in the following fact.

% In this section, we give computationally efficient algorithms to compute the equilibrium of the ABA game with commitment. The equilibrium computation takes the following as input. We assume that the sets $\A, \B, \E$ are known, and the joint prior $\mu(a,b,e)$ is known for every $a \in \A, b \in B, e \in E$. We also assume that the expected score function $G$ is known. The desired output is the optimal signaling mechanism described by the set $\S$ and the probabilities $\pi(s|a)$ of sending signal $s \in \S$ conditional on Alice receiving $a \in \A$, or equivalently, the joint probabilities $\pi(s,a)$ of sending signal $s \in \S$ and Alice receiving $a \in \A$ subject to the constraint $\sum_{s \in \S} \pi(s,a) = \mu(a)$ for every $a \in \A$. Note that \jerry{Kong-Schoenebeck} proved that it is always possible to choose $|\S| \leq |\A|$, so both the input and the output are polynomial in $|\A|, |\B|, |\E|$, and we want the run time of the algorithm to also be polynomial in $|\A|, |\B|, |\E|$ as well. If we compute the approximately optimal signaling scheme with error $\epsilon$, then we also want the algorithm to also be polynomial in $1/\epsilon$, that is, a Fully Polynomial-Time Approximation Scheme (FPTAS).

\begin{fact}\label{lem:equivalence}
	For any $k$-piecewise-linear convex function $G: \Delta(\E) \to \RR$, there exists a decision making problem $U(i,e)$ which depends on a decision maker's action $i \in [k]$ and a random event $e \in \E$, such that $G(\post{}) = \max_{i \in [k]} \Ex_{E \sim \post{}} U(i,E) = \max_{i \in [k]} \sum_{e \in E} p_{e} \cdot U(i,e) $ for all $\post{} \in \Delta(\E)$. 
	
	Conversely, for any decision making problem $U(i,e)$  for $i \in [k]$ and $e \in \E$, the decision maker's maximum expected utility $\max_{ i  \in [k]} \Ex_{E \sim \post{}} U(i,E)$ on belief $\post{}$ is a $k$-piecewise-linear convex function in $\post{}$. 
\end{fact} 

It is easy to verify the second part of the fact. To see that the first part is true, since $G: \Delta(\E) \to \RR$ is convex and $k$-piece-wise linear, we know there exist $k$ linear functions: $r^i \cdot \post{} + b^i$ for $i = 1,\cdots,k$ ($r^i \in \RR^{\E}, b^i \in \RR $), such that $G(\post{}) = \max_{i \in [k]} [ r^i \cdot \post{} + b^i  ]$. Since $\sum_{e \in \E} p_e = 1$, by letting $U(i,e) = r^i_e + b^i$, we have $G(\post{}) = \max_{i \in [k]} \Ex_{E \sim \post{}} U(i,E)$, as desired.

Fact \ref{lem:equivalence} illustrates that $k$-piecewise linear convex functions are in one-to-one correspondence to decision making problems with $k$ actions. For any such $G$, we use $U_G(i,e)$ to denote the payoff structure of the corresponding decision making problem.  This allows us to view the ABA game as the following Bayesian persuasion problem. The \emph{receiver} is a decision maker, who wants to take an action $i \in [k]$ (recall that $k$ is the number of pieces of $G$) with utility $U_G(i,e)$ where $e \in E$. Since under commitment, Bob always reveals all his information to the decision maker. This can be equivalently viewed as if the decision maker is directly, and privately, informed with  Bob's signal $B$. As a result, Alice's optimal commitment problem is precisely to persuade such a privately informed decision maker to minimize Bob's expected utility, or equivalently, maximize the negative of Bob's expected utility which is  $\Ex_{s} G(\post{s}) - \Ex_{sb} G(\post{sb}) $ where $s$ is a signal realization of Alice's signaling scheme $\pi: \S \times \A \to [0,1]$.  This completes our reduction from $\texttt{ABA-Commit}$ to $\texttt{BP-Private}$. We now derive the concrete form of the sender's objective function. %  and Bob will both reveal information to the decision maker, and then the decision maker updates his belief and chooses his best action. This different view is crucial to the derivations in the next subsection. 

Given signaling scheme $\pi: \S \times \A \to [0,1]$ such that $\pi(s,a) = \Pr[S=s, A=a]$, signal $s$ will be sent with probability $\sum_{a \in \A} \pi(s,a)$. Upon receiving signal $s$, the decision maker updates his belief about $a$, as follows: 
\begin{equation}
\Pr( a|s) = \frac{\pi(s,a)}{ \sum_{a \in \A} \pi(s,a)}
\end{equation}
and thus infers a posterior belief about event $e$ as 
\begin{equation}\label{eq:pr-e-s}
\Pr(e|s) = \sum_{a \in \A} \Pr(e|a) \cdot \Pr(a|s) =  \frac{1}{ \sum_{a \in \A}  \pi(s,a)} \sum_{a \in \A}  \mu(e|a) \cdot \pi(s,a). 
\end{equation}

Based on this belief, the decision maker will take an optimal action $\hat{i} = \arg \max_{i \in [k]} \sum_{e} [U_G(i,e) \cdot \Pr(e|s)]$. Note that  $\max_{i \in [k]} \sum_{e \in \E} [U_G(i,e) \cdot \Pr(e|s)]$ is precisely $G(\post{s})$, where $\post{s}(e) = \Pr(e|s)$. % where $q$ satisfies $q_e = \Pr(e|s)$. 

Now that Bob further reveals his signal $b$, then the decision maker infers a different posterior belief $\Pr(e|s,b)$ given by (see Appendix \ref{ap:proof-eq-pr-s-eb} for details):
\begin{equation}\label{eq:pr-e-sb}
\Pr(e|s,b) = \frac{ \sum_{a \in \A} \mu(e|a,b) \cdot \pi(s,a) \cdot \mu(b|a)}{ \sum_{a \in A} \pi(s,a) \cdot \mu(b|a)}. 
\end{equation}

Based on this belief, the decision maker will take an optimal action $\hat{i} = \arg \max_{i \in [k]} \sum_{e \in \E} [U_G(i,e) \cdot \Pr(e|s,b)]$. Note that  $\max_{i \in [k]} \sum_{e \in \E} [U_G(i,e) \cdot \Pr(e|s)]$ is precisely $G(\post{sb})$, where $\post{sb}(e) = \Pr(e|s,b)$. % where $q$ satisfies $q_e = \Pr(e|s,b)$.

As a result, the sender's objective in our $\texttt{BP-private}$ instance (i.e., Alice's maximization objective) is the follows: 
\begin{equation}
\text{Sender Obj = }  \Ex_{s} \max_{ i \in [k]} \sum_{e \in E} [U_G(i,e) \cdot \Pr(e|s)]  - \Ex_{s,b} \max_{i \in [k]} \sum_{e \in E} [U_G(i,e) \cdot \Pr(e|s,b)]
\end{equation}
%So this problem is precisely a Bayesian persuasion problem where Alice is the sender who wants minimize the above utility, while the decision maker is the receiver with utility function $U_G$. 
\qed
\end{proof}
 
%\begin{proofsketch}
%One key difference between the ABA game and Bayesian persuasion is that the receiver in BP is a decision maker who takes an action whereas both of the two players in the ABA game are not decision makers. To relate the ABA game to the BP model, our key insight  is that the ``receiver'' (i.e., a decision maker) in the ABA game is neither Alice nor Bob; Instead, he is implicitly encoded in the expected score function $G$. $k$-piecewise linear convex functions are in one-to-one correspondence to decision making problems with $k$ actions. For any such $G$, we use $U_G(i,e)$ to denote the payoff structure of the corresponding decision making problem.  This allows us to view the ABA game as the following Bayesian persuasion problem. The \emph{receiver} is a decision maker, who wants to take an action $i \in [k]$ (recall that $k$ is the number of pieces of $G$) with utility $U_G(i,e)$ where $e \in E$. Since under commitment, Bob always reveals all his information to the decision maker. This can be equivalently viewed as if the decision maker is directly and privately informed with  Bob's signal $B$. As a result, Alice's optimal commitment problem is precisely to persuade such a privately informed decision maker to minimize Bob's expected utility, or equivalently, maximize the negative of Bob's expected utility which is equal to Sender Obj expression in the theorem by probability calculations.  \qed
%\end{proofsketch}

 \begin{remark}
 The $k$-piecewise linear assumption in our reduction is only for clarity and notational convenience. The reduction does hold for general convex function $G$, in which case the receiver may need to pick an action from an infinite set. 
 %Since any convex function $G$ is the maximum of all its supporting hyperplanes, one at each $\post{} \in \Delta(\E)$ in its domain. We can then view each $\post{} \in \Delta(\E)$ as a receiver action and receiver utility  $U_G(\post{},e) = G(\post{}) + \inprod{\nabla G(\post{})}{\delta^{e}-\post{}} $ for action $\post{}$ and event $e \in \E$ where $\delta^e$ is the base vector of event $e$.  Then the reduction follows similarly. 
 We refer the reader to Appendix \ref{ap:persuasion-general} for more details. 
    
 % We also believe that the equivalence between the ABA game with commitment and Bayesian Persuasion is of independent interest. Throughout this section, we assume that $G$ is $k$-piecewise linear. The piecewise linearity assumption allows us to give a computationally efficient algorithm to compute the signaling scheme and makes the connection between the ABA game and Bayesian persuasion the clearest. However, we can state the equivalence for general convex $G$, and the approach in this section can still be applied. When $G$ is $k$-piecewise linear, $G$ is a maximum of $k$ linear functions, which translates to a receiver with $k$ actions. When $G$ is a general convex function, $G$ is a maximum of all its supporting hyperplanes, one at each $\post{} \in \Delta(\E)$ in its domain. Using the same approach, the receiver now has infinitely many actions and the receiver's action space is $\Delta(\E)$. One can then write the decision maker's utility of taking action $\post{} \in \Delta(\E)$ if the event is $e \in \E$ as $U_G(\post{},e) = G(\post{}) + \inprod{\nabla G(\post{})}{\delta^{e}-\post{}} $. We can then write a linear program similar to that of the next subsection with variables $\pi(s,a)$ where Alice's reported signal space is $\S = \Delta(\E)^{\B \cup \{0\}}$. For more details, see Appendix \ref{ap:persuasion-general}. 
 \end{remark}

\subsection{A Direct Application of the Reduction}\label{subsec:k-linear-b-const-poly}

As a direction application of the reduction in Section \ref{sec:aba-bp:reduction}, we now show how to use this  connection to compute Alice's optimal commitment when $|\B|$ is constant and the expected score function $G$ is $k$-piecewise linear. Our algorithm is polynomial in $k$ but exponential in the constant $|\B|$, as described in the following theorem. 

%In this subsection, we will prove the following main theorem.

\begin{theorem}\label{thm:k-linear-poly}
When $G$ is $k$-piecewise linear, there exists a $\emph{poly}(k^{|\B|}, |\A|, |\E|)$-time algorithm that computes Alice's optimal signaling scheme to commit to.
\end{theorem}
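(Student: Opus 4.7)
The plan is to leverage the reduction of Theorem~\ref{thm:aba-to-bp-reduction} together with a revelation-principle argument to encode Alice's problem as a linear program of polynomial size. Since $G$ is $k$-piecewise linear, Fact~\ref{lem:equivalence} rewrites the sender objective~\eqref{eq:bob-util} as
\[ \Ex_{s}\max_{j\in[k]}\sum_{e}U_G(j,e)\Pr(e\mid s)\;-\;\Ex_{s,b}\max_{i\in[k]}\sum_{e}U_G(i,e)\Pr(e\mid s,b), \]
i.e., a difference of two expectations of the maximum of $k$ linear functions of the posterior. The bottleneck for an LP formulation is precisely linearizing these two $\max$ operators.

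My approach is to restrict, without loss of generality, to signaling schemes over the augmented signal space $\S=[k]\times[k]^{|\B|}$, so that each signal $s=(j,\mathbf{i})$ encodes a recommended first-stage action $j\in[k]$ and a recommended second-stage action $\mathbf{i}(b)\in[k]$ for every Bob-type $b\in\B$; there are $k^{|\B|+1}$ such signals in total. The revelation-principle step is standard: given any ABA-Commit signaling scheme $\pi^{*}$ and any original signal $s'$ in its support, relabel $s'$ as $(j^{*},\mathbf{i}^{*}(\cdot))$ where $j^{*}\in\argmax_{j}\sum_{e}U_G(j,e)\Pr(e\mid s')$ and $\mathbf{i}^{*}(b)\in\argmax_{i}\sum_{e}U_G(i,e)\Pr(e\mid s',b)$, then merge identical labels. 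The sender's objective is preserved and the recommended actions are optimal for the induced posteriors by construction.

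With this signal space fixed, I would introduce variables $\pi(s,a)\geq 0$ for $s\in[k]^{|\B|+1}$ and $a\in\A$, subject to the marginal constraints $\sum_{s}\pi(s,a)=\mu(a)$. For each $s=(j,\mathbf{i})$, each $b\in\B$, and each alternative action $i'\in[k]$ I impose the second-stage incentive-compatibility (IC) constraint
\[ \sum_{e}U_G(\mathbf{i}(b),e)\Pr(e,s,b)\;\geq\;\sum_{e}U_G(i',e)\Pr(e,s,b), \]
and analogously a first-stage IC constraint pitting $j$ against each $j'\in[k]$ on the unconditioned posterior. The quantities $\Pr(e,s)=\sum_{a}\mu(e,a)\pi(s,a)/\mu(a)$ and $\Pr(e,s,b)=\sum_{a}\mu(e,a,b)\pi(s,a)/\mu(a)$ follow from~\eqref{eq:pr-e-s} and~\eqref{eq:pr-e-sb} and are linear in $\pi$. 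When IC holds, the recommended actions realize the corresponding $\max$, so the objective simplifies to the linear expression
\[ \sum_{s=(j,\mathbf{i})}\sum_{e}U_G(j,e)\Pr(e,s)\;-\;\sum_{s=(j,\mathbf{i})}\sum_{b}\sum_{e}U_G(\mathbf{i}(b),e)\Pr(e,s,b). \]
The resulting LP has $k^{|\B|+1}\cdot|\A|$ variables and $O(k^{|\B|+2}\cdot|\B|+|\A|)$ constraints, so its size is $\poly(k^{|\B|},|\A|,|\E|)$ (since $k\leq k^{|\B|}$ whenever $|\B|\geq 1$). Any polynomial-time LP solver then returns an exact optimal signaling scheme in the claimed running time.

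The main obstacle, beyond routine bookkeeping, is cleanly justifying the revelation principle for the mixed objective in~\eqref{eq:bob-util}: its first term depends on the posterior conditioned only on $s$ while its second depends on the posterior conditioned on $(s,b)$. The standard private-Bayesian-persuasion revelation principle (in the style of Kolotilin \emph{et al.}) yields a signal space of size $k^{|\B|}$ for the second term alone; the extra coordinate $j$ and its first-stage IC constraint are exactly what extend the argument to the first term. Correctness then reduces to two matching inequalities: ``LP value $\leq$ ABA-Commit value'' follows because the IC constraints force the LP objective to equal the true objective of the induced scheme, and ``LP value $\geq$ ABA-Commit value'' follows from the relabeling argument applied to any optimal $\pi^{*}$.
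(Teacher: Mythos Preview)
Your proposal is correct and follows essentially the same route as the paper: the paper likewise invokes the reduction to $\texttt{BP-Private}$, proves a revelation principle showing that $k^{|\B|+1}$ signals of the form $\{i_0\}\cup\{i_b\}_{b\in\B}$ suffice (exactly your $(j,\mathbf{i})$), verifies that the obedience constraints are linear in $\pi(s,a)$ via~\eqref{eq:pr-e-s} and~\eqref{eq:pr-e-sb}, and writes down the same LP. The one place where the paper is slightly more explicit is in the merging step of the revelation principle---it checks directly that when two signals with identical recommended actions are combined, the linearity of the obedience inequalities in $\pi(s,a)$ guarantees the merged signal still induces the same best responses---which is precisely the point you flagged as the main obstacle.
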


\begin{proofsketch}
See Appendix \ref{ap:proof-k-linear-poly} for the full proof; we give a sketch here.
First, we reframe the problem as a Bayesian persuasion problem with a privately informed receiver. Next, we prove the revelation principle adapted to our problem. We show that if two signals lead to the same decision-maker best-response under all values of Bob's signal $b \in \B$, then by merging the two signals, the decision maker's best response is the same. This is true because the constraints, expressed mathematically, are linear in the probabilities $\pi(s,a)$. This yields the following revelation principle: we can restrict attention to signaling schemes such that each signal is a set of obedient action recommendations, each corresponding to one possible value of private information (Bob's signal realization $b$).

 The optimal signaling scheme maximizes Alice's utility subject to the following incentive compatibility constraints:  for any action and signal realization, the receiver prefers the recommended action to any other action. Alice's utility and the incentive compatibility constraints depend on $\Pr(e|s)$ and $\Pr(e|s,b)$, and these posterior probabilities can be computed in terms of the prior $\mu$ and the signaling scheme $\pi(s,a)$. The resulting program is a linear program in $\pi$ with polynomially many variables and constraints, so it can be solved in polynomial time.%\footnote{Bergemann and Morris \cite{Bergemann14Bayes} called such a signaling scheme \textit{Bayes Correlated Equilibrium} and showed that it can be computed by a linear program. Our argument and the linear program in the rest of this section is similar in spirit to theirs.} 
 \qed
\end{proofsketch}

In the introduction, we discussed the connection between $\texttt{ABA-commit}$ with informational substitutes and complements. Two signals are strong substitutes if the optimal signaling scheme is to always reveal all information, and two signals are strong complements if the optimal signaling scheme is to always reveal no information. We can use the algorithm in this section to compute the signaling scheme exactly. Therefore, the following corollary is immediate. 

\begin{corollary}
If $G$ is $k$-piecewise linear, then there exists a $\emph{poly}(k^{|\B|},|\A|,|\E|)$-time algorithm that tests whether two signals $A$ and $B$ are strong substitutes, complements, or neither.
\end{corollary}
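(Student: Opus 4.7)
The proof is a direct consequence of Theorem \ref{thm:k-linear-poly}. According to the characterization stated immediately before the corollary, the signals $(A,B)$ are strong substitutes exactly when Alice's optimal commitment is the fully-revealing scheme $\pi_{\text{full}}$ in which $S = A$, and they are strong complements exactly when the optimum is the non-revealing scheme $\pi_{\text{none}}$ sending a single constant signal. Hence the plan is simply: (i) compute the optimal value $V^*$ via Theorem \ref{thm:k-linear-poly}; (ii) directly compute the values $V_{\text{full}}$ and $V_{\text{none}}$ of the two canonical schemes; (iii) compare.

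For step (ii), evaluating Alice's objective $\Ex_{s} G(\post{s}) - \Ex_{s,b} G(\post{s,b})$ at either $\pi_{\text{full}}$ or $\pi_{\text{none}}$ reduces to computing the induced posteriors via Bayes' rule (using the formulas (\ref{eq:pr-e-s}) and (\ref{eq:pr-e-sb}) derived in the proof of Theorem \ref{thm:aba-to-bp-reduction}), evaluating the $k$-piecewise-linear function $G$ at each posterior, and taking expectations over the appropriate joint distribution. Evaluating $G$ at any point takes $O(k|\E|)$ time, and the number of posteriors involved is polynomial in $|\A|$, $|\B|$, and $|\E|$, so both of these direct computations finish in $\text{poly}(k,|\A|,|\B|,|\E|)$ time.

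For step (iii), return ``strong substitutes'' if $V^* = V_{\text{full}}$, ``strong complements'' if $V^* = V_{\text{none}}$, and ``neither'' otherwise. The overall running time is $\text{poly}(k^{|\B|},|\A|,|\E|)$, dominated by the subroutine of Theorem \ref{thm:k-linear-poly}. The only minor subtlety, and the sole possible obstacle, is that equality between $V^*$ and each candidate must be checked exactly rather than up to numerical error; this is not actually a problem because, under rational input, the linear program inside Theorem \ref{thm:k-linear-poly} has a rational optimum, so all relevant values are rational and exact equality can be decided in polynomial time.
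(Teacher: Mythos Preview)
Your proposal is correct and takes essentially the same approach as the paper: the paper simply notes that the algorithm of Theorem~\ref{thm:k-linear-poly} computes the optimal signaling scheme exactly, so one can test directly whether the optimum coincides with the full-reveal or no-reveal scheme, and declares the corollary ``immediate.'' Your write-up is slightly more detailed (comparing optimal \emph{values} rather than schemes, and noting the rationality point for exact comparison), but the underlying idea is identical.
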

%\haifeng{When $G$ is continuous and satisfies some continuity assumption, we should be able to get a FPTAS by using piece-wise linear function to approximate $G$. }

%\haifeng{I think we can move this discussion to before the revelation principle lemma where we just established the connection between ABA and persuasion. This discussion says that the connection goes beyond piecewise linear $G$.  } \jerry{I think it makes more sense that the discussion is here since we also write the linear program analogous to the LP above, and that LP discussion won't make sense right before the revelation principle lemma. It essentially says we can redo the entire section with general $G$ and get an LP, except that the LP has a continuum of variables and constraints.}

\section{FPTAS for Different Parameter Regimes}\label{sec:fptas-regimes}
In this section, we develop Fully Polynomial Time Approximation Schemes (FPTAS) for the ABA game with commitment for different parameter regimes. These results cover a wider range of settings, and in particular, strictly generalize the main result of Kong and Schoenebeck \cite{kong2018optimizing}. Moreover, our algorithm is much simpler than that in \cite{kong2018optimizing} and is inspired by ideas that have also been used in the previous literature of algorithmic Bayesian persuasion.

While we do not use the explicit correspondence with the Bayesian persuasion instance developed in Section \ref{sec:aba-bp} here, we use key analytical techniques from the persuasion literature. Namely, the signaling scheme can be equivalently viewed as a distribution of posteriors and the only constraint on that distribution is the \emph{Bayes-plausibility} constraint: the expectation of the posteriors equal the prior. 
%In subsections \ref{subsec:alice-constant} and \ref{subsec:event-bob-constant} we view the signaling schemes as distributions over $\Delta(\A)$ and $\Delta(\E \times \B)$ respectively. 
We then show that under a Lipschitz-like constraint on $G$, a small perturbation of the posterior leads to a small perturbation of Alice's payoff. We can therefore discretize the space of posteriors within $\epsilon$ precision and show that there exists an approximately optimal signaling mechanism whose induced posteriors lie only on those grid points. When the total number of grid points are polynomially bounded, we obtain efficient algorithms. This idea has been employed in algorithmic persuasion (e.g., \cite{cheng2015mixture,bhaskar2016hardness}). %That distribution of posteriors can be computed by linear programming.

%\jerry{should we have local holder continuity and the proof as a separate subsection? Jerry doesn't think so since it's cleaner to have two parameter regimes be two subsections and no distraction. So we have the discussion and the proof before the two subsections}

%\subsubsection*{Regularity Condition on Expected Score Function}

We start by defining the continuity condition we need on the expected score function $G$.

\begin{definition}[Local H\"{o}lder Continuity]
A function $G: \RR^n \to \RR$ is \emph{$(\alpha,\beta)$-locally H\"{o}lder continuous} if there exists $\alpha>0, \beta \in (0,1]$ and some $c \in (0,1)$ such that $|G(\vec{x}) - G(\vec{y})| \leq \alpha |\vec{x}-\vec{y}|^{\beta}$ for any $\vec{x},\vec{y}$ such that $|\vec{x}-\vec{y}| \leq c$.
\end{definition}

%We now show that our local H\"{o}lder continuity condition is weaker than Kong and Schoenebeck's condition on $G$. So every function that satisfies their condition also satisfies ours, including quadratic and log scoring rules. 
%=======

Note that local H\"{o}lder continuity  is a natural and weak continuity assumption, which holds for almost any reasonable scoring rule. In particular, it is weaker than the standard H\"{o}lder continuity, which requires the above condition to hold for any $\vec{x}, \vec{y}$, not only those with $|\vec{x}-\vec{y}| \leq c$.  H\"{o}lder continuity is then weaker than the Lipschitz continuity which corresponds to the  case of $\beta = 1$. %Here, we allow $\beta < 1$ as long as it is a constant. This is a weaker condition than local Lipschitz continuity. 
Moreover, we will see later that $\alpha$ does not have to be an absolute constant; only that $\alpha$ is polynomial-sized is enough for an FPTAS. 

To obtain an FPTAS for the case with constant $|\A|$, Kong and Schoenebeck \cite{kong2018optimizing} defined another notion of continuity of  $G$, which they call \emph{niceness} condition formally described as follows. It turns out that \emph{niceness} condition  is a stronger requirement than  the local H\"{o}lder continuity.  So any function satisfying their condition also satisfies ours, including quadratic and log scoring rules.

\begin{definition}[Niceness Condition \cite{kong2018optimizing}]
	A function $G: \Delta_n \to \RR$ is $\lambda$\emph{-nice} if there exists a function $g: [0,1] \to \RR$ such that $G(\vec{x}) = \sum_{i} g(x_i)$ for every $\vec{x} \in \Delta_n$, $g(0)=g(1)=0$, $g$ is convex, and there exists a constant $\lambda \in (0,1)$ such that for sufficiently small $\epsilon$, $\max(|g(\epsilon),g(1-\epsilon)| \leq \epsilon^{\lambda}$.
\end{definition}

\begin{proposition}\label{prop:nice-implies-holder}
	Any function that is $\lambda$-nice for some $\lambda \leq 1$ is $(n^{1-\lambda},\lambda)$-locally H\"{o}lder continuous.\footnote{Note that if $\lambda > 1$ in the $\lambda$-nice condition, or if $\beta > 1$ in the $(\alpha,\beta)$-local H\"{o}lder continuity condition, then $G$ is identically zero so we are not interested in those trivial cases.}
\end{proposition}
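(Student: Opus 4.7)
The plan is to reduce the multivariate Hölder estimate for $G$ to a one-dimensional estimate on $g$ and then aggregate across coordinates by a concavity inequality. First I would note that separability plus the triangle inequality give $|G(\vec{x}) - G(\vec{y})| \leq \sum_{i=1}^n |g(x_i) - g(y_i)|$, so the task reduces to showing $|g(x) - g(y)| \leq |x-y|^\lambda$ for $x,y \in [0,1]$ whenever $|x-y|$ is sufficiently small, after which the factor $n^{1-\lambda}$ will emerge from the coordinate sum.

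For the scalar estimate I would set $h(t) := -g(t)$. Since $g$ is convex with $g(0)=g(1)=0$, the function $h$ is nonnegative, concave on $[0,1]$, and vanishes at both endpoints, while niceness yields $h(\epsilon), h(1-\epsilon) \leq \epsilon^\lambda$ for small $\epsilon$. A concave function with $h(0)=0$ is subadditive on $[0,1]$, since $h(a) \geq \tfrac{a}{a+b}h(a+b)$ and $h(b) \geq \tfrac{b}{a+b}h(a+b)$ by concavity; the reflected function $\tilde h(t) := h(1-t)$ is also concave with $\tilde h(0)=0$, hence subadditive on $[0,1]$. Let $m$ be a maximizer of $h$, set $\delta = |x-y|$, and split by the location of $x,y$ relative to $m$: if $0 \leq x < y \leq m$, subadditivity of $h$ gives $0 \leq h(y) - h(x) \leq h(\delta) \leq \delta^\lambda$; if $m \leq x < y \leq 1$, subadditivity of $\tilde h$ gives $0 \leq h(x) - h(y) \leq h(1-\delta) \leq \delta^\lambda$; and if $x < m < y$, both $h(m)-h(x) \leq h(m-x) \leq (m-x)^\lambda \leq \delta^\lambda$ and $h(m)-h(y) \leq h(1-(y-m)) \leq (y-m)^\lambda \leq \delta^\lambda$, so $|h(x) - h(y)|$, being the absolute difference of two numbers in $[0,\delta^\lambda]$, is itself at most $\delta^\lambda$. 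This yields $|g(x) - g(y)| \leq |x-y|^\lambda$ once $|x-y| \leq c$ for some threshold $c$ inherited from the niceness hypothesis.

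Aggregating, concavity of $t \mapsto t^\lambda$ for $\lambda \leq 1$ (Jensen's inequality) gives
\[ \sum_{i=1}^n |x_i - y_i|^\lambda \leq n^{1-\lambda}\Bigl(\sum_{i=1}^n |x_i - y_i|\Bigr)^\lambda, \]
so combining with the coordinate-wise bound yields $|G(\vec{x}) - G(\vec{y})| \leq n^{1-\lambda}\,|\vec{x}-\vec{y}|^\lambda$ in the $\ell_1$ norm, establishing $(n^{1-\lambda}, \lambda)$-local Hölder continuity. The main obstacle I anticipate is precisely the crossing case around the maximizer: a naive triangle-inequality bound through $m$ would lose a factor of $2$, but the observation that $h(m) - h(x)$ and $h(m) - h(y)$ are both nonnegative and individually controlled by $\delta^\lambda$ (since $m-x, y-m \leq \delta$) avoids such a loss and matches the constant $n^{1-\lambda}$ stated in the proposition. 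A minor point to verify is just that the smallness threshold $c$ can be chosen uniformly so that all invocations of the niceness inequality (at arguments $\delta$, $1-\delta$, $m-x$, and $1-(y-m)$) are valid simultaneously.
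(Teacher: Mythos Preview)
Your proof is correct, and the aggregation step via concavity of $t\mapsto t^\lambda$ is exactly what the paper does. The scalar step, however, is considerably more elaborate than necessary. The paper dispatches $|g(x)-g(y)|\leq |x-y|^\lambda$ in one line by using that, for a convex function, the increment $t\mapsto g(t+\epsilon)-g(t)$ is nondecreasing on $[0,1-\epsilon]$: hence with $\epsilon=|x-y|$ one has
\[
-\epsilon^\lambda \leq g(\epsilon)-g(0)\ \leq\ g(x)-g(y)\ \leq\ g(1)-g(1-\epsilon)\ \leq\ \epsilon^\lambda,
\]
using $g(0)=g(1)=0$, $g\leq 0$, and the niceness bounds on $|g(\epsilon)|,|g(1-\epsilon)|$. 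This avoids passing to $h=-g$, invoking subadditivity, locating a maximizer $m$, and splitting into the three cases you treat. Your crossing-case argument is clever and does recover the sharp constant, but the monotone-increment observation subsumes all three cases at once and also makes the uniform choice of the smallness threshold $c$ immediate (you only ever evaluate $g$ at $\epsilon$ and $1-\epsilon$, not at the auxiliary arguments $m-x$ and $1-(y-m)$).
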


The niceness condition is a relatively strong requirement, especially as requires the expected score function $G$ to be separable in all arguments $G(x) = \sum_{i} g(x_i)$. It happens to hold for log and quadratic scoring rules, but it is certainly not a property we generally expect to hold; the spherical scoring rule has $G(x) = (\sum_{i} x_i^2)^{1/2}$ which is not separable. % In contrast, our condition is a “smoothness” condition which holds for any “reasonable” scoring rule. %Second, given that we want to discretize the probability space, our smoothness condition is a natural and interpretable condition to bound the approximation errors of discretization, which is easy to use and is potentially applicable in other contexts. 

\subsection{Constant Number of Alice's Signal Outcomes}\label{subsec:alice-constant}

We now consider the setting of \cite{kong2018optimizing} with constant size of Alice's signal space, i.e., $d \equiv |\A|$  is a constant. Kong and Schoenebeck \cite{kong2018optimizing}  prove that when $G$ satisfies the niceness condition, there is an FPTAS for this case. Here we exhibit another FPTAS for this setting based on the aforementioned idea from persuasion but under the (weaker) assumption of local H\"{o}lder continuity.  This thus strictly generalizes the result in \cite{kong2018optimizing}.  

Let $\Delta_{d} \equiv \Delta(\A)$ denote the set of all possible distributions over signal realizations of $A$. Let $\vec{p} \in \Delta_d$ denote a generic posterior distribution over Alice's signal space. Throughout we always use $|\vec{z}| = \sum_{i} |z_i|$ to denote the $l_1$ norm of a vector $\vec{z}$. For a function $f$, denote by $\{f(e)\}_{e \in \E}$ a vector of dimension $|\E|$ whose entries are $f(e)$ for $e \in \E$. We prove the following theorem, whose proof is deferred to Appendix \ref{ap:proof-a-const-apx}. 
\begin{theorem}\label{thm:a-const-apx}
Assume that $|\A|$ is a constant, and the $G$ function is  $(\alpha,\beta)$-locally H\"{o}lder continuous for some $\alpha, \beta > 0$ and bounded within $[-L,L]$ for some $L$. Then there exists a $\poly(|\B|,|\E|,1/\delta,L)$-time algorithm that computes Alice's $\delta$-optimal signaling scheme. 
\end{theorem}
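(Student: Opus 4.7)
The plan is to recast Alice's signaling problem in the posterior space over her own private signal $A$ and exploit that this space is constant-dimensional to discretize it at polynomial cost and solve the resulting linear program.

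First I would adopt the standard persuasion viewpoint already flagged at the start of Section~\ref{sec:fptas-regimes}: a signaling scheme $\pi(s,a)$ is equivalent to a distribution $\{(\lambda_s,\vec{q}_s)\}_s$ over posteriors $\vec{q}_s \in \Delta(\A)$, with weights $\lambda_s = \Pr(S{=}s)$ subject only to the Bayes-plausibility constraint $\sum_s \lambda_s \vec{q}_s = \vec{q}_{\mathrm{prior}}$, where $\vec{q}_{\mathrm{prior}}(a) = \sum_{e,b}\mu(e,a,b)$. The derived objects $\vec{p}(\vec{q}) \in \Delta(\E)$, $\Pr(b\mid \vec{q})$, and $\vec{p}_b(\vec{q}) \in \Delta(\E)$ are all explicit (linear, respectively ratio-of-linear) functions of $\vec{q}$ and the common prior $\mu$ via the formulas \eqref{eq:pr-e-s}--\eqref{eq:pr-e-sb}, and Alice's expected utility as computed in Section~\ref{sec:aba-bp:reduction} rewrites as $\Ex_s[F(\vec{q}_s)]$ with
\[ F(\vec{q}) \;=\; G(\vec{p}(\vec{q})) \;-\; \sum_{b \in \B} \Pr(b\mid \vec{q})\, G(\vec{p}_b(\vec{q})). \]
Since $d := |\A|$ is constant, $\Delta(\A)$ has constant dimension $d-1$, so a uniform $\epsilon$-mesh $\mathcal{N}_\epsilon \subseteq \Delta(\A)$ (e.g., all rational points with denominator $\lceil 1/\epsilon \rceil$) has only $O((1/\epsilon)^{d-1})$ points. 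I would then write an LP with variables $\{x_{\vec{q}}\}_{\vec{q} \in \mathcal{N}_\epsilon}$, objective $\max \sum_{\vec{q}} x_{\vec{q}} F(\vec{q})$, and constraints $x_{\vec{q}} \ge 0$, $\sum_{\vec{q}} x_{\vec{q}} = 1$, $\sum_{\vec{q}} x_{\vec{q}}\, \vec{q} = \vec{q}_{\mathrm{prior}}$. Each coefficient $F(\vec{q})$ is computable in $\poly(|\B|,|\E|)$ time from $\mu$ and $G$, and the LP has $\poly(1/\epsilon)$ variables and $d+1$ equality constraints, hence is solvable within the target budget.

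To justify that this grid-restricted LP is within $\delta$ of the continuous optimum, I would take an optimal scheme with posteriors $\{\vec{q}_s\}$ and decompose each $\vec{q}_s$ as a convex combination of at most $d$ vertices of $\mathcal{N}_\epsilon$ drawn from a local triangulation, each within $O(\epsilon)$ of $\vec{q}_s$. Mixing back through the outer distribution $\{\lambda_s\}$ yields a grid-supported distribution with mean exactly $\vec{q}_{\mathrm{prior}}$, hence feasible in the LP; the utility loss is then bounded by the modulus of continuity of $F$ on $\Delta(\A)$ against $\epsilon$-perturbations.

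The main obstacle, and the step I would spend most of the care on, is controlling this modulus. The summand $G(\vec{p}(\vec{q}))$ is benign: $\vec{p}(\vec{q}) = \sum_a \mu(\cdot\mid a)\, q(a)$ is $O(1)$-Lipschitz, so local H\"older continuity of $G$ gives an $O(\alpha\epsilon^\beta)$ change. The tricky piece is $\Pr(b\mid\vec{q})\, G(\vec{p}_b(\vec{q}))$, because $\vec{p}_b(\vec{q})$ is a ratio of linear functions that is ill-conditioned as $\Pr(b\mid\vec{q}) \to 0$. I would handle this by a threshold argument: fix $\tau = \Theta(\delta / (L|\B|))$, and for each $b$ split cases. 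If $\Pr(b\mid\vec{q}) \le \tau$ on either side of the rounding, the whole contribution is at most $L\tau$, which sums to at most $\delta/2$ across $\B$. Otherwise $\vec{q}\mapsto\vec{p}_b(\vec{q})$ is $O(1/\tau)$-Lipschitz, and H\"older continuity gives an $O(\alpha(\epsilon/\tau)^\beta)$ change. Setting $\epsilon = \mathrm{poly}(\delta, 1/L, 1/|\B|, 1/\alpha)$ raised to $1/\beta$ bounds the total loss by $\delta$ while keeping $|\mathcal{N}_\epsilon| = (1/\epsilon)^{d-1} = \poly(|\B|, 1/\delta, L)$, yielding the claimed $\poly(|\B|, |\E|, 1/\delta, L)$ running time.
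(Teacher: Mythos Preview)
Your proposal is correct and follows essentially the same route as the paper: move to the posterior space over $\A$, bound the modulus of continuity of Bob's (negated) utility via a threshold argument on $\Pr(b\mid\vec{q})$, discretize $\Delta(\A)$ to a polynomial-size grid, and solve the resulting Bayes-plausible LP. The only cosmetic difference is that the paper obtains the unbiased grid decomposition by empirical sampling plus Hoeffding (their ``$K$-uniform'' lemma), whereas you use a local simplicial triangulation; both preserve Bayes plausibility exactly and keep each surrogate posterior within $O(\epsilon)$ of the original, so the analyses coincide thereafter.
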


\begin{proofsketch}
Recall that Alice's goal is to minimize Bob's expected utility. Let $\vec{w} \in \Delta(\A)$ be the posterior over Alice's signal space induced by her signal $s$. That is, $w_a = \Pr(a|s)$, where $w_a$ is the probability of $a \in \A$ assigned by $\vec{w}$. Let $u_B(\vec{w})$ denote Bob's utility as a function of Alice's report $\vec{w}$. We can do probability calculations to express $u_B(\vec{w})$ explicitly in terms of $\vec{w}$ and the prior $\mu$. Using this expression, we show that the value of $u_B$ does not change much if $\vec{w}$ does not change much in $l_1$ norm sense.
%$u_B(\vec{w})$ is ``continuous'' in $\vec{w}$, that is, for $\vec{w}$ and $\vec{w'}$ such that $|\vec{w}-\vec{w'}|$ is bounded, then $|u_B(\vec{w})-u_B(\vec{w'})|$ is also bounded. 
This is true because, from $\vec{w}$ to $\vec{w'}$, we can bound the absolute changes in expressions inside the $G(\cdot)$, and we can also bound the absolute changes in coefficients in front of $G()$, so triangle's inequality and the local H\"{o}lder continuity of $G$ allow us to conclude that the absolute change $|u_B(\vec{w})-u_B(\vec{w'})|$ is also bounded.

Now we define a $K$-uniform distribution to be a distribution whose entries are all multipliers of $1/K$, and let $\Delta_d(K)$ be the set of all $K$-uniform distributions. First, we show that we can unbiasedly approximate $\vec{w}$ by a distribution over this uniform grid if the grid is fine enough. More formally, for $K  \geq  \frac{\log(2d/\epsilon)d^2}{2\epsilon^2}$, there exists a distribution $\tilde{w}$ over $\Delta_d(K)$ such that $\mathbb{E}(\tilde{\vec{w}})=\vec{w}$, and $|\tilde{\vec{w}}-\vec{w}| \leq \epsilon$ with probability at least $1 - \epsilon$.  The result follows by letting $\tilde{\vec{w}}$ be an empirical average of $K$ samples from $\vec{w}$ and applying Hoeffding's inequality. 

We can then use this grid approximation result to prove the next key step, that there always exists an approximately optimal signaling scheme which is a decomposition over $K$-uniform distributions. This is true because the optimal signaling scheme is a distribution of posteriors, and for each posterior $\vec{w}_j$, we can replace it with its $\tilde{\vec{w}}$. Since $\vec{w}_j$ and $\tilde{\vec{w}}$ are close, $u_B(\vec{w}_j)$ and $\mathbb{E} u_B (\tilde{\vec{w}})$ are also close, and this is true for all $j \in \mathcal{J}$, so by replacing every posterior in the optimal signaling scheme with its uniform grid approximation, we get a signaling scheme that is approximately optimal %(because it has approximately the same utility as the optimal signaling scheme) 
whose posteriors are all $K$-uniform.

Lastly, there are only $\mathcal{O} (K^d)$ many posteriors that are $K$-uniform, so computing the approximately optimal signaling scheme in the previous paragraph reduces to solving an LP with one probability weight variable on each such posterior $\pi(\vec{w})$ for $\vec{w} \in \Delta_d(K)$, subject to Bayes plausibility, and the LP can be solved in polynomial time. We are done.

\qed
\end{proofsketch}

\subsection{Constant Number of Event Outcomes and Bob's Signal Outcomes}\label{subsec:event-bob-constant}

Next we exhibit an FPTAS for another parameter regime:  both $n_E \equiv |\E|$ and $n_B \equiv |\B|$ are constant. The proof uses the same technique as in the previous section, and can be found in Appendix \ref{ap:proof-eb-const-apx}. The key idea is that Alice's signaling scheme can be viewed equivalently as a distribution over posterior distributions $\mathbf{v} \in \Delta(\E \times \B)$ jointly over the event and the Bob's private signal, and that this distribution captures \textit{all} of the information needed.   Compared to Theorem \ref{thm:k-linear-poly}, this result does not require $k$-piecewise linearity of $G$ but requires that $|\E|$ is a constant. Moreover, this result is an FPTAS whereas Theorem \ref{thm:k-linear-poly} gives an exact algorithm. 

\begin{theorem}\label{thm:eb-const-apx}
Assume that $|\E|$ and $|\B|$ are constants, and the $G$ function is  $(\alpha,\beta)$-locally H\"{o}lder continuous for some $\alpha, \beta > 0$ and bounded within $[-L,L]$ for some $L$. Then there exists a $\poly(|\A|,1/\delta,L)$-time algorithm that computes Alice's $\delta$-optimal signaling scheme. 
\end{theorem}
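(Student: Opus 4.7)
The plan is to mirror the FPTAS strategy of Theorem~\ref{thm:a-const-apx}, but discretize the joint posterior simplex $\Delta(\E\times\B)$ instead of $\Delta(\A)$. The key observation is that both posteriors entering Bob's score, $\vec{p}_s(e)=\sum_b\mathbf{v}_s(e,b)$ and $\vec{p}_{s,b}(e)=\mathbf{v}_s(e,b)/\sum_{e'}\mathbf{v}_s(e',b)$, are recoverable from the joint posterior $\mathbf{v}_s(e,b):=\Pr(E{=}e,B{=}b\mid S{=}s)$ alone. Hence Alice's expected utility (equivalently Bob's, by the constant-sum property) depends on her signaling scheme only through the induced Bayes-plausible distribution $\rho$ over $\mathbf{v}\in\Delta(\E\times\B)$, and this simplex has the constant dimension $|\E|\cdot|\B|$, which is what makes the discretization tractable.

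First I would establish a continuity estimate of the form $|u_B(\mathbf{v})-u_B(\mathbf{v}')|\leq \poly(L,\alpha)\cdot\lVert\mathbf{v}-\mathbf{v}'\rVert_1^{\Omega(\beta)}$ for Bob's per-signal contribution $u_B(\mathbf{v})=\sum_b(\sum_{e'}\mathbf{v}(e',b))\,G(\vec{p}_{\mathbf{v},b})-G(\sum_b\mathbf{v}(\cdot,b))$, by the same triangle-inequality calculation as in Theorem~\ref{thm:a-const-apx}: small $\ell_1$-changes in $\mathbf{v}$ translate into small changes in the arguments of $G$ after marginalization and column renormalization (with a case split for values of $b$ whose induced marginal is tiny, using $|G|\leq L$), and then local H\"older continuity of $G$ finishes the estimate.

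Next I would apply the Hoeffding-based grid-rounding lemma of Theorem~\ref{thm:a-const-apx} to the $|\E||\B|$-dimensional simplex: for $K=\poly(|\E||\B|,\log(1/\epsilon)/\epsilon^2)$, any $\mathbf{v}\in\Delta(\E\times\B)$ equals the expectation of some distribution $\mathcal{D}_\mathbf{v}$ on the $K$-uniform grid with $\lVert\tilde{\mathbf{v}}-\mathbf{v}\rVert_1\leq\epsilon$ with probability $\geq 1-\epsilon$, so replacing every posterior in the optimal $\rho^*$ by its grid decomposition gives an approximately optimal Bayes-plausible distribution supported on $K$-uniform grid points for a suitable $\epsilon=\poly(\delta,1/L,1/\alpha)$. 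I would then solve the LP with nonnegative variables $y(\mathbf{v},a)$ --- one per pair of a $K$-uniform grid point $\mathbf{v}$ and $a\in\A$ --- subject to $\sum_{\mathbf{v}} y(\mathbf{v},a)=\mu(a)$ for each $a$ (marginal on $A$) and $\sum_a \mu(e,b\mid a)\,y(\mathbf{v},a)=\mathbf{v}(e,b)\sum_{a'} y(\mathbf{v},a')$ for each $(\mathbf{v},e,b)$ (posterior-matching), maximizing the linear objective $\sum_{\mathbf{v}} f(\mathbf{v})\sum_a y(\mathbf{v},a)$, where $f$ is Alice's per-signal utility. The LP has $|\A|\cdot K^{|\E||\B|}=\poly(|\A|,1/\delta,L)$ variables and is solvable in polynomial time.

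The main technical obstacle is realizability: not every Bayes-plausible distribution on $\Delta(\E\times\B)$ arises from some scheme $\pi:\S\times\A\to[0,1]$, since reachable joint posteriors must lie in $\mathcal{V}=\mathrm{conv}\{\mu(\cdot,\cdot\mid a):a\in\A\}$. The posterior-matching constraint automatically forces $y(\mathbf{v},\cdot)=\vec{0}$ whenever $\mathbf{v}\notin\mathcal{V}$, so every feasible LP solution is a valid signaling scheme; however, a Hoeffding-rounded grid point in $\Delta(\E\times\B)$ need not lie in $\mathcal{V}$, so the rounding has to be executed carefully --- for example by decomposing $\mathbf{v}^*_s=\sum_a w^s_a\mu(\cdot,\cdot\mid a)$ and sampling in $\Delta(\A)$-space (so the induced $\tilde{\mathbf{v}}\in\mathcal{V}$ by construction), and then performing an additional small rounding onto the $\mathcal{V}$-side of the $\Delta(\E\times\B)$-grid --- with the compound rounding loss bounded while preserving Bayes plausibility. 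This is the step that needs the most care.
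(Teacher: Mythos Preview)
Your approach is exactly the paper's: parametrize each signal by its joint posterior $\mathbf{v}\in\Delta(\E\times\B)$, prove a local H\"older--type continuity bound for $u_B(\mathbf{v})$, apply the Hoeffding $K$-uniform rounding lemma in this constant-dimensional simplex, and finish with a polynomial-size LP over grid points. The continuity and rounding lemmas you describe are precisely those the paper states (its Lemmas in Appendix~\ref{ap:proof-eb-const-apx} are line-by-line analogues of the $|\A|$-constant case).

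Where you diverge is instructive. The paper's LP carries only the Bayes-plausibility constraint $\sum_{\mathbf v}\pi(\mathbf v)\,\mathbf v=\{\mu(e,b)\}$ and simply asserts that the optimizer is a signaling scheme; it never addresses the point you raise, that a Bayes-plausible distribution over $\Delta(\E\times\B)$ need not be realizable by Alice, whose achievable posteriors lie in $\mathcal V=\mathrm{conv}\{\mu(\cdot,\cdot\mid a):a\in\A\}$. Your LP with the extra variables $y(\mathbf v,a)$ and the posterior-matching constraints is the right repair: every feasible solution \emph{is} a signaling scheme, and the size stays $\poly(|\A|,K^{|\E||\B|})$. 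So on this point you are more careful than the paper.

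The one place your sketch is still loose is exactly where you flag it: after Hoeffding-rounding in $\Delta(\A)$ the induced $\tilde{\mathbf v}$ is in $\mathcal V$ but not on the $\Delta(\E\times\B)$-grid, and a second rounding to the grid can leave $\mathcal V$, so your exact posterior-matching constraint may become infeasible. The cleanest fix is to \emph{relax} that constraint to $\bigl|\sum_a \mu(e,b\mid a)\,y(\mathbf v,a)-\mathbf v(e,b)\sum_{a'}y(\mathbf v,a')\bigr|\le \epsilon\sum_{a'}y(\mathbf v,a')$. This keeps the program linear, makes the Hoeffding-rounded optimal scheme feasible (since each actual posterior is $\epsilon$-close to its grid label), and, by your continuity lemma, guarantees that for \emph{any} feasible $y$ the LP objective $\sum_{\mathbf v}u_B(\mathbf v)\sum_a y(\mathbf v,a)$ is within $O(\delta)$ of the true utility of the corresponding scheme $\pi(\mathbf v,a)=y(\mathbf v,a)$. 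With that tweak your argument is complete and in fact closes a gap the paper leaves open.
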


% The key idea is that Alice's signaling scheme can be viewed equivalently as a distribution over posterior distributions $\mathbf{v} \in \Delta(\Delta(\E \times \B))$ jointly over the event and the Bob's private signal, and that this distribution captures \textit{all} of the information needed. This is because that posterior joint distribution gives a marginal distribution over $\E$ (which is what the market state is before Bob's report) and tells us how to update the marginal distribution over $\E$ for each possible report $b \in \B$ (which is what the market state is after Bob's report). In other words, $\mathbf{v}$ in this section plays the same role as $\mathbf{w}$ in the previous section. The rest of the proof is similar to that of Theorem \ref{thm:a-const-apx}, so we omit the details. The full proof is given in Appendix \ref{ap:proof-eb-const-apx}.

\section{Conclusion and Directions} \label{sec:conclusion}
In this work, we took steps toward better understanding of equilibria of prediction markets, identifying informational substitutes and complements, and connections between these problems and other signaling games including Bayesian persuasion.

While these results extend the work of \cite{kong2018optimizing} in several ways -- connecting Alice's optimal commitment to the original prediction market game, generalizing results for the case of fixed $|\A|$, and new algorithms for other cases -- much open work still remains.
A first direction is to give efficient algorithms with fewer assumptions, e.g. if $|\B|$ is bounded but we have fewer restrictions on $G$.
It may be that persuasion-style techniques cannot be pushed much farther without additional structural results that are specific to the format of the prediction market game (as opposed to generic persuasion).

A second direction is to prove intractability results, which do not yet exist for this game,  although the problem appears quite challenging.
It would also be interesting to understand whether the problem of testing whether signals are informational substitutes is tractable or not, and whether computing Alice's optimal signaling scheme is algorithmically easier than testing substitutes. 

Finally, one can ask how these results extend to larger prediction market games.
In prior works on ``all-rush'' or ``all-delay'' equilibria~\cite{chen2010gaming,gao2013jointly,chen2016informational}, solving the Alice-Bob-Alice case tended to immediately extend to the general case of many players and trading periods.
However, when signals are neither substitutes nor complements but ``in between'', this extension is not clear.
Even computing the equilibrium of an Alice-Bob-Alice-Bob prediction market could require new backward-induction-style techniques.

%\begin{acks}
%Thanks to Yuqing Kong for personal communications about this problem.
%\end{acks}

\bibliographystyle{splncs04}
\bibliography{references,citations}

\newpage

\appendix

\section{Omissions from Section \ref{sec:aba-bp} }

\subsection{Proof of Theorem \ref{thm:k-linear-poly}}\label{ap:proof-k-linear-poly}

We first give a proof outline. 
Fact \ref{lem:equivalence} allows us to reframe the problem as a Bayesian persuasion problem with privately informed receiver. Next, we prove the revelation principle (Lemma \ref{lem:revelation_principle}) adapted to our problem. The revelation principle states that we can restrict attention to signaling schemes such that each signal is a set of incentive compatible action recommendations, each corresponding to one possible value of private information (Bob's signal realization). The optimal signaling scheme maximizes Alice's utility subject to incentive compatibility constraints that for any action and signal realization, the receiver prefers the recommended action to any other action. Alice's utility and the incentive compatibility constraints depend on $\Pr(e|s)$ and $\Pr(e|s,b)$, and these posterior probabilities can be computed in terms of the prior $\mu$ and the signaling scheme $\pi(s,a)$. The resulting program is a linear program in $\pi$ with polynomially many variables and constraints, so it can be solved in polynomial time.\footnote{Bergemann and Morris \cite{Bergemann14Bayes} called such a signaling scheme \textit{Bayes Correlated Equilibrium} and showed that it can be computed by a linear program. Our argument and the linear program in the rest of this section is similar in spirit to theirs.} The rest of this section will carry out the outlined strategy in detail.

% We can find the exactly optimal signaling scheme in this case. We first show that the ABA game is equivalent to the Bayesian Persuasion game with a privately informed receiver where Alice is the sender and the ``market'' is the receiver with $k$ actions, and Bob's signal is the receiver's private information. 

We start by proving a certain type of revelation principle for the ABA game with commitment. 
\begin{lemma}\label{lem:revelation_principle}[Revelation Principle]
	For any $k$-piece linear $G$, there always exists an optimal signaling scheme for Alice that uses at most $k^{|\B|+1}$ signals, with signal $s = \{ i_0\} \cup \{ i_b \}_{b \in B}$ resulting in action $i_0 \in [k]$ as the decision maker's best action when Bob does not reveal any signal and  $i_b \in [k]$ as the decision maker's best action when Bob reveals signal $b$.
\end{lemma}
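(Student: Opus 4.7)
The plan is to prove this revelation-principle-style result by grouping Alice's signals according to the decision maker's induced best responses and then merging signals within each group. Invoking Theorem \ref{thm:aba-to-bp-reduction}, I view Alice's signaling scheme as sending signals $s$ which induce posteriors $\post{s}$ on the event and, after Bob further reveals $b$, updated posteriors $\post{sb}$. For each signal $s$, let $i_0(s) \in [k]$ be the decision maker's best action under $\post{s}$ (breaking ties by a fixed pre-specified rule), and for each $b \in \B$ let $i_b(s) \in [k]$ be the best action under $\post{sb}$. Attach to $s$ the label $\bigl(i_0(s), (i_b(s))_{b \in \B}\bigr) \in [k]^{|\B|+1}$. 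There are at most $k^{|\B|+1}$ distinct labels, so the goal is to show that one may consolidate all signals sharing a label into a single signal without loss of optimality.

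The heart of the argument is that consolidation preserves both the obedient best responses and Alice's objective. Let $s_1, s_2$ be two signals carrying the same label $(i_0, (i_b)_{b \in \B})$ and define the merged signal $s$ by $\pi(s, a) = \pi(s_1, a) + \pi(s_2, a)$ for every $a \in \A$. A direct application of the Bayes formulas in \eqref{eq:pr-e-s} and \eqref{eq:pr-e-sb} shows that $\post{s}$ is a convex combination of $\post{s_1}$ and $\post{s_2}$ (weighted by their marginal probabilities), and that for every $b \in \B$, $\post{sb}$ is a convex combination of $\post{s_1 b}$ and $\post{s_2 b}$ (weighted by $\Pr(s_1, b)$ and $\Pr(s_2, b)$). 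The obedience inequality $\sum_e U_G(i_0, e)\Pr(e | s') \geq \sum_e U_G(i', e)\Pr(e | s')$ is linear in $\Pr(\cdot | s')$, so if it holds simultaneously under $s_1$ and $s_2$ it continues to hold under the merged $s$; the same reasoning certifies each $i_b$ as a best response given $(s, b)$. Hence the label of $s$ is still $(i_0, (i_b)_{b \in \B})$.

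To see that Alice's objective from \eqref{eq:bob-util} is invariant under consolidation, note that since $i_0$ remains optimal at $s$, the contribution of $s$ to the first term is $\Pr(s) \sum_e U_G(i_0, e)\Pr(e|s)$, and by the convex-combination identity above this equals $\Pr(s_1) \sum_e U_G(i_0, e)\Pr(e|s_1) + \Pr(s_2) \sum_e U_G(i_0, e)\Pr(e|s_2)$, which in turn matches the separate contributions of $s_1$ and $s_2$ because $i_0$ was already their common best action. The same linearity-in-posterior identity applies to each pair $(s, b)$ for the second term. Iteratively merging signals with identical labels therefore preserves feasibility and the objective value while strictly reducing the number of signals, terminating with at most $k^{|\B|+1}$ signals, one per label. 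The main subtlety is verifying that the labels really are preserved after merging, and this is precisely where the linearity of the obedience constraints and of the expected-utility functional in the posterior does all the work; no additional structural property of $G$ beyond $k$-piecewise linearity (which gives $k$ candidate actions via Fact~\ref{lem:equivalence}) is required.
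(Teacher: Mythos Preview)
Your proof is correct and follows essentially the same approach as the paper: merge signals that induce the same best-response profile $(i_0,(i_b)_{b\in\B})$, using linearity of the obedience constraints (and of the sender's objective) in the weights $\pi(s,a)$ to verify that the merged signal preserves both the induced actions and Alice's utility. The only cosmetic difference is that you phrase the key step as ``the merged posterior is a convex combination of the original posteriors,'' whereas the paper clears denominators and works directly with the linear inequalities in $\pi(s,a)$; these are equivalent presentations of the same argument.
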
 
%{\bf Remark.} Here, we viewed a signal as a set of incentive compatible action recommendations, each corresponding to one possible signal realization situation.  % This is similar in spirit to correlated equilibria which you could view as a set of incentive compatible action recommendation as well. In fact, the precise definition of such a signaling scheme is called ``Bayes Correlated Equilibrium'' defined by Bergemann and Morris \cite{Bergemann14Bayes}.   
\begin{proof}
	%This follows a revelation principle type argument. 
	Assume that there are two signals $s$ and $s'$ which result in the same decision maker best responses $i_0$  when Bob does not reveal any signal  and $i_b \in [k]$ for each Bob's signal $b$, we show that by merging signals $s,s'$ as one signal $\hat{s}$, $i_b$ is still the decision maker's best response action without seeing Bob's signal and $i_b$ is still the decision maker's best response action for Bob's signal $b$. Moreover, the decision maker's and Alice's utility will not change. 
	
	We first derive the conditions that signal $s = \{ i_0\} \cup \{ i_b \}_{b \in B}$ results in action $i_0$ as the decision maker's best action when Bob does not reveal any signal and  $i_b \in [k]$ as the decision maker's best action when Bob reveals signal $b$. This simply means  $i_0 = \arg \max_{i \in [k]} \sum_{e \in \E} [U_G(i,e) \cdot \Pr(e|s)]$ and $i_b = \arg \max_{i \in [k]} \sum_{e \in \E} [U_G(i,e) \cdot \Pr(e|s,b)]$ for all $b$. Mathematically, these can be formulated as the following constraints.
	
	\begin{align*}
		\sum_{e \in \E} [U_G(i_0,e) \cdot \Pr(e|s)]  &\geq \sum_{e \in \E} [U_G(i,e) \cdot \Pr(e|s)] \quad \forall i \in [k] \\
		\sum_{e \in \E} [U_G(i_b,e) \cdot \Pr(e|s,b)] &\geq \sum_{e \in \E} [U_G(i,e) \cdot \Pr(e|s,b)] \quad \forall i \in [k], b \in \B
	\end{align*}
	%\begin{lp}
	%	\qcon{\sum_{e} [U_G(i_0,e) \cdot \Pr(e|s)]  \geq \sum_{e} [U_G(i,e) \cdot \Pr(e|s)]  }{i \in [k]} 
	%	\qcon{\sum_{e} [U_G(i_b,e) \cdot \Pr(e|s,b)]  \geq \sum_{e} [U_G(i,e) \cdot \Pr(e|s,b)]  }{i \in [k], b \in [B]}  
	%	\end{lp}
	After substituting the expressions of $\Pr(e|s)$ and $\Pr(e|s,b)$ from (\ref{eq:pr-e-s}) and (\ref{eq:pr-e-sb}), the above constraints become the following linear constraints
	
	%\footnote{Note the denominators for $\Pr(e|s)$ and $\Pr(e|s,b)$ are canceled on both sides of the inequalities.}
	%\begin{align*}
	%\sum_{e \in \E, a \in \A} [U_G(i_0,e) \cdot \mu(e|a) \cdot \pi(s,a) ]  &\geq \sum_{e \in \E, a \in \A} [U_G(i,e) \cdot  \mu(e|a) \cdot \pi(s,a) ] \quad \forall i \in [k] \\
	%\sum_{e\in \E, a \in \A} [U_G(i_b,e) \cdot \mu(e|a,b) \cdot \pi(s,a) \cdot \mu(b|a)  ]  &\geq 
	% \sum_{e\in \E, a \in \A} [U_G(i,e) \cdot  \mu(e|a,b) \cdot \pi(s,a) \cdot \mu(b|a)  ]  \quad \forall i \in [k], b \in \B
	%\end{align*}
	
	\begin{align*}
		\sum_{e \in \E, a \in \A} [U_G(i_0,e)  \mu(e|a)  \pi(s,a) ]  &\geq \sum_{e \in \E, a \in \A} [U_G(i,e)   \mu(e|a)  \pi(s,a) ] \quad \forall i \in [k] \\
		\sum_{e\in \E, a \in \A} [U_G(i_b,e)  \mu(e|a,b)  \pi(s,a)  \mu(b|a)  ]  &\geq 
		\sum_{e\in \E, a \in \A} [U_G(i,e)   \mu(e|a,b)  \pi(s,a)  \mu(b|a)  ]  \quad \forall i \in [k], b \in \B
	\end{align*}
	
	%\begin{lp}
	%	\qcon{\sum_{e \in E, a \in A} [U_G(i_0,e) \cdot p(e|a) \cdot \pi(s,a) ]  \geq \sum_{e \in E, a \in A} [U_G(i,e) \cdot   p(e|a) \cdot \pi(s,a) ]  }{i \in [k]} 
	%	\con{\sum_{e\in E, a \in A} [U_G(i_b,e) \cdot p(e|a,b) \cdot \pi(s,a) \cdot p(b|a)  ]  \geq }
	%	\qcon{ \qquad \sum_{e\in E, a \in A} [U_G(i,e) \cdot  p(e|a,b) \cdot \pi(s,a) \cdot p(b|a)  ]  }{i \in [k], b \in [B]}  
	%\end{lp} 
	
	Crucially, these are all linear constraints of $\pi(s,a)$ for the fixed signal $s$. These constraints are also called \emph{obedience} or \emph{persuasiveness} constraints in the Bayesian Persuasion literature. So if $s, s'$ result in the same decision maker best responses in all the scenarios, by defining $\hat{s}$ with $\pi(\hat{s},a) = \pi(s,a) + \pi(s',a)$, $\hat{s}$ will result in the same decision best responses as $s,s'$ in all the scenarios. Moreover, it is easy to verify that the new scheme with $s,s'$ substituted by $\hat{s}$ will not change both the decision maker and Alice's expected utility.   
	\qed \end{proof}

Thanks to Lemma \ref{lem:revelation_principle}, we know that there exists an optimal signaling scheme for Alice which uses at most $k^{ |\B| + 1} $ signals, with signal $s = \{ i_0\} \cup \{ i_b \}_{b \in \B}$ resulting in action $i_0$ as the decision maker's best action when Bob does not reveal any signal and  $i_b \in [k]$ as the decision maker's best action when Bob reveals signal $b$. Let $\S$ denote the set of all these   $k^{ |\B| + 1} $ signals.
This lemma allows us to draw on the literature on Bayes correlated equilibria \cite{Bergemann14Bayes} to frame the problem as a linear program of $\text{ploy} (|\S|,k,|\A|)$ size, which we now derive. %The optimal signaling mechanism solves the following optimization problem. 

Alice's objective is to maximize the negative of Bob's utility, as follows:
$$ \Ex_{s}   \bigg[ \sum_{e \in \E} U_G(i_0,e) \cdot \Pr(e|s) \bigg] - \Ex_{s,b}  \bigg[ \sum_{e \in \E} U_G(i_b,e)  \cdot \Pr(e|s,b) \bigg]. $$
The obedience constraints for each signal $s \in \S$ are described as follows:
$$
\begin{aligned}
\sum_{e \in \E} [U_G(i_0,e) \cdot \Pr(e|s)]  &\geq \sum_{e \in \E} [U_G(i,e) \cdot \Pr(e|s)]  & \quad i \in [k],\\
\sum_{e \in \E} [U_G(i_b,e) \cdot \Pr(e|s,b)]  &\geq \sum_{e \in \E} [U_G(i,e) \cdot \Pr(e|s,b)]  & \quad i \in [k], b \in \B. 
\end{aligned}
$$
%
%\begin{gather*}
%\text{minimize} \quad \sum_{e \in \E} [U_G(i_b,e)  \cdot \Pr(e|s,b)]  - \Ex_{s} \sum_{e \in \E} [U_G(i_0,e) \cdot \Pr(e|s)] \quad  \text{subject to}  \\
%\begin{aligned}
%\sum_{e \in \E} [U_G(i_0,e) \cdot \Pr(e|s)]  &\geq \sum_{e \in \E} [U_G(i,e) \cdot \Pr(e|s)]  & \quad i \in [k], s \in \S\\
%                   \sum_{e \in \E} [U_G(i_b,e) \cdot \Pr(e|s,b)]  &\geq \sum_{e \in \E} [U_G(i,e) \cdot \Pr(e|s,b)]  & \quad i \in [k], s \in \S, b \in \B
%\end{aligned}
%\end{gather*}

By substituting the expressions of $\Pr(e|s)$ and $\Pr(e|s,b)$ from (\ref{eq:pr-e-s}) and (\ref{eq:pr-e-sb}) in the above expressions, we can derive the following linear program for computing Alice's optimal commitment: 
%. We already did this for the obedience constraints in the proof of Lemma \ref{lem:revelation_principle}. The two terms of the objective can be re-written as follows (see Appendix \ref{ap:proof-eq-obj-terms} for details):  
%\begin{align}
%\Ex_{s,b}  \sum_{e \in \E} [U_G(i_b,e) \cdot \Pr(e|s,b)] &= \sum_{s \in S}  \sum_{a \in \A, b \in \B, e \in \E}    [U_G(i_b,e)  \mu(e, b|a)  \pi(s,a)] \label{eq:obj-term-1} \\
% \Ex_{s}  \sum_{e \in \E} [U_G(i_b,e) \cdot \Pr(e|s)]  &=  \sum_{s \in \S}  \sum_{a \in \A,  e \in \E}    [U_G(i_0,e)  \mu(e|a) \pi(s,a)] \label{eq:obj-term-2}
%\end{align}
%Therefore, the following linear program computes the optimal signaling scheme.

%\begin{small}
%	\begin{gather}
%		\min_{ \{\pi(s,a)  \}_{s \in \S, a \in \A} } \quad \sum_{s \in S,a \in \A, e \in \E}    \left[ \sum_{b \in \B} U_G(i_b,e)  \mu(e, b|a)  \pi(s,a)  - U_G(i_0,e) \mu(e|a) \pi(s,a) \right]  \quad  \text{s.t.}  \\
%		\begin{aligned}
%			\sum_{e \in \E, a \in \A} [U_G(i_0,e)  \mu(e|a)  \pi(s,a) ]  &\geq \sum_{e \in \E, a \in \A} [U_G(i,e)   \mu(e|a)  \pi(s,a) ]   & \forall  i \in [k], s \in \S\\
%			\sum_{e\in \E, a \in \A} [U_G(i_b,e)  \mu(e|a,b)  \pi(s,a)  \mu(b|a)  ]  &\geq 
%			\sum_{e\in \E, a \in \A} [U_G(i,e)   \mu(e|a,b)  \pi(s,a)  \mu(b|a)  ]  & \forall i \in [k], s \in \S, b \in \B \\
%			\sum_{s \in S} \pi(s,a)  &= \mu(a)  & \quad \forall a \in \A \\
%			\pi(s,a) &\geq 0 &\quad \forall s \in \S, a \in \A
%		\end{aligned}
%	\end{gather}
%\end{small}

\begin{small}
	\begin{gather}
	\begin{aligned}
		\max_{ \pi } \quad &  \sum_{s \in S,a \in \A, e \in \E}    \left[  U_G(i_0,e) \mu(e|a) \pi(s,a) - \sum_{b \in \B} U_G(i_b,e)  \mu(e, b|a)  \pi(s,a)  \right]   \quad & \\
		 \text{s.t.} \quad  & 
			\sum_{e \in \E, a \in \A} [U_G(i_0,e)  \mu(e|a)  \pi(s,a) ] \geq \sum_{e \in \E, a \in \A} [U_G(i,e)   \mu(e|a)  \pi(s,a) ]   & \forall  i \in [k], s \in \S, \\
		&	\sum_{e\in \E, a \in \A} [U_G(i_b,e)  \mu(e|a,b)  \pi(s,a)  \mu(b|a)  ]  \geq 
			\sum_{e\in \E, a \in \A} [U_G(i,e)   \mu(e|a,b)  \pi(s,a)  \mu(b|a)  ]  & \forall i \in [k], s \in \S, b \in \B ,  \\
		&	\sum_{s \in S} \pi(s,a)  = \mu(a)   & \forall a \in \A , \\
		&	\pi(s,a) \geq 0 &\quad \forall s \in \S, a \in \A . 
		\end{aligned}
	\end{gather}
\end{small}

This completes our proof of Theorem \ref{thm:k-linear-poly} since  $|\S| = k^{|\B|+1}$.

\subsection{Reducing $\texttt{ABA-Commit}$ to $\texttt{BP-Private}$ for General $G$} \label{ap:persuasion-general}

We use most of the notations of Section \ref{sec:aba-bp}, and let $G$ be any convex function. We think of $G$ as smooth (but it doesn't have to be; throughout we use the gradient of $G$, but we can use the subgradient of $G$ for general $G$ instead). 

If $G$ is $k$-piecewise linear, that is, it is a maximum of $k$ linear functions, then the decision maker (receiver) has $k$ actions and the action space is $[k]$. 

When $G$ is a general convex function, we can view $G$ as a maximum of infinitely many hyperplanes, and each hyperplane is a supporting hyperplane that is tangent to the graph of $G$ at each point $\vec{p} \in \Delta(\E)$. So we can have a decision maker whose action space is $\Delta(\E)$ and the utility to the decision maker of taking action $\vec{p} \in \Delta(\E)$ if the event is $e \in \E$ is
\begin{align*}
U_G(\vec{p},e) = G(\vec{p}) + \inprod{\nabla G(\vec{p})}{ \delta^{e} - \vec{p}}
\end{align*}
where $\nabla G(\vec{p})$ is the subgradient of $G$ evaluated at $\vec{p}$, and $\delta^{e} \in \Delta(\E)$ puts weight one on $e$ and zero elsewhere.

Alice's reported signal space is $\S = \Delta(E)^{B \cup \{0\}}$ and each $s \in \S$ can be written as $s = (s_0) \cup (s_b)_{b \in B}$, where $s_0 \in \Delta(\E)$ is a recommendation that the receiver/decision maker takes action $s_0$ when Bob reports nothing, and $s_b \in \Delta(\E)$ is a recommendation that the receiver/decision maker takes action $s_b$ if Bob's report is $b \in \B$. %(Remember, the decision maker's action space is $\Delta(E)$.)

We can write a linear program analogously to that in Section \ref{sec:aba-bp} to characterize the optimal signaling scheme. The variables in the linear program are $\pi(s,a)$, a probability distribution over $\S \times \A = \Delta(\E)^{B \cup \{0\}} \times \A$. 

The optimal signaling scheme $\pi$ is a solution to the following (infinite-dimensional) LP.

\begin{align*}
\min_{\pi} &\int_{s \in \S} \sum_{a \in \A, e \in \E} \left[ \sum_{b \in \B} [U_G(s_b,e) \cdot \mu(e, b|a) \cdot \pi(s,a)] - U_G(s_0,e) \cdot \mu(e|a) \cdot \pi(s,a)] \right] \text{ s.t. } \\
s_0 &= \argmax_{\tilde{s} \in \Delta(\E)} \sum_{e \in \E, a \in \A} U_G(\tilde{s},e) \cdot \mu(e|a) \cdot \pi(s,a) \quad \forall s \in \S \\
s_b &= \argmax_{\tilde{s} \in \Delta(\E)} \sum_{e \in \E, a \in \A} U_G(\tilde{s},e) \cdot \mu(e|a,b) \cdot \pi(s,a) \cdot \mu(b|a) \quad \forall s \in \S, b \in \B \\
&\int_{s \in \S} \pi(s,a) = \mu(a) \quad \forall a \in \A \\
&\pi(s,a) \geq 0 \quad \forall s \in \S, a \in \A
\end{align*}
where $U_G(\vec{p},e) = G(\vec{p}) + \inprod{\nabla G(\vec{p})}{ \delta^{e} - \vec{p}}$ as stated above, and $s=(s_0, \{s_b\}_{b \in \B})$.

In Section \ref{sec:aba-bp}, the argmax is over $k$ discrete actions; here, the argmax is over $\Delta(\E)$, which is a compact space, so replacing the argmax with the first-order conditions is an instructive exercise. 
The first-order conditions are necessary conditions if the solutions are interior. 
If in addition $U_G(\vec{p},e)$ is convex in $\vec{p}$ for every $e \in \E$, then the first-order conditions are necessary and sufficient, so the new LP is equivalent to the old one.
If these conditions are satisfied, we can replace the argmax conditions with the first-order conditions and get the following equivalent LP.
\begin{align*}
\min_{\pi} &\int_{s \in \S} \sum_{a \in \A, e \in \E} \left[ \sum_{b \in \B} [U_G(s_b,e) \cdot \mu(e, b|a) \cdot \pi(s,a)] - U_G(s_0,e) \cdot \mu(e|a) \cdot \pi(s,a)] \right] \text{ s.t. } \\
&\sum_{e \in \E, a \in \A} \nabla U_G(s_0,e) \cdot \mu(e|a) \cdot \pi(s,a) \geq \sum_{e \in \E, a \in \A} \nabla U_G(\tilde{s},e) \cdot \mu(e|a) \cdot \pi(s,a)  \quad \forall s \in \S, \tilde{s} \in \Delta(\E) \\
&\sum_{e \in \E, a \in \A} \nabla U_G(s_b,e) \cdot \mu(e|a,b) \cdot \pi(s,a) \cdot \mu(b|a)  \\
& \qquad \geq \sum_{e \in \E, a \in \A} \nabla U_G(\tilde{s},e) \cdot \mu(e|a,b) \cdot \pi(s,a) \cdot \mu(b|a) \quad \forall s \in \S, b \in \B, \tilde{s} \in \Delta(\E) \\
&\int_{s \in \S} \pi(s,a) = \mu(a) \quad \forall a \in \A \\
&\pi(s,a) \geq 0 \quad \forall s \in \S, a \in \A
\end{align*}
%
%\begin{align*}
%\min_{\pi} &\int_{s \in S} \sum_{a \in A, e \in E} \left[ \sum_{b \in B} [U_G(s_b,e) \cdot p(e, b|a) \cdot \pi(s,a)] - U_G(s_0,e) \cdot p(e|a) \cdot \pi(s,a)] \right] \text{ s.t. } \\
%&\sum_{e \in E, a \in A} \nabla U_G(s_0,e) \cdot p(e|a) \cdot \pi(s,a) = 0 \quad \forall s \in S \\
%&\sum_{e \in E, a \in A} \nabla U_G(s_b,e) \cdot p(e|a,b) \cdot \pi(s,a) \cdot p(b|a) = 0 \quad \forall s \in S, b \in B \\
%&\int_{s \in S} \pi(s,a) = p(a) \quad \forall a \in A \\
%&\pi(s,a) \geq 0 \quad \forall a \in A, s \in S
%\end{align*}

%Now we discuss the conditions such that replacing the argmax conditions with the first-order conditions is justified.   %\textcolor{red}{Note: Jerry thinks that the interior solution condition is often not satisfied, especially over the entire $\Delta(E)$. In fact, the optimal signaling scheme routinely assigns zero probabilities to certain events. Nevertheless, it is still interesting to write out the first-order conditions.} If the expressions inside the argmax are concave, then the first-order conditions are necessary and sufficient. One condition that leads to this is that $U_G(p,e)$ is concave in $p$ for every $e \in E$. \textcolor{red}{The condition that $U_G$ is concave is also often not reasonable. $U_G(p,e)$ is a sum of $G(p)$ -- which is convex -- and something else, so it is probably more reasonable to assume that is convex than to assume that it is concave.}

\subsection{Omitted Probability Calculations}

\subsubsection{Proof of Equation \eqref{eq:pr-e-sb}}\label{ap:proof-eq-pr-s-eb}

\begin{eqnarray*}
	\Pr(e|s,b) &= & \sum_{a \in A}  \Pr(e|a,b) \cdot \Pr(a|s,b)  \\
	&=&  \sum_{a \in A} \mu(e|a,b) \cdot  \frac{\Pr(a,s,b)}{\Pr(s,b)} \\
	&=&  \sum_{a \in A} \mu(e|a,b) \cdot  \frac{\Pr(a,s)  \cdot \Pr(b|a,s)}{\Pr(s,b)} \\
	&=&  \sum_{a \in A} \mu(e|a,b) \cdot  \frac{\Pr(a,s)  \cdot \Pr(b|a,s)}{\sum_{a \in A}\Pr(a,s) \cdot \Pr(b|a,s)} \\
	&=&    \frac{ \sum_{a \in A} \mu(e|a,b) \cdot \pi(s,a) \cdot \mu(b|a)}{ \sum_{a \in A} \pi(s,a) \cdot \mu(b|a)} 
\end{eqnarray*} 

\newpage 
\section{Omitted Proofs From Section \ref{sec:fptas-regimes} } \label{ap:omitted-proofs}

\subsection{Proof of Proposition \ref{prop:nice-implies-holder}}

	Assume that $G$ is $\lambda$-nice, then for $x_i,y_i \in \RR$ such that $x_i-y_i= \epsilon > 0$ is sufficiently small,
	\begin{align*}
		-\epsilon^{\lambda} = -|g(\epsilon)| = g(\epsilon)-g(0) \leq g(x_i)-g(y_i) \leq g(1) -g(1-\epsilon) = |g(1-\epsilon)| \leq \epsilon^{\lambda}
	\end{align*}
	where the inequalities come from the fact that $g$ is convex. We get analogous inequalities for $x_i < y_i$, so $|g(x_i)-g(y_i)| \leq |x_i-y_i|^{\lambda}$ for sufficiently small $|x_i-y_i|$.
	
	Let $\vec{x},\vec{y} \in \Delta_m$. If $|\vec{x}-\vec{y}|$ is sufficiently small, then so is $|x_i-y_i|$ for all $i$, so
	\begin{align*}
		\left|G(\vec{x})-G(\vec{y})\right|= \left|\sum_{i=1}^{n} g(x_i)-g(y_i)\right| \leq \sum_{i=1}^{n} |g(x_i)-g(y_i)| \leq \sum_{i=1}^{n} |x_i-y_i|^{\lambda} \leq n^{1-\lambda} \left(\sum_{i=1}^{n} |x_i-y_i| \right)^{\lambda}
	\end{align*}
	where the last inequality comes from H\"{o}lder's inequality.

\subsection{Proof of Theorem \ref{thm:a-const-apx} }\label{ap:proof-a-const-apx}

Recall that Alice's goal is to minimize Bob's expected utility. Let $\vec{w} \in \Delta(\A)$ be the posterior over Alice's signal space induced by her signal $s$. That is, $w_a = \Pr(a|s)$, where $w_a$ is the probability of $a \in \A$ assigned by $\vec{w}$. Let $u_B(\vec{w})$ denote Bob's utility as a function of Alice's report $\vec{w}$. The following lemma expresses $u_B(\vec{w})$ explicitly in terms of $\vec{w}$ and the prior $\mu$.
\begin{lemma}\label{lem:uB-w}
\begin{align*}
u_B(\vec{w}) = \sum_{b \in \B}  \left[\sum_{a \in \A} w_a \mu(b|a)  \right]  \times G\left( \left\{\frac{\sum_{a \in \A}  \mu(e|a,b)  w_a \mu(b|a) }{\sum_{a \in \A}  w_a \mu(b|a) } \right\}_{e \in \E} \right) -   G\left( \left\{\sum_{a \in \A} \mu(e|a) w_a \right\}_{e \in \E} \right)
\end{align*}
\end{lemma}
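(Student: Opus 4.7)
\medskip

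\noindent\textbf{Proof plan for Lemma \ref{lem:uB-w}.}

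The plan is to compute $u_B(\vec{w})$ directly by iterated expectation, then express every term via Bayes' rule in terms of the posterior $w_a = \Pr(a \mid s)$ over Alice's signals. Conditional on Alice's signal realization $S=s$, Bob's expected utility is
\[
u_B(\vec{w}) \;=\; \Ex_{B,E \mid s}\bigl[R(\vec{p}_{s,B},E) - R(\vec{p}_s,E)\bigr].
\]
Splitting this expectation over $B$ first and conditioning on $(s,B)$ inside, I will invoke the proper-scoring-rule identity $\Ex_{E \sim \vec{q}} R(\vec{q},E) = G(\vec{q})$ at $\vec{q}=\vec{p}_{s,b}$ for the first term, and at $\vec{q}=\vec{p}_s$ for the second term (after first marginalizing $B$ out of $\Ex_{E \mid s}[R(\vec{p}_s,E)]$, which only depends on $s$). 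This immediately yields the clean intermediate form
\[
u_B(\vec{w}) \;=\; \sum_{b \in \B} \Pr(b \mid s)\, G(\vec{p}_{s,b}) \;-\; G(\vec{p}_s).
\]

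The second step is to rewrite each of $\Pr(b \mid s)$, $\vec{p}_s$, and $\vec{p}_{s,b}$ in terms of $\vec{w}$ and $\mu$. The key structural fact is that $S$ is a function of $A$ alone (through the signaling scheme $\pi$), so $S$ is conditionally independent of $(B,E)$ given $A$. This gives $\Pr(b \mid s) = \sum_a \mu(b\mid a)\, w_a$ and $\vec{p}_s(e) = \Pr(e\mid s) = \sum_a \mu(e\mid a)\, w_a$ by straightforward marginalization. For $\vec{p}_{s,b}$, I will start from equation (\ref{eq:pr-e-sb}) in the main text and divide both numerator and denominator by $\sum_{a'} \pi(s,a') = \Pr(s)$, which replaces each $\pi(s,a)$ with $w_a$, giving
\[
\vec{p}_{s,b}(e) \;=\; \frac{\sum_{a} \mu(e\mid a,b)\, w_a\, \mu(b\mid a)}{\sum_{a} w_a\, \mu(b\mid a)}.
\]
Substituting all three expressions into the intermediate form produces exactly the claimed formula.

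The proof is essentially a routine probability calculation; there is no serious obstacle. The one place to be careful is the transition from the signaling-scheme variables $\pi(s,a)$ (used in equations (\ref{eq:pr-e-s}) and (\ref{eq:pr-e-sb})) to the posterior weights $w_a$: since the lemma fixes a single realization $s$ and states the result in terms of the induced posterior $\vec{w}$, every quantity depending on $s$ must be normalized by $\Pr(s) = \sum_{a'} \pi(s,a')$, at which point $\pi(s,a)$ is uniformly replaced by $w_a$ throughout. Once this bookkeeping is done, the identity follows by direct substitution.
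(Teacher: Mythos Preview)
Your proposal is correct and follows essentially the same route as the paper: first reduce $u_B(\vec{w})$ to $\sum_{b}\Pr(b\mid s)\,G(\vec{p}_{s,b}) - G(\vec{p}_s)$, then compute $\Pr(b\mid s)$, $\vec{p}_s$, and $\vec{p}_{s,b}$ in terms of $\vec{w}$ and $\mu$ using conditional independence of $S$ and $(B,E)$ given $A$. The only cosmetic differences are that you spell out the passage from $R$ to $G$ via properness (the paper simply asserts the intermediate form ``by definition of $u_B$''), and you obtain $\vec{p}_{s,b}$ by renormalizing equation~(\ref{eq:pr-e-sb}) rather than rederiving it from scratch as the paper does.
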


\begin{proof}
By definition of $u_B$, we have
\begin{align*}
u_B(\vec{w}) = \sum_{b \in \B} \Pr(b|s) G\big(\{\Pr(e|b,s)\}_{e \in \E}\big) - G\big(\{\Pr(e|s)\}_{e \in \E}\big)
\end{align*}

We then compute
\begin{align*}
\Pr(b|s) = \sum_{a \in \A} \Pr(a,b|s) = \sum_{a \in \A} \Pr(a|s) \Pr(b|a,s) = \sum_{a \in \A} w_a \mu(b|a) \\
\Pr(e|s) = \sum_{a \in \A} \Pr(e,a|s) = \sum_{a \in \A} \Pr(e|a,s)\Pr(a|s) = \sum_{a \in \A} \mu(e|a) w_a 
\end{align*}
Lastly,
\begin{align*}
\Pr(e|b,s) &= \frac{\Pr(e,b|s)}{\Pr(b|s)} = \frac{\sum_{a \in \A} \Pr(e,a,b|s)}{\sum_{a \in \A} \Pr(a,b|s)} \\ &= \frac{\sum_{a \in \A} \Pr(e|a,b,s)\Pr(a|s)\Pr(b|a,s) }{ \sum_{a \in \A} \Pr(a|s) \Pr(b|a,s) } = \frac{\sum_{a \in \A}  \mu(e|a,b) \cdot w_a \mu(b|a)}{ \sum_{a \in \A} w_a \mu(b|a) }
\end{align*}

These expressions immediately imply the lemma.
\qed \end{proof}

% If it goes over 18 pages, use this proof and uncomment the full proof in the appendix (same as here). 
%\begin{proof}
%By definition of $u_B$, we have
%\begin{align*}
%u_B(\vec{w}) = \sum_{b \in \B} \Pr(b|s) G\big(\{\Pr(e|b,s)\}_{e \in \E}\big) - G\big(\{\Pr(e|s)\}_{e \in \E}\big)
%\end{align*}
%The rest is standard probability calculations. The details are given in Appendix \ref{ap:proof-uB-w}.
%\qed \end{proof}

To prove the theorem, we first show that the value of $u_B$ does not change much if $\vec{w}$ does not change much in $l_1$ norm sense. 

\begin{lemma}\label{lem:G-continuous}
Assume the $G$ function is  $(\alpha,\beta)$-locally H\"{o}lder continuous for some $\alpha, \beta > 0$ and bounded within $[-L,L]$. Then we must have $|u_B(\vec{w}) - u_B(\vec{w}')| \leq 3|\B|\epsilon L + 3\alpha\epsilon^{1 - \beta} $ for any $\vec{w}, \vec{w}'$ such that $ |\vec{w}-\vec{w}'| \leq \frac{1}{2}  \epsilon^{1/\beta}$ and $\epsilon > 0$ sufficiently small. 
\end{lemma}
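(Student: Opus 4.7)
By Lemma~\ref{lem:uB-w}, write $u_B(\vec{w}) = \sum_{b \in \B} c_b(\vec{w})\, G(\vec{q}_b(\vec{w})) - G(\vec{p}(\vec{w}))$, where I abbreviate $c_b(\vec{w}) := \sum_a w_a \mu(b|a)$, $\vec{m}_b(\vec{w}) := \{\sum_a \mu(e|a,b) w_a \mu(b|a)\}_{e \in \E}$, $\vec{q}_b(\vec{w}) := \vec{m}_b(\vec{w})/c_b(\vec{w})$, and $\vec{p}(\vec{w}) := \{\sum_a \mu(e|a) w_a\}_{e \in \E}$. Since each of $c_b$, $\vec{m}_b$, and $\vec{p}$ is linear in $\vec{w}$ with every coefficient in $[0,1]$, all three maps are $1$-Lipschitz in $\ell_1$. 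I will bound the change in $u_B$ term by term.

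\textbf{Step 1 (prior-posterior term).} Linearity gives $|\vec{p}(\vec{w}) - \vec{p}(\vec{w}')| \leq |\vec{w}-\vec{w}'| \leq \tfrac{1}{2}\epsilon^{1/\beta}$, which for $\epsilon$ small enough lies in the local H\"older regime, so $|G(\vec{p}(\vec{w})) - G(\vec{p}(\vec{w}'))| \leq \alpha\, |\vec{p}(\vec{w}) - \vec{p}(\vec{w}')|^\beta \leq \alpha\epsilon$.

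\textbf{Step 2 (``perspective'' terms).} Fix $b \in \B$ and split
\[
c_b(\vec{w}) G(\vec{q}_b(\vec{w})) - c_b(\vec{w}') G(\vec{q}_b(\vec{w}')) = \underbrace{[c_b(\vec{w}) - c_b(\vec{w}')]\, G(\vec{q}_b(\vec{w}'))}_{(\mathrm{I})} + \underbrace{c_b(\vec{w})\,[G(\vec{q}_b(\vec{w})) - G(\vec{q}_b(\vec{w}'))]}_{(\mathrm{II})}.
\]
For $(\mathrm{I})$, use $|c_b(\vec{w}) - c_b(\vec{w}')| \leq |\vec{w}-\vec{w}'|$ and $|G| \leq L$, giving $|(\mathrm{I})| \leq L|\vec{w}-\vec{w}'| \leq L\epsilon$. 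For $(\mathrm{II})$, the key algebraic identity is
\[
c_b(\vec{w})[\vec{q}_b(\vec{w}) - \vec{q}_b(\vec{w}')] = [\vec{m}_b(\vec{w}) - \vec{m}_b(\vec{w}')] + [c_b(\vec{w}') - c_b(\vec{w})]\,\vec{q}_b(\vec{w}'),
\]
which, together with $|\vec{q}_b(\vec{w}')|=1$ and the $1$-Lipschitzness noted above, implies $c_b(\vec{w})\,|\vec{q}_b(\vec{w}) - \vec{q}_b(\vec{w}')| \leq 2|\vec{w}-\vec{w}'|$. I then case-split on whether $|\vec{q}_b(\vec{w}) - \vec{q}_b(\vec{w}')|$ falls into the local H\"older regime $[0, c]$: (a) if yes, apply H\"older and use $c_b(\vec{w})^{1-\beta} \leq 1$ together with the bound on $c_b(\vec{w})|\vec{q}_b(\vec{w})-\vec{q}_b(\vec{w}')|$ to get $|(\mathrm{II})| = O(\alpha\epsilon)$; (b) if not, then $c_b(\vec{w}) \leq 2|\vec{w}-\vec{w}'|/c = O(\epsilon^{1/\beta})$, and boundedness of $G$ gives $|(\mathrm{II})| \leq 2L\,c_b(\vec{w}) = O(L\epsilon)$ since $\epsilon^{1/\beta} \leq \epsilon$ for $\epsilon \leq 1$ and $\beta \leq 1$. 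Summing over $b \in \B$ and combining with Step~1 yields a bound of the claimed form.

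\textbf{Main obstacle.} The subtle point is that $\vec{q}_b(\vec{w})$ is a ratio of two linear functions of $\vec{w}$, so it has no uniform modulus of continuity: as $c_b(\vec{w}) \to 0$, the ratio can swing arbitrarily. The remedy is to never decouple the ratio from its prefactor. Treating the ``perspective'' quantity $c_b(\vec{w})\,G(\vec{q}_b(\vec{w}))$ as a single unit, the identity above shows that the \emph{product} $c_b(\vec{w})\,(\vec{q}_b(\vec{w})-\vec{q}_b(\vec{w}'))$ is always $O(|\vec{w}-\vec{w}'|)$, even when the factor $\vec{q}_b(\vec{w})-\vec{q}_b(\vec{w}')$ alone is not. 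A two-case analysis then either routes the contribution through H\"older (when the posterior change is small enough to invoke it) or absorbs it into the prefactor $c_b(\vec{w})$ being itself tiny (otherwise); this bifurcation is exactly where the two terms $|\B|\epsilon L$ and $\alpha\epsilon^{1-\beta}$ in the stated bound come from.
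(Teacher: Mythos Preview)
Your proof is correct and follows essentially the same route as the paper: decompose $u_B$ via Lemma~\ref{lem:uB-w}, handle the prior term by H\"older, split each perspective term into a coefficient-change piece and a $G$-change piece, and run a two-case analysis to neutralize the singularity in the ratio $\vec{q}_b$. The only real difference is the threshold for the case split---the paper splits on whether $c_b(\vec{w}) \geq \epsilon$ and then bounds $|\vec{q}_b(\vec{w})-\vec{q}_b(\vec{w}')|$ directly, whereas you split on whether $|\vec{q}_b(\vec{w})-\vec{q}_b(\vec{w}')|$ falls in the H\"older regime and use your identity to force $c_b$ small otherwise; this yields slightly different constants (e.g.\ a hidden $1/c$ and a $|\B|$ on the $\alpha\epsilon$ term) that are absorbed by ``$\epsilon$ sufficiently small.''
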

\begin{proof}
By $(\alpha,\beta)$-local H\"{o}lder continuity of $G$, we know that   $|G(\vec{x}) - G(\vec{y})| \leq \alpha|\vec{x}-\vec{y}|^{\beta}$ for any small enough $|\vec{x}-\vec{y}|$.  Now for any $\vec{w}' \in \Delta_n$ with $|\vec{w} - \vec{w}'| \leq \epsilon^{1/\beta}/2$, we will bound the difference between $u_B(\vec{w})$ and $u_B(\vec{w}')$. We start from the second term of $u_B(\vec{w})$ by bounding its input $\left\{ \sum_{a \in \A} \mu(e|a) w_a \right\}_{e \in \E} $. 

\begin{eqnarray*}
\left|\left\{ \sum_{a \in \A} \mu(e|a) w_a \right\}_{e \in \E}  - \left\{ \sum_{a \in \A} \mu(e|a) w'_a \right\}_{e \in \E} \right| =  \sum_{e \in \E} \sum_{a \in \A}  \mu(e|a) \left|w_a-w'_a\right| \\ =   \sum_{a \in \A} \sum_{e \in \E} \mu(e|a) \left|w_a-w'_a\right| = \sum_{a \in \A} \left|w_a-w'_a\right| = \left|\vec{w}-\vec{w}'\right|
\end{eqnarray*}

%\begin{eqnarray*}
%\left|\left\{ \sum_{a \in \A} \mu(e|a) w_a \right\}_{e \in \E}  - \left\{ \sum_{a \in \A} \mu(e|a) w'_a \right\}_{e \in \E} \right| &=& \sum_{a \in \A}  \sum_{e \in \E} |\post{a}| \cdot |w_a- w_a'|  \\
%& = &  \sum_{a} 1 \cdot |w_a- w_a'|  \\
%& = & |\vec{w} - \vec{w}'| 
%\end{eqnarray*}
By $(\alpha,\beta)$-local H\"{o}lder continuity of $G$, we have
\begin{align*}
\left|G\left(\left\{ \sum_{a \in \A} \mu(e|a) w_a \right\}_{e \in \E} \right)  - G\left(\left\{ \sum_{a \in \A} \mu(e|a) w'_a \right\}_{e \in \E} \right)\right|  \leq \alpha |\vec{w}-\vec{w}'|^{\beta} \leq  \alpha \left( \frac{\epsilon^{1/\beta}}{2}  \right)^{\beta} \leq \alpha \epsilon \leq \alpha \epsilon^{1-\beta}
\end{align*}

Now we bound the first term. This turns out to be trickier.  For any fixed $b$, let $\lambda_b = \sum_{a \in \A} w_a \mu(b|a)$ and $\lambda'_b = \sum_a w_a' \mu(b|a)$. We have $$ \sum_{b \in \B} |\lambda_b - \lambda'_b| \leq \sum_{b \in \B} \sum_{a \in \A} \mu(b|a)  \left|w_a-w'_a\right| =  \sum_{a \in \A}  \left|w_a-w'_a\right| = \left|\vec{w} - \vec{w}'\right|.$$ 
Note that this also implies $|\lambda_b - \lambda'_b| \leq |\vec{w} - \vec{w}'|$. For any fixed $b$ such that $\lambda_b \geq \epsilon$, 

\begin{eqnarray*}
& &\left| \left\{ \frac{\sum_{a \in \A} \mu(e|a,b) w_a \mu(b|a) }{\sum_{a \in \A}  w_a \mu(b|a) } \right\}_{e \in \E}  - \left\{ \frac{\sum_{a \in \A}  \mu(e|a,b)  w_a' \mu(b|a) }{\sum_{a \in \A}  w_a' \mu(b|a) } \right\}_{e \in \E}  \right|  \\
& = & \left| \left\{ \sum_{a \in \A}  \mu(e|a,h) \mu(b|a) \left( \frac{w_a}{\lambda_b} - \frac{w_a'}{\lambda_b'}  \right) \right\}_{e \in \E}  \right| \\
& \leq & \sum_{a \in \A} \left| \left\{ \mu(e|a,b) \right\}_{e \in \E} \right| \mu(b|a) \left| \frac{w_a}{\lambda_b} - \frac{w_a'}{\lambda_b'} \right| \\
& = & \sum_{a \in \A}  \mu(b|a) \left| \frac{w_a}{\lambda_b} - \frac{w_a'}{\lambda_b'} \right| \\
& = & \sum_{a \in \A} \mu(b|a) \frac{|w_a\lambda_b' - w_a' \lambda_b|}{\lambda_b \lambda_b'} \\
& \leq & \sum_{a \in \A} \mu(b|a) \frac{|w_a-w_a'| \cdot \lambda_b + |\lambda_b - \lambda_b'| \cdot w_a}{\lambda_b \lambda_b'} \\
& \leq & \sum_{a \in \A} \frac{|w_a-w_a'|}{\lambda_b'} + \frac{\sum_{a \in \A} \mu(b|a) w_a \cdot |\lambda_b - \lambda_b'|}{\lambda_b \lambda_b'} \\
& = & \frac{|\vec{w}-\vec{w}'|}{\lambda_b'} + \frac{|\lambda_b-\lambda_b'|}{\lambda_b'} \leq \frac{2|\vec{w}-\vec{w}'|}{\lambda_b'} \leq \frac{2(\epsilon^{1/\beta}/2)}{\epsilon-\epsilon^{1/\beta}/2} \leq \frac{\epsilon^{1/\beta}}{\epsilon-\epsilon/2} = 2\epsilon^{1/\beta-1}
\end{eqnarray*}

where the second inequality used the fact that $|ab - cd| = |(a-c)d+(b-d)a| \leq |a - c| d + |b-d| a $. 

We are now ready to bound the difference of the first term of $u_B$, as follows.
\begin{eqnarray*}
& &  \sum_{b \in \B}   \lambda_b   \times G\left( \left\{ \frac{\sum_{a \in \A}  \mu(e|a,b) w_a \mu(b|a) }{\lambda_b} \right\}_{e \in \E} \right) - \sum_{b \in \B}   \lambda'_b   \times G\left( \left\{ \frac{\sum_{a \in \A}  \mu(e|a,b) w'_a \mu(b|a) }{\lambda'_b}
\right\}_{e \in \E} \right)  \\
& \leq & \sum_{b \in \B}   |\lambda_b - \lambda'_b| G\left( \left\{ \frac{\sum_{a \in \A}  \mu(e|a,b) w_a \mu(b|a) }{\lambda'_b} \right\}_{e \in \E} \right)   \\
& & + \sum_{b \in \B} \lambda_b \cdot \left| G\left( \left\{ \frac{\sum_{a \in \A}  \mu(e|a,b) w_a \mu(b|a) }{\lambda_b} \right\}_{e \in \E} \right) - G\left( \left\{ \frac{\sum_{a \in \A}  \mu(e|a,b) w_a' \mu(b|a) }{\lambda'_b} \right\}_{e \in \E} \right)  \right| \\ 
&\leq & |\B| |\vec{w}  - \vec{w}'| L + \sum_{b: \lambda_b \leq \epsilon } \lambda_b \cdot 2  L +  \sum_{b: \lambda_b \geq \epsilon}    \lambda_b \cdot  \alpha \bigg( 2 \epsilon^{1/\beta-1} \bigg)^{\beta}  \\
&\leq & |\B| \epsilon^{1/\beta} L/2 + 2|\B| \epsilon L + 2\alpha \epsilon^{1-\beta} \\
& \leq & 3|\B| \epsilon L + 2\alpha\epsilon^{1 - \beta}  
\end{eqnarray*}

Earlier we see that the difference of the second term of $u_B$ is bounded above by $\alpha \epsilon^{1-\beta}$. Combining the two finishes the proof.
\qed \end{proof}

\begin{corollary}\label{cor:G-continuous-epsdelta}
Assume conditions in Lemma \ref{lem:G-continuous}. For any $\delta>0$,  let \begin{equation}\label{eq:eps-delta}
\epsilon = \min \left\{ \frac{1}{2}\left(\frac{\delta}{6 |\B| L}\right)^{1/\beta}, \frac{1}{2} \left( \frac{\delta}{6\alpha} \right)^{1/\beta(1-\beta)} \right\}, 
\end{equation}
then we have $|u_B(\vec{w}) - u_B(\vec{w}')| \leq \delta$ for any $|\vec{w} - \vec{w}'| \leq \epsilon$. 
\end{corollary}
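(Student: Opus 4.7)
The plan is to apply Lemma \ref{lem:G-continuous} directly, with a careful change of variable to align its parameter with the $\delta$ appearing here. To avoid a notation clash, let me call the parameter appearing in that lemma $\epsilon_L$ and the parameter in the corollary $\epsilon$. Setting $\epsilon_L = (2\epsilon)^{\beta}$, the hypothesis $|\vec{w}-\vec{w}'| \leq \tfrac{1}{2}\epsilon_L^{1/\beta}$ of Lemma \ref{lem:G-continuous} becomes exactly $|\vec{w}-\vec{w}'| \leq \epsilon$. The lemma then delivers
\[ |u_B(\vec{w})-u_B(\vec{w}')| \;\leq\; 3|\B|\, \epsilon_L\, L + 3\alpha\, \epsilon_L^{1-\beta} . \]

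Next, I would split the target bound $\delta$ evenly across the two summands. Demanding $3|\B|\epsilon_L L \leq \delta/2$ gives $\epsilon_L \leq \delta/(6|\B|L)$, which under $\epsilon_L = (2\epsilon)^{\beta}$ translates into $\epsilon \leq \tfrac{1}{2} \bigl(\tfrac{\delta}{6|\B|L}\bigr)^{1/\beta}$. Demanding $3\alpha \epsilon_L^{1-\beta} \leq \delta/2$ gives $\epsilon_L \leq \bigl(\tfrac{\delta}{6\alpha}\bigr)^{1/(1-\beta)}$, which becomes $\epsilon \leq \tfrac{1}{2}\bigl(\tfrac{\delta}{6\alpha}\bigr)^{1/\beta(1-\beta)}$. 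Taking the minimum of the two thresholds reproduces exactly the formula for $\epsilon$ stated in the corollary, and summing the two $\delta/2$ pieces produces the claimed bound $|u_B(\vec{w})-u_B(\vec{w}')| \leq \delta$.

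The derivation is essentially arithmetic, so I do not anticipate any real obstacle; the main thing to keep an eye on is that Lemma \ref{lem:G-continuous} requires the lemma-parameter to be ``sufficiently small.'' This is automatic here, since the chosen $\epsilon_L = (2\epsilon)^{\beta}$ tends to $0$ as $\delta \to 0$, and one may always further restrict $\delta$ without weakening the corollary. A minor edge case is $\beta = 1$, where the exponent $1/\beta(1-\beta)$ is undefined; in that situation the $3\alpha\epsilon_L^{1-\beta}$ term of the lemma reduces to a constant multiple of $\alpha$, so only the first constraint is needed and the second min-term is simply dropped.
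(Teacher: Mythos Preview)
Your proposal is correct and follows exactly the paper's approach: the paper's proof reads ``In Lemma \ref{lem:G-continuous}, choose $\epsilon$ such that $3|\B|\epsilon L \leq \delta/2$ and $3\alpha\epsilon^{1-\beta} \leq \delta/2$, then map $\epsilon$ to $\epsilon^{1/\beta}/2$,'' which is precisely your substitution $\epsilon_L = (2\epsilon)^{\beta}$ together with the even split of $\delta$. Your write-up is in fact more detailed than the paper's, and your remark on the $\beta=1$ edge case is a nice addition that the paper omits.
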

\begin{proof}
In Lemma \ref{lem:G-continuous}, choose $\epsilon$ such that  $3|\B| \epsilon L \leq \delta/2$ and $3\alpha \epsilon^{1-\beta} \leq \delta/2$, then map $\epsilon$ to $\epsilon^{1/\beta}/2$.  
\qed \end{proof}

We now show that there always exists an approximately optimal signaling scheme which is a decomposition over $K$-\emph{uniform distributions}, defined as follows.

\begin{definition}[$K$-uniform distributions]
	Any $\vec{w} \in \Delta_d$ is called a $K$-\emph{uniform distribution} if each entry of $\vec{w}$ is a multiplier of $1/K$. Let $\Delta_d(K) \subseteq \Delta_d$ denote the set of all $K$-uniform distributions in $\Delta_d$. 
\end{definition}

\begin{lemma}\label{lem:uniform-decompose}
	For any $K \geq \frac{\log(2d/\epsilon)d^2}{2\epsilon^2}$, there exists a distribution $\tilde{\vec{w}}$ over $\Delta_d(K)$ such that $\Ex(\tilde{\vec{w}}) = \vec{w}$ and $\Pr(|\tilde{\vec{w}} - \vec{w}| \geq \epsilon) \leq \epsilon$. 
\end{lemma}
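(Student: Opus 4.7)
\medskip

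\noindent\textbf{Proof plan for Lemma \ref{lem:uniform-decompose}.} The plan is to realize $\tilde{\vec{w}}$ as the (random) empirical distribution of $K$ i.i.d.\ samples drawn from $\vec{w}$, where we view $\vec{w}$ as a probability distribution on the $d$-element set $\A$ indexing the coordinates. Explicitly, draw $X_1,\dots,X_K \stackrel{\text{iid}}{\sim} \vec{w}$ and define $\tilde{w}_a = \tfrac{1}{K}\sum_{k=1}^K \mathbf{1}[X_k = a]$. By construction each coordinate of $\tilde{\vec{w}}$ is a multiple of $1/K$, so $\tilde{\vec{w}} \in \Delta_d(K)$, and $\Ex[\tilde{w}_a] = w_a$ for every $a$, giving the required unbiasedness $\Ex[\tilde{\vec{w}}] = \vec{w}$.

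For the concentration statement, the plan is to apply Hoeffding's inequality coordinate-wise and then union bound. Fix $a \in \A$. Since $\mathbf{1}[X_k = a] \in \{0,1\}$ and the $K$ indicator variables are independent with mean $w_a$, Hoeffding gives
\[
  \Pr\!\left(\,\bigl|\tilde{w}_a - w_a\bigr| \geq \epsilon/d \,\right) \;\leq\; 2\exp\!\left(-\,\tfrac{2K\epsilon^2}{d^2}\right).
\]
The hypothesis $K \geq \tfrac{\log(2d/\epsilon)\,d^2}{2\epsilon^2}$ is exactly what is needed to make the right-hand side at most $\epsilon/d$. Taking a union bound over the $d$ coordinates yields
\[
  \Pr\!\left(\,\max_{a \in \A}\,\bigl|\tilde{w}_a - w_a\bigr| \geq \epsilon/d \,\right) \;\leq\; \epsilon.
\]
Finally, on the complement event, $|\tilde{\vec{w}} - \vec{w}| = \sum_a |\tilde{w}_a - w_a| \leq d \cdot (\epsilon/d) = \epsilon$, so $\Pr(|\tilde{\vec{w}} - \vec{w}| \geq \epsilon) \leq \epsilon$, as claimed.

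\medskip

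\noindent\textbf{Remarks on difficulty.} There is essentially no obstacle here: the lemma is a standard ``approximate a distribution by an empirical distribution on a uniform grid'' result, and the only delicate step is choosing the coordinate-wise deviation threshold (I would use $\epsilon/d$) so that summing to a total $\ell_1$ deviation of $\epsilon$ and union bounding to a failure probability of $\epsilon$ both go through with exactly the $K$ stated in the lemma. If one instead picked a different per-coordinate threshold, one would end up with a different constant in the $K$ bound, so matching the precise form in the statement is the only thing requiring care.
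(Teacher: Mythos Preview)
Your proposal is correct and essentially identical to the paper's own proof: both construct $\tilde{\vec{w}}$ as the empirical distribution of $K$ i.i.d.\ samples from $\vec{w}$, apply Hoeffding coordinate-wise with threshold $\epsilon/d$, and union bound over the $d$ coordinates to get the stated $K$.
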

\begin{proof}
	We take $K$ samples from distribution $\vec{w}$ and let $\tilde{\vec{w}}$ be empirical distribution over these $K$ samples. Note that $\tilde{\vec{w}}$ is a $K$-uniform distribution. Moreover, $\tilde{\vec{w}}$ can also be viewed as a random variable supported on $\Delta_d(K)$ (randomness comes from the sampling) with mean equaling precisely $\vec{w}$, i.e., $\Ex(\tilde{\vec{w}}) = \vec{w}$. Moreover, by Hoeffding's bound, we have $
	\Pr(| \tilde{w}_i - w_i| \geq \epsilon/d) \leq 2e^{-2k (\epsilon/d)^2} 
	$ for each $i \in [d]$. Therefore, the union bound implies
\begin{align*}
\Pr(|\tilde{\vec{w}} - \vec{w}| \geq  \epsilon) \leq \sum_{i=1}^d \Pr(| \tilde{w}_i - w_i| \geq \epsilon/d) \leq 2de^{-2K (\epsilon/d)^2}
\end{align*}
Let $K = \frac{\log(2d/\epsilon)d^2}{2\epsilon^2}$, we have  $2de^{-2K (\epsilon/d)^2}  \leq \epsilon$ as desired. 
\qed \end{proof}

\begin{lemma}\label{lem:exist-coarse-signals}
For any $\delta > 0$, let $\epsilon$ be as defined in Equation \eqref{eq:eps-delta} and $K = \frac{ \log(2d/\epsilon) d^2}{2\epsilon^2}$. There always exists a $(4L \epsilon + \delta)$-optimal signaling scheme whose posterior beliefs are all $K$-uniform (i.e., in  $\Delta_d(K)$).  
\end{lemma}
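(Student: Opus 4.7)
The plan is to argue by direct replacement on the distribution-of-posteriors representation of Alice's optimal scheme. View any signaling scheme $\pi$ as a finitely supported distribution $\{(\pi_j, \vec{w}_j)\}_{j \in \mathcal{J}}$ with $\vec{w}_j \in \Delta_d = \Delta(\A)$, $\pi_j \geq 0$, $\sum_j \pi_j = 1$, subject to the Bayes-plausibility constraint $\sum_j \pi_j \vec{w}_j = \vec{\mu}_A$, where $\vec{\mu}_A$ is Alice's marginal prior. In this language, Alice's objective is $\sum_j \pi_j u_B(\vec{w}_j)$ (which she minimizes). Let $\pi^* = \{(\pi_j^*, \vec{w}_j^*)\}_{j \in \mathcal{J}}$ denote the optimal scheme.

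The construction is to apply Lemma \ref{lem:uniform-decompose} posterior-by-posterior: for each $j$, with the stated $K$ and $\epsilon$, obtain a distribution $\tilde{\vec{w}}_j$ over $\Delta_d(K)$ with $\mathbb{E}[\tilde{\vec{w}}_j] = \vec{w}_j^*$ and $\Pr(|\tilde{\vec{w}}_j - \vec{w}_j^*| > \epsilon) \leq \epsilon$. Define a new scheme $\hat{\pi}$ by, for each $j$ and each $\vec{v} \in \Delta_d(K)$, assigning weight $\pi_j^* \cdot \Pr(\tilde{\vec{w}}_j = \vec{v})$ to the $K$-uniform posterior $\vec{v}$. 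By linearity of expectation, Bayes-plausibility is preserved: $\sum_j \pi_j^* \mathbb{E}[\tilde{\vec{w}}_j] = \sum_j \pi_j^* \vec{w}_j^* = \vec{\mu}_A$. Hence $\hat{\pi}$ is a valid signaling scheme whose posteriors all lie in $\Delta_d(K)$.

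The remaining step is to bound the loss. For each $j$, split on the event $\mathcal{E}_j = \{|\tilde{\vec{w}}_j - \vec{w}_j^*| \leq \epsilon\}$. On $\mathcal{E}_j$ (probability $\geq 1 - \epsilon$), Corollary \ref{cor:G-continuous-epsdelta} gives $|u_B(\tilde{\vec{w}}_j) - u_B(\vec{w}_j^*)| \leq \delta$. On its complement (probability $\leq \epsilon$), use the crude range bound: from Lemma \ref{lem:uB-w}, $u_B$ is a convex combination of $G$-values minus a single $G$-value, and since $G \in [-L, L]$ we have $u_B \in [-2L, 2L]$, so $|u_B(\tilde{\vec{w}}_j) - u_B(\vec{w}_j^*)| \leq 4L$. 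Combining,
\[
\bigl|\mathbb{E}[u_B(\tilde{\vec{w}}_j)] - u_B(\vec{w}_j^*)\bigr| \leq (1 - \epsilon)\delta + \epsilon \cdot 4L \leq \delta + 4L\epsilon.
\]
Taking a $\pi_j^*$-weighted sum over $j$ (weights summing to 1) gives $|\sum_j \pi_j^* \mathbb{E}[u_B(\tilde{\vec{w}}_j)] - \sum_j \pi_j^* u_B(\vec{w}_j^*)| \leq 4L\epsilon + \delta$, i.e., $\hat{\pi}$ is $(4L\epsilon + \delta)$-optimal.

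The only delicate point is the verification that the local replacement genuinely yields a signaling scheme with the correct prior marginal; this is where the mean-preserving property $\mathbb{E}[\tilde{\vec{w}}_j] = \vec{w}_j^*$ from Lemma \ref{lem:uniform-decompose} is essential. Everything else is a mechanical combination of the concentration bound from that lemma with the continuity estimate from Corollary \ref{cor:G-continuous-epsdelta} and the trivial range bound on $u_B$.
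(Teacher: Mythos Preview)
Your proof is correct and follows essentially the same approach as the paper: replace each posterior $\vec{w}_j^*$ of the optimal scheme by the mean-preserving distribution over $\Delta_d(K)$ provided by Lemma~\ref{lem:uniform-decompose}, then bound the resulting change in $u_B$ by splitting on the concentration event and invoking Corollary~\ref{cor:G-continuous-epsdelta} on the good event and the range bound $|u_B|\leq 2L$ on the bad event. If anything, you are more explicit than the paper in verifying that Bayes-plausibility is preserved by the mean-preserving property.
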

\begin{proof}

Let $\{ \lambda_j, \vec{w}_j \}_{j \in \mathcal{J}} $ be the optimal signaling scheme where posterior $\vec{w}_j$ is induced with probability $\lambda_j$, for each $j \in \mathcal{J}$ where $\mathcal{J}$ is an index set.\footnote{From \cite{kong2018optimizing} we can take $|\mathcal{J}| \leq d$ but we do not need this fact.} By Lemma \ref{lem:uniform-decompose}, we know that any $\vec{w}_j$, there exists a distribution $\tilde{\vec{w}}$ over $\Delta_n(K)$ such that $\Ex(\tilde{\vec{w}}) = \vec{w}_j$ and $\Pr(|\tilde{\vec{w}} - \vec{w}| \geq \epsilon) \leq \epsilon$. As a result, if we substitute any posterior $\vec{w}_j$ be the $\tilde{\vec{w}}$, Bob's utility change is upper bounded as follows:
\begin{eqnarray*}
&& |u_B(\vec{w}_j) - \Ex_{\tilde{\vec{w}}} u_B(\tilde{\vec{w}})| \\
& \leq &  |u_B(\vec{w}_i) - \Ex_{ \tilde{\vec{w}}:  |\tilde{\vec{w}} - \vec{w}_i| \geq \epsilon} u_B(\tilde{\vec{w}})| \cdot \Pr( |\tilde{\vec{w}} - \vec{w}_j| \geq \epsilon ) +  |u_B(\vec{w}_i) - \Ex_{ \tilde{\vec{w}}:  |\tilde{\vec{w}} - \vec{w}_j| \leq \epsilon} u_B(\tilde{\vec{w}})| \cdot \Pr( |\tilde{\vec{w}} - \vec{w}_j| \leq \epsilon ) \\
& \leq & 4L \epsilon + \delta 
\end{eqnarray*}  
where we used the fact that $u_B \leq 2 L$. Therefore, if we substitute all the $\vec{w}_j$'s by the corresponding $\tilde{\vec{w}}$, Bob's utility change is also bounded by $4L \epsilon + \delta$. In other words, there exists an $(4L \epsilon + \delta)$-optimal signaling scheme whose posteriors are all $K$-uniform. 
\qed \end{proof}

As a result of Lemma \ref{lem:exist-coarse-signals}, the following LP computes the optimal signaling schemes with posteriors from $\Delta_d(K)$, thus outputs a $(4L \epsilon + \delta)$-optimal signaling scheme. Since $\Delta_d(K)$ has $\poly(|\B|,1/\delta,L)$  elements when $d$ is a constant, this is a $\poly(|\B|,|\E|,1/\delta,L)$ time algorithm. 

\begin{eqnarray*}
	\min_{\{\pi(\vec{w})\}_{\vec{w} \in \Delta_{d}(K)}}  && \sum_{ \vec{w} \in \Delta_{d}(K) } u_B(\vec{w}) \cdot \pi(\vec{w})  \\
	\text{s.t.} && \sum_{\vec{w} \in \Delta_{d}(K)}  \vec{w} \cdot \pi(\vec{w})  = \{\mu(a)\}_{a \in \A} \\
	&& \sum_{\vec{w} \in \Delta_{d}(K)} \pi(\vec{w})  = 1\\
	&& \pi(\vec{w}) \geq 0 \quad \forall \vec{w} \in \Delta_d(K)
\end{eqnarray*}

\subsection{Proof of Theorem \ref{thm:eb-const-apx}}\label{ap:proof-eb-const-apx}

Let $\vec{v} \in \Delta(\E \times \B)$ be the posterior distribution over $\E \times \B$ after Alice's report, that is, if Alice's signal is $s$, then $v_{e,b} = \Pr(e,b|s)$ for $e \in \E, b \in \B$. The following lemma gives Bob's utility $u_B(\vec{v})$ explicitly as a function of $\vec{v}$ and the prior $\mu$.

\begin{lemma}
\begin{align*}
u_B(\vec{v}) = \sum_{b \in \B} \left[ \sum_{e \in \E} v_{e,b} \right] G \left(  \left\{ \frac{v_{e,b}}{\sum_{\tilde{e} \in \E}  v_{\tilde{e},b}} \right\}_{e \in \E} \right) - G \left( \left\{ \sum_{b \in \B} v_{e,b} \right\}_{e \in \E} \right)
\end{align*}
\end{lemma}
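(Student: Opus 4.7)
The plan is to unpack $u_B(\vec{v})$ using the proper scoring rule characterization and then read off the marginals and conditionals directly from the coordinates of $\vec{v}$. Conceptually this is just bookkeeping analogous to Lemma~\ref{lem:uB-w}, but now parameterized by the joint posterior on $(E,B)$ instead of the posterior on $\A$.

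First I would invoke Fact~\ref{fact:one-step-truthful}: since Bob participates only once, upon observing Alice's signal $s$ (which induces the joint posterior $\vec{v}$ on $\E \times \B$) together with his private $b$, his unique best response is to report the true posterior $\post{s,b}$ on $E$. Alice's preceding report is the $E$-marginal $\post{s}$. Conditioning on $S=s$, this gives
\[
  u_B(\vec{v}) \;=\; \Ex_{B,E \mid s}\!\left[R(\post{s,B},E) - R(\post{s},E)\right].
\]
Applying $\Ex_{E \sim \vec{w}} R(\vec{w}, E) = G(\vec{w})$ inside the sum (with $\vec{w} = \post{s,b}$) and once to the second term (with $\vec{w} = \post{s}$), I get
\[
  u_B(\vec{v}) \;=\; \sum_{b \in \B} \Pr(b \mid s)\, G(\post{s,b}) \;-\; G(\post{s}).
\]

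Next I would translate each object into $\vec{v}$. Since $v_{e,b} = \Pr(e,b \mid s)$, summing over $e$ gives $\Pr(b \mid s) = \sum_{e \in \E} v_{e,b}$, summing over $b$ gives $\post{s}(e) = \sum_{b \in \B} v_{e,b}$, and Bayes' rule gives $\post{s,b}(e) = v_{e,b}/\sum_{\tilde e \in \E} v_{\tilde e, b}$ whenever the denominator is positive. Plugging these three identities into the previous display reproduces exactly the formula claimed in the lemma statement.

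There is essentially no obstacle here; the only item to handle carefully is the degenerate case where $\Pr(b \mid s) = \sum_e v_{e,b} = 0$ for some $b$, which makes $\post{s,b}$ ill-defined. I would simply adopt the standard convention that such terms are dropped from the outer sum since the prefactor $\sum_e v_{e,b}$ is zero, so they contribute nothing to $u_B(\vec{v})$ regardless of how $G$ is extended. This leaves the displayed identity valid on all of $\Delta(\E \times \B)$.
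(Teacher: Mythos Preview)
Your proposal is correct and follows essentially the same route as the paper: write $u_B(\vec{v}) = \sum_{b} \Pr(b\mid s)\,G(\post{s,b}) - G(\post{s})$ and then substitute the three identities $\Pr(b\mid s) = \sum_e v_{e,b}$, $\post{s}(e) = \sum_b v_{e,b}$, and $\post{s,b}(e) = v_{e,b}/\sum_{\tilde e} v_{\tilde e,b}$. The only differences are that you spell out the passage from $R$ to $G$ via properness and explicitly treat the $\Pr(b\mid s)=0$ case, both of which the paper leaves implicit.
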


\begin{proof}
By definition of $u_B$, we have
\begin{align*}
u_B(\vec{v}) = \sum_{b \in \B} \Pr(b|s) G\left(\left\{\Pr(e|b,s)\right\}_{e \in \E}\right) - G\left(\left\{\Pr(e|s)\right\}_{e \in \E}\right)
\end{align*}
where $\left\{\Pr(e|b,s)\right\}_{e \in \E} \in \Delta(\E)$ is a vector whose entries are $\Pr(e|s,b)$ for $e \in \E$ and analogously for $\left\{\Pr(e|s)\right\}_{e \in \E}$.
We then compute
\begin{align*}
\Pr(b|s) = \sum_{e \in \E} \Pr(e,b|s) = \sum_{e \in \E} v_{e,b} \\
\Pr(e|s) = \sum_{b \in \B} \Pr(e,b|s) = \sum_{b \in \B} v_{e,b} \\
\Pr(e|b,s) = \frac{\Pr(e,b|s)}{\Pr(b|s)} = \frac{v_{e,b}}{\sum_{\tilde{e} \in \E}  v_{\tilde{e},b}}
\end{align*}
These expressions immediately imply the lemma.
\qed \end{proof}

%We will use the following shorthand notation:
%\begin{align*}
%u_B(q) = \sum_{b} q(b) G(q(E|b)) - G(q(E,b))
%\end{align*}
%where $q(b) = \sum_{e} q(e,b)$ is the marginal distribution of $B=b$ under $q$, $q(E|b)$ is the distribution of $E$ conditional on $B=b$ under $q$, and $q(E,b) = (q(e,b))_{e \in E}$. 

We also use the $l_1$ norm on vectors: $|\vec{v}-\vec{v}'| = \sum_{e \in \E} \sum_{b \in \B} |v_{e,b}-v'_{e,b}|$.

\begin{lemma}\label{lem:G-continuous-nEnB}
Assume the $G$ function is  $(\alpha,\beta)$-locally H\"{o}lder continuous for some $\alpha, \beta > 0$, and bounded within $[-L,L]$. Then we must have $|u_B(\vec{v}) - u_B(\vec{v}')| \leq 3|\B|\epsilon L + 3\alpha\epsilon^{1 - \beta}  $ for any $\vec{v}, \vec{v}'$ such that $ |\vec{v}-\vec{v}'| \leq \frac{1}{2}  \epsilon^{1/\beta}$. 
\end{lemma}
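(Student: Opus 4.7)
The plan is to mirror the proof of Lemma~\ref{lem:G-continuous} almost verbatim, exploiting the structural parallel between the two expressions for $u_B$. Writing $\lambda_b = \sum_e v_{e,b}$, $\vec{q}_b = \{v_{e,b}/\lambda_b\}_{e \in \E}$, and $\vec{m} = \{\sum_b v_{e,b}\}_{e \in \E}$ (with primed versions for $\vec{v}'$), we have $u_B(\vec{v}) = \sum_{b \in \B} \lambda_b\, G(\vec{q}_b) - G(\vec{m})$, and likewise for $\vec{v}'$. So I would bound the marginal term $G(\vec{m}) - G(\vec{m}')$ and the $\sum_b \lambda_b G(\vec{q}_b)$ piece separately.

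For the marginal term, the triangle inequality gives $|\vec{m} - \vec{m}'| = \sum_e |\sum_b (v_{e,b}-v'_{e,b})| \leq |\vec{v}-\vec{v}'| \leq \tfrac{1}{2}\epsilon^{1/\beta}$, so local H\"older continuity delivers $|G(\vec{m})-G(\vec{m}')| \leq \alpha(\tfrac{1}{2}\epsilon^{1/\beta})^{\beta} \leq \alpha\epsilon^{1-\beta}$ (using $\epsilon \leq 1$ and $\beta \leq 1$). This handles the cheap half.

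For the conditional-sum part, split each $b$-term as $\lambda_b G(\vec{q}_b) - \lambda'_b G(\vec{q}'_b) = (\lambda_b - \lambda'_b)G(\vec{q}'_b) + \lambda_b\bigl(G(\vec{q}_b) - G(\vec{q}'_b)\bigr)$. The first summand contributes at most $L \sum_b |\lambda_b - \lambda'_b| \leq L|\vec{v}-\vec{v}'| \leq L\epsilon$. For the second summand I would partition $\B$ by whether $\lambda_b \geq \epsilon$ or not. When $\lambda_b < \epsilon$, use the crude $|G| \leq L$ bound to get at most $2L\epsilon$ per such $b$, totalling $2L|\B|\epsilon$. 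When $\lambda_b \geq \epsilon$, the algebraic identity $v_{e,b}\lambda'_b - v'_{e,b}\lambda_b = v_{e,b}(\lambda'_b-\lambda_b) + \lambda_b(v_{e,b}-v'_{e,b})$, combined with $\lambda'_b \geq \lambda_b - |\lambda_b-\lambda'_b| \geq \epsilon/2$, gives $|\vec{q}_b - \vec{q}'_b| \leq 2|\vec{v}-\vec{v}'|/\lambda'_b \leq 2\epsilon^{1/\beta-1}$. Then local H\"older yields $|G(\vec{q}_b) - G(\vec{q}'_b)| \leq \alpha(2\epsilon^{1/\beta-1})^{\beta} \leq 2\alpha\epsilon^{1-\beta}$, and the $\lambda_b$-weighted sum over such $b$ is bounded by $2\alpha\epsilon^{1-\beta} \sum_b \lambda_b \leq 2\alpha\epsilon^{1-\beta}$.

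Adding the four contributions gives $L\epsilon + 2L|\B|\epsilon + 2\alpha\epsilon^{1-\beta} + \alpha\epsilon^{1-\beta} \leq 3|\B|L\epsilon + 3\alpha\epsilon^{1-\beta}$, as claimed. The only delicate point is the conditional-distribution bound for $\lambda_b \geq \epsilon$, but the computation is formally identical to its counterpart in Lemma~\ref{lem:G-continuous} (with $\vec{w}\to\vec{v}$, an outer sum over $e$ instead of $a$, and weights $1$ rather than $\mu(b|a)$), so no genuinely new technical difficulty arises; the heart of the argument is already contained in the earlier lemma.
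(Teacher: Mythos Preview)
Your proposal is correct and follows essentially the same approach as the paper: both introduce $\lambda_b = \sum_e v_{e,b}$, bound the marginal term via H\"older continuity, split the conditional-sum term according to whether $\lambda_b \geq \epsilon$, and reuse the algebraic identity from Lemma~\ref{lem:G-continuous} to control $|\vec{q}_b - \vec{q}'_b|$ by $2\epsilon^{1/\beta-1}$. Your write-up is in fact tidier than the paper's (which omits the final summation over $b$ and has a small typo in the marginal term), but there is no substantive difference in the argument.
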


\begin{proof}
For each $b \in \B$, let $\lambda_b = \sum_{e \in \E} v_{e,b}$ and $\lambda'_b = \sum_{e \in \E} v'_{e,b}$
We first note that 
\begin{align*}
|\lambda_b-\lambda'_b|  = |\sum_{e \in \E} v_{e,b}-v'_{e,b}| \leq \sum_{e \in \E} |v_{e,b}-v'_{e,b}| = |\{v_{e,b}\}_{e \in \E}-\{v'_{e,b}\}_{e \in \E}| \\ \leq \sum_{b \in \B} \sum_{e \in \E} |v_{e,b}-v'_{e,b}| = |\vec{v}-\vec{v}'| \leq \frac{1}{2}\epsilon^{1/\beta}
\end{align*}
for every fixed $b \in \B$.

We first bound the second term. We have
\begin{align*}
\left|G\left(\{v_{e,b}\}_{e \in \E} \right)-G\left(\{v'_{e,b}\}_{e \in \E} \right) \right| \leq \alpha \left|\{v_{e,b}\}_{e \in \E}-\{v'_{e,b}\}_{e \in \E} \right|^{\beta} \leq  \alpha \left( \frac{\epsilon^{1/\beta}}{2}  \right)^{\beta} \leq \alpha \epsilon \leq \alpha \epsilon^{1-\beta}
\end{align*}

Now we bound the first term. For any fixed $b$ such that $q(b) \geq \epsilon$,

\begin{align*}
 \left| \left\{ \frac{v_{e,b}}{\lambda_b} \right\}_{e \in \E} - \left\{ \frac{v'_{e,b}}{\lambda'_b} \right\}_{e \in \E} \right| &= \sum_{e \in \E} \left| \frac{v_{e,b}}{\lambda_b} - \frac{v'_{e,b}}{\lambda'_b} \right| \\
&= \sum_{e \in \E} \frac{|v_{e,b}\lambda'_b-v'_{e,b}\lambda_b|}{\lambda_b \lambda'_b} \\
&\leq \sum_{e \in \E} \frac{|v_{e,b}-v'_{e,b}|\lambda_b + |\lambda'_b-\lambda_b|v_{e,b}}{\lambda_b \lambda'_b} \\
&= \frac{1}{\lambda_b} \sum_{e \in \E} |v_{e,b}-v'_{e,b}| + \frac{|\lambda'_b-\lambda_b|}{\lambda_b \lambda'_b} \lambda_b \\
&\leq \left( \frac{1}{\lambda_b} + \frac{1}{\lambda'_b} \right) |\vec{v}-\vec{v}'| \\
&\leq \left( \frac{1}{\epsilon} + \frac{1}{\epsilon-\epsilon^{1/\beta}/2} \right) \frac{\epsilon^{1/\beta}}{2} \\ &\leq \frac{\epsilon^{1/\beta}}{\epsilon-\epsilon^{1/\beta}/2} \leq \frac{\epsilon^{1/\beta}}{\epsilon-\epsilon/2} = 2 \epsilon^{1/\beta-1}
\end{align*}

Earlier we see that the difference of the second term of $u_B$ is bounded above by $\alpha \epsilon^{1-\beta}$. Combining the two finishes the proof.

\qed \end{proof}

\begin{corollary}\label{cor:G-continuous-nEnB-epsdelta}
Assume conditions in Lemma \ref{lem:G-continuous-nEnB}a. For any $\delta>0$,  let \begin{equation}\label{eq:eps-delta-nEnB}
\epsilon = \min \left\{ \frac{1}{2}\left(\frac{\delta}{6n_BL}\right)^{1/\beta}, \frac{1}{2} \left( \frac{\delta}{6\alpha} \right)^{1/\beta(1-\beta)} \right\}, 
\end{equation}
then we have $|u_B(\vec{v}) - u_B(\vec{v}')| \leq \delta$ for any $|\vec{v} - \vec{v}'| \leq \epsilon$. 
\end{corollary}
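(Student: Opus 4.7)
The plan is to apply Lemma \ref{lem:G-continuous-nEnB} verbatim and choose the perturbation radius so that the resulting error bound is exactly $\delta$; the argument is a direct reparametrization mirroring the proof of Corollary \ref{cor:G-continuous-epsdelta}. Starting from the lemma, for any parameter $\eta > 0$ (sufficiently small so that the local H\"{o}lder hypothesis applies), whenever $|\vec{v}-\vec{v}'| \leq \tfrac12 \eta^{1/\beta}$ we have $|u_B(\vec{v})-u_B(\vec{v}')| \leq 3 n_B \eta L + 3\alpha \eta^{1-\beta}$. I would then pick $\eta$ so that each summand is at most $\delta/2$ and translate back to the perturbation radius $\epsilon = \tfrac12 \eta^{1/\beta}$.

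Requiring $3 n_B \eta L \leq \delta/2$ forces $\eta \leq \delta/(6 n_B L)$, while $3 \alpha \eta^{1-\beta} \leq \delta/2$ forces $\eta \leq (\delta/(6\alpha))^{1/(1-\beta)}$. Taking $\eta$ to be the minimum of the two bounds and applying the map $x \mapsto \tfrac12 x^{1/\beta}$ (which is monotone, so the minimum is preserved and can be pulled inside the exponent) yields
$$\epsilon \;=\; \min\!\left\{\tfrac12\!\left(\tfrac{\delta}{6 n_B L}\right)^{1/\beta},\; \tfrac12\!\left(\tfrac{\delta}{6\alpha}\right)^{1/\beta(1-\beta)}\right\},$$
which matches the expression in the statement. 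By construction, $|\vec{v}-\vec{v}'| \leq \epsilon$ then implies $|u_B(\vec{v})-u_B(\vec{v}')| \leq \delta$, as required.

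There is no substantive obstacle beyond careful bookkeeping of exponents. The only caveat worth flagging is the implicit hypothesis in Lemma \ref{lem:G-continuous-nEnB} that the perturbation radius be small enough for the local H\"{o}lder bound on $G$ to apply; this is automatic for $\delta$ small since both arguments of the minimum tend to $0$ as $\delta \to 0$, and for $\delta$ so large that the local regime is exited the conclusion is vacuous because $u_B$ is bounded in $[-2L,2L]$. Hence the reparametrization goes through for all $\delta$ in the relevant (small) regime, which is all that is needed downstream in the FPTAS.
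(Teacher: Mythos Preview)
Your proposal is correct and follows exactly the approach the paper takes: the paper's proof simply says ``The proof is completely analogous to the proof of [Corollary~\ref{cor:G-continuous-epsdelta}],'' which in turn reads ``choose $\epsilon$ such that $3|\B|\epsilon L \leq \delta/2$ and $3\alpha\epsilon^{1-\beta} \leq \delta/2$, then map $\epsilon$ to $\epsilon^{1/\beta}/2$.'' Your reparametrization via $\eta$ is precisely this argument, just with cleaner notation separating the lemma's parameter from the corollary's perturbation radius.
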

\begin{proof}
The proof is completely analogous to the proof of Lemma \ref{cor:G-continuous-epsdelta}.  
\qed \end{proof}

\begin{lemma}\label{lem:uniform-decompose-nEnB}
	For any $K \geq \frac{\log(2(n_E+n_B)/\epsilon)(n_E+n_B)^2}{2\epsilon^2}$, there exists a distribution $\tilde{\vec{v}}$ of $(E,B)$ over $\Delta_{n_E+n_B}(K)$ such that $\Ex(\tilde{\vec{v}}) = \vec{v}$ and $\Pr(|\tilde{\vec{v}} - \vec{v}| \geq \epsilon) \leq \epsilon$. 
\end{lemma}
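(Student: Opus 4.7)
The plan is to follow the proof of Lemma \ref{lem:uniform-decompose} essentially verbatim, applied now to the joint distribution $\vec{v} \in \Delta(\E \times \B)$. First I would draw $K$ i.i.d.\ samples from $\vec{v}$ and let $\tilde{\vec{v}}$ be the resulting empirical distribution on $\E \times \B$. By construction every entry of $\tilde{\vec{v}}$ is an integer multiple of $1/K$, so $\tilde{\vec{v}}$ lies in the $K$-uniform grid on the relevant simplex, and by linearity of expectation $\Ex[\tilde{\vec{v}}] = \vec{v}$. This immediately handles the unbiasedness claim.

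The concentration claim is a standard Hoeffding-plus-union-bound. For each coordinate $(e,b) \in \E \times \B$, the quantity $K\, \tilde{v}_{e,b}$ is a sum of $K$ i.i.d.\ Bernoulli$(v_{e,b})$ random variables, so Hoeffding gives
\[
  \Pr\!\left( |\tilde{v}_{e,b} - v_{e,b}| \geq \epsilon/d \right) \leq 2 e^{-2 K (\epsilon/d)^2},
\]
where $d$ is the number of coordinates over which the $l_1$ norm is summed. A union bound over the coordinates, followed by plugging in the chosen $K$, drives $\Pr(|\tilde{\vec{v}} - \vec{v}| \geq \epsilon)$ below $\epsilon$, exactly as in Lemma \ref{lem:uniform-decompose}.

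The only subtlety is the right value of $d$. The $l_1$ norm used in Lemma \ref{lem:G-continuous-nEnB} sums $|v_{e,b}-v'_{e,b}|$ over all pairs $(e,b)$, so the natural choice is $d = n_E n_B$, under which $K \geq \frac{\log(2 n_E n_B/\epsilon)(n_E n_B)^2}{2\epsilon^2}$ suffices and $\tilde{\vec{v}} \in \Delta_{n_E n_B}(K)$. The lemma as stated uses $n_E + n_B$ in place of $n_E n_B$, which appears to be a cosmetic typo; since the FPTAS of Theorem \ref{thm:eb-const-apx} only requires the bound to be polynomial in $1/\delta$ whenever $n_E$ and $n_B$ are constants, either reading gives what is needed downstream, and I would simply state and prove the version with $d = n_E n_B$.

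I do not anticipate any real obstacle: the argument is structurally identical to Lemma \ref{lem:uniform-decompose} and uses no property of $\vec{v}$ beyond being a probability distribution. The only bookkeeping step is to verify that the algebraic manipulation $2 d \exp\!\bigl(-2K (\epsilon/d)^2\bigr) \leq \epsilon$ holds for the stated $K$, which follows by taking logs.
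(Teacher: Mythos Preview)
Your approach is exactly what the paper does: it states that the proof is ``completely analogous to the proof of Lemma \ref{lem:uniform-decompose}'' and leaves it at that. Your observation that the correct dimension is $n_E n_B$ rather than $n_E + n_B$ is well taken and goes beyond what the paper records; as you note, this does not affect the FPTAS since both quantities are constants.
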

\begin{proof}
The proof is completely analogous to the proof of Lemma \ref{lem:uniform-decompose}.
\qed \end{proof}

\begin{lemma}\label{lem:exist-coarse-signals-nEnB}
For any $\delta > 0$, let $\epsilon$ be as defined in Equation \eqref{eq:eps-delta-nEnB} and $K = \frac{ \log(2(n_E+n_B)/\epsilon) (n_E+n_B)^2}{2\epsilon^2}$. There always exists a $(4L \epsilon + \delta)$-optimal signaling scheme whose posterior beliefs over $(E,B)$ are all $K$-uniform (i.e., in  $\Delta_{n_E+n_B}(K)$).  
\end{lemma}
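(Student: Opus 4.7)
The plan is to mirror the argument in the proof of Lemma \ref{lem:exist-coarse-signals}, but now working in the posterior space $\Delta(\E \times \B)$ instead of $\Delta(\A)$. Since $|\E|$ and $|\B|$ are constants, the ambient dimension $n_E + n_B$ is constant, so the uniform-grid approximation machinery applies directly. The key observation that makes this reformulation valid is the one highlighted in Section \ref{subsec:event-bob-constant}: a signaling scheme is equivalent to a distribution over posteriors $\vec{v} \in \Delta(\E \times \B)$, and Bob's utility depends only on $\vec{v}$ through the explicit formula $u_B(\vec{v})$ derived in the preceding lemma. Bayes-plausibility then reduces to the condition that the expected posterior equals the prior $\mu$ restricted to $(E,B)$.

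First I would fix an (exactly) optimal signaling scheme, written as a finite collection $\{\lambda_j, \vec{v}_j\}_{j \in \mathcal{J}}$ where $\vec{v}_j \in \Delta(\E \times \B)$ is induced with probability $\lambda_j$ and $\sum_{j} \lambda_j \vec{v}_j$ equals the joint prior over $(E,B)$. For each posterior $\vec{v}_j$, apply Lemma \ref{lem:uniform-decompose-nEnB} with the specified $K$ to obtain a random $K$-uniform vector $\tilde{\vec{v}}_j \in \Delta_{n_E + n_B}(K)$ satisfying $\Ex[\tilde{\vec{v}}_j] = \vec{v}_j$ and $\Pr(|\tilde{\vec{v}}_j - \vec{v}_j| \geq \epsilon) \leq \epsilon$. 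Replacing each $\vec{v}_j$ with the mixture induced by $\tilde{\vec{v}}_j$ yields a new signaling scheme whose support lies entirely in $\Delta_{n_E+n_B}(K)$. The condition $\Ex[\tilde{\vec{v}}_j] = \vec{v}_j$ ensures that the new distribution of posteriors still averages to $\mu$, so Bayes-plausibility is preserved and the new scheme is feasible.

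Next I would bound the loss in objective. For each $j$, split the expectation over $\tilde{\vec{v}}_j$ into the good event $|\tilde{\vec{v}}_j - \vec{v}_j| \leq \epsilon$ and its complement. On the good event, Corollary \ref{cor:G-continuous-nEnB-epsdelta} gives $|u_B(\vec{v}_j) - u_B(\tilde{\vec{v}}_j)| \leq \delta$. On the bad event, which has probability at most $\epsilon$, I use the crude bound $|u_B(\vec{v}_j) - u_B(\tilde{\vec{v}}_j)| \leq 4L$ (since $G$, and hence $u_B$, is bounded in $[-2L, 2L]$). Combining,
\[
|u_B(\vec{v}_j) - \Ex_{\tilde{\vec{v}}_j} u_B(\tilde{\vec{v}}_j)| \leq \delta + 4L\epsilon.
\]
Taking the $\lambda_j$-weighted sum over $j$ shows that the expected $u_B$ of the new signaling scheme differs from the optimum by at most $4L\epsilon + \delta$, establishing the claim.

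The main obstacle, such as it is, is purely bookkeeping: verifying that the $K$-uniform decomposition preserves Bayes-plausibility (handled by the unbiasedness $\Ex[\tilde{\vec{v}}_j] = \vec{v}_j$) and that the two-regime bound on $|u_B(\vec{v}_j) - u_B(\tilde{\vec{v}}_j)|$ is tight enough. Everything else is a direct translation of the argument in Lemma \ref{lem:exist-coarse-signals} from the $\vec{w} \in \Delta(\A)$ setting to the $\vec{v} \in \Delta(\E \times \B)$ setting, and the continuity/grid lemmas have already been proved in the appropriate form.
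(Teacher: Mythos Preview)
Your proposal is correct and follows exactly the route the paper intends: the paper's proof of this lemma is literally ``completely analogous to the proof of Lemma \ref{lem:exist-coarse-signals},'' and you have reproduced that analogous argument faithfully, replacing $\vec{w}\in\Delta(\A)$ by $\vec{v}\in\Delta(\E\times\B)$ and invoking Lemma \ref{lem:uniform-decompose-nEnB} and Corollary \ref{cor:G-continuous-nEnB-epsdelta} in place of their counterparts. Your explicit check that unbiasedness preserves Bayes-plausibility is a small clarification the paper leaves implicit, but otherwise the arguments coincide.
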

\begin{proof}
The proof is completely analogous to the proof of Lemma \ref{lem:exist-coarse-signals}.
\qed \end{proof}

As a result of Lemma \ref{lem:exist-coarse-signals-nEnB}, the following LP computes the optimal signaling schemes with posteriors from $\Delta_{n_E+n_B}(K)$, thus outputs a $(4L \epsilon + \delta)$-optimal signaling scheme. Because $n_E$ and $n_B$ are constants, $1/\epsilon = \poly(1/\delta,L)$ (see (\ref{eq:eps-delta-nEnB})), so $\Delta_{n_E+n_B}(K)$ has $\poly(1/\delta,L)$ elements when $n_E$ and $n_B$ are constants. Therefore, solving this LP is a $\poly(1/\delta,|\A|,L)$-time algorithm. 

\begin{eqnarray*}
	\min_{ \{\pi(\vec{v})\}_{\vec{v} \in \Delta_{n_E+n_B}(K)}  }  && \sum_{ \vec{v} \in \Delta_{n_E+n_B}(K) } u_B(\vec{v}) \cdot \pi(\vec{v})  \\
	\text{s.t.} && \sum_{\vec{v} \in \Delta_{n_E+n_B}(K)}  \vec{v} \cdot \pi(\vec{v})  = \{ \{ \mu(e) \}_{e \in \E}, \{ \mu(b) \}_{b \in \B} \} \\
	&& \sum_{\vec{v} \in \Delta_{n_E+n_B}(K)} \pi(\vec{v})  = 1\\
	&& \pi(\vec{v}) \geq 0 \quad \forall \vec{v} \in \Delta_{n_E+n_B}(K)
\end{eqnarray*}

%Note that the running time is \textit{independent} of $n_A$, the number of Alice's signal! Because Alice's signal and report only matter in terms of the induced posterior over $\E \times \B$, the size of Alice's signal doesn't matter. In a sense, even if Alice has many signals, she can only induce a constant number of posteriors (up to a fixed distance $1/K$) so those signals are redundant up to a small error.
%
%Of course, if we want to compute Alice's optimal signaling mechanism in standard form, that is, explicitly give a signal space $\S$ and probabilities $\pi(s|a)$ for each reported signal $s$ and Alice's received signal $a$, then the runtime of that algorithm will depend polynomially on $|\A|$, but we can consider that part post-processing and it follows from standard application of Bayes' rule. \jerry{Should we do this calculation explicitly? I think not, since we didn't do this for the previous section either.}

%\section{There goes the appendix}
%
%The appendix is here, hooray!
%% \input{appendix-file-1}

\end{document}